\theoremstyle{plain}	\newtheorem{Lem}{Lemma}
\theoremstyle{plain}	
\theoremstyle{plain} 	\newtheorem{Cor}{Corollary}
\theoremstyle{plain} 	
\theoremstyle{plain} 	
\theoremstyle{plain} 	
\theoremstyle{plain} 	\newtheorem{Prop}{Proposition}
\theoremstyle{plain} 	
\theoremstyle{plain} 	
\theoremstyle{plain}	
\theoremstyle{plain}	\newtheorem{Def}{Definition} 
\theoremstyle{plain}
\newcommand{\eM}     {\mbox{$\epsilon$-machine}}
\newcommand{\eMs}    {\mbox{$\epsilon$-machines}}
\newcommand{\EM}     {\mbox{$\epsilon$-Machine}}
\newcommand{\EMs}    {\mbox{$\epsilon$-Machines}}
\newcommand{\eT}     {\mbox{$\epsilon$-transducer}}
\newcommand{\eTs}    {\mbox{$\epsilon$-transducers}}
\newcommand{\ET}     {\mbox{$\epsilon$-Transducer}}
\newcommand{\ETs}    {\mbox{$\epsilon$-Transducers}}
\newcommand{\CausalState}	{ \mathcal{S} }
\newcommand{\causalstate}	{ \sigma }
\newcommand{\CausalStateSet}	{ \boldsymbol{\CausalState} }
\newcommand{\AlternateState}	{ R }
\newcommand{\AlternateStateSet}	{ \mathcal{R} }
\newcommand{\PrescientState}	{ \widehat{\AlternateState} }
\newcommand{\PrescientStateSet}	{ \boldsymbol{\PrescientState}}
\newcommand{\Prob}      {\Pr} 
\newcommand{\Cmu}		{C_\mu}
\newcommand{\hmu}		{h_\mu}
\newcommand{\EE}		{{\bf E}}
\newcommand{\PC}		{\chi}
\newcommand{\InAlphabet}	{ \mathcal{X}}
\newcommand{\insymbol}		{ x}
\newcommand{\OutAlphabet}	{ \mathcal{Y}}
\newcommand{\outsymbol}		{ y}
\newcommand{\InputSimple}	{ X}
\newcommand{\OutputSimple}	{ Y}
\newcommand{\forward}{+}
\newcommand{\reverse}{-}
\newcommand{\forwardreverse}{\pm} 
\newcommand{\FutureCausalState}	{ {\CausalState}^{\forward} }
\newcommand{\PastCausalState}	{ {\CausalState}^{\reverse} }
\newcommand{\ForwardCausalState}	{ {\CausalState}^{\forward} }
\newcommand{\lastindex}[2]{
  \edef\tempa{0}
  \edef\tempb{#2}
  \ifx\tempa\tempb
    \edef\tempc{#1}
  \else
    \edef\tempa{0}
    \edef\tempb{#1}
    \ifx\tempa\tempb
      \edef\tempc{#2}
    \else
      \edef\tempc{#1+#2}
    \fi
  \fi
  \tempc
}
\newcommand{\MOrder}{\ensuremath{R}}
\newcommand{\CSjoint}[1][,]{
   \edef\tempa{:}
   \edef\tempb{#1}
   \ifx\tempa\tempb
      \ensuremath{\FutureCausalState\!#1\PastCausalState}
   \else
      \ensuremath{\FutureCausalState#1\PastCausalState}
   \fi
}
\newif\ifpm 
\edef\tempa{\forwardreverse}
\edef\tempb{\pm}
\renewcommand{\Pr}{\mathbb{P}}
\renewcommand{\H}{\operatorname{H}}
\newcommand{\I}{\operatorname{I}}
\newcommand{\IA} {\boldsymbol{\InAlphabet}}
\newcommand{\IS} {\InputSimple}
\newcommand{\IFuture} {\orharp{\IS}}
\newcommand{\IPast} {\olharp{\IS}}
\newcommand{\isym} {\insymbol}
\newcommand{\ifuture} {\orharp{\isym}}
\newcommand{\ipast} {\olharp{\isym}}
\newcommand{\IAll} {\olrharp{\IS}}
\newcommand{\iall} {\olrharp{\isym}}
\newcommand{\OA} { \boldsymbol{\OutAlphabet}}
\newcommand{\OS} {\OutputSimple}
\newcommand{\OFuture} {\orharp{\OS}}
\newcommand{\OPast} {\olharp{\OS}}
\newcommand{\OPastSet} {\olharp{\OA}}
\newcommand{\osym} {\outsymbol}
\newcommand{\opast} {\olharp{\osym}}
\newcommand{\OAll} {\olrharp{\OS}}
\newcommand{\oall} {\olrharp{\osym}}
\newcommand{\JA} { \boldsymbol{\InAlphabet} \times  \boldsymbol{\OutAlphabet}}
\newcommand{\JS} {(\InputSimple,\OutputSimple)}
\newcommand{\JPast} {\olharplow{\JS}}
\newcommand{\JPastSet} {\olharplow{(\IA,\OA)}}
\newcommand{\jsym} {(\insymbol,\outsymbol)}
\newcommand{\jpast} {\olharplow{\jsym}}
\newcommand{\JAll} {\olrharplow{\JS}}
\newcommand{\CSSet} {\CausalStateSet}
\newcommand{\CS} {S} 
\newcommand{\cs} {\causalstate}
\newcommand{\CSTSet} { \boldsymbol{\mathcal{T}} }
\newcommand{\ASSet} {\boldsymbol{\mathcal{R}}}
\newcommand{\AS} {R}
\newcommand{\PSSet} {\widehat{\boldsymbol{\AlternateStateSet}}}
\newcommand{\PS} {\widehat{R}}
\newcommand{\ps} {\widehat{\rho}}
\newcommand{\SAll} {\olrharp{\CS}}
\newcommand{\SPast} {\olharp{\CS}}
\newtheorem{definition}{Definition}
\newtheorem{theorem}{Theorem}
  \tikzstyle{vaucanson}=[
\begin{document}

\title{Computational Mechanics of Input-Output Processes:\\
Structured transformations and the $\epsilon$-transducer}

\author{Nix Barnett}
\email{nix@math.ucdavis.edu}
\homepage{http://math.ucdavis.edu/~nix/}
\affiliation{Complexity Sciences Center}
\affiliation{Mathematics Department}

\author{James P. Crutchfield}
\email{chaos@ucdavis.edu}
\homepage{http://csc.ucdavis.edu/~chaos/}
\affiliation{Complexity Sciences Center}
\affiliation{Mathematics Department}
\affiliation{Physics Department\\
University of California at Davis\\
One Shields Avenue, Davis, CA 95616}

\date{\today}

\bibliographystyle{unsrt}

\begin{abstract}
Computational mechanics quantifies structure in a stochastic process via its
causal states, leading to the process's minimal, optimal predictor---the \eM.
We extend computational mechanics to communication channels coupling two
processes, obtaining an analogous optimal model---the \eT---of the stochastic
mapping between them. Here, we lay the foundation of a structural analysis of
communication channels, treating joint processes and processes with input. The
result is a principled structural analysis of mechanisms that support
information flow between processes. It is the first in a series on the
structural information theory of memoryful channels, channel composition, and
allied conditional information measures.

\noindent
\keywords{sequential machine, communication channel, finite-state transducer,
statistical complexity, causal state, minimality, optimal
prediction, subshift endomorphism}

\end{abstract}

\pacs{
02.50.-r  
89.70.+c  
05.45.Tp  
02.50.Ey  
02.50.Ga  
}
\preprint{Santa Fe Institute Working Paper 14-12-046}
\preprint{arxiv.org:1412.2690 [cond-mat.stat-mech]}

\maketitle
\

\begin{spacing}{0.8}
\tableofcontents
\end{spacing}

\section{Introduction}

Arguably, the distinctive character of natural and engineered systems lies in
their organization. This is in contrast to differences, say, in how random they
are or in their temperature. Computational mechanics provides an analytical and
constructive way to determine a system's organization \cite{Crut12a},
supplementing the well developed statistical physics view of systems in terms
of disorder---via thermodynamic entropy and free energy. The contrast
begs a question, though, how does organization arise?

As a step to an answer, we extend computational mechanics from describing
individual systems to analyze organization in transformations between systems.
We build on its well established methods to describe a calculus for detecting
the emergence of structure. Indeed, natural systems are nothing, if not the
result of transformations of energy, information, or both. There is no lack of
examples: Filtering measurement time series to discover the temporal behavior
of hidden states \cite{Pack80}; Maxwell's Demon translating measurement
information to control actions that rectify thermal fluctuations into work,
locally defeating Thermodynamics' Second Law
\cite{Mand012a,Boyd14b,Boyd15a}; sensory
transduction in the retina that converts light intensity and spectra into
neural spike trains \cite{Riek99}; perception-action cycles in which an agent
decides its future behavior based on its interpretation of its environment's
state  \cite{Cuts11a,Gord11a}; and, finally, firms that transform raw materials into finished goods
\cite{Padg03a}.

We refer to our objects of study as \emph{structured transformations} to
emphasize the particular focus on basic questions of organization. How complex
is a transformation? What and how many resources are required to implement it?
What does it add to an input in producing an output? Randomness, structure, or
both? What is thrown away? How is its own organization reflected in that of its
output?

The framework addresses these questions, both quantitatively and from first
principles. It is the first in a series. Foremost, it's burden is to lay the
groundwork necessary to answer these questions. Sequels introduce information
measures to classify the kinds of information processing in joint input-output
processes and in structured transformations. To delineate the underlying
mechanisms that produce organization, they analyze a range of examples and
describe a number of interesting, even counterintuitive, properties of
structured transformations.

The questions posed are basic, so there is a wide range of applications and
of historical precedents. Due to this diversity and to avoid distracting from
the development, we defer reviewing them and related work until later, once the
context has been set and the focus, benefits, and limitations of our approach
are clear.

The following analyzes communication channels and channel composition in terms
of intrinsic computation \cite{Crut88a,Crut12a}. As such, it and the entire
series, for that matter, assume familiarity with stochastic
processes at the level of Ref. \cite{Gray09a}, information theory at the
level of Refs.  \cite{Cove06a,Yeun08a}, and computational mechanics at the
level of Refs. \cite{Crut08b,Crut10a}. These serve as the default
sources for statements in our review.

We first present a brief overview and relevant notation for how computational
mechanics describes processes. We extend this to describe controlled
processes---processes with input. Several examples that illustrate the basic
kinds of input-output process are then given, by way of outlining an
organizational classification scheme. At that point, we describe the global
\eM\ for joint input-output processes. Using this we introduce the \eT,
defining its structural complexity and establishing its optimalities. We close
with a thorough analysis of the example input-output processes and compare and
contrast \eTs\ with prior work.

\section{Processes and their Presentations}

\subsection{Stationary, Ergodic Processes}

The temporal \emph{stochastic process} we consider is a one-dimensional chain
$\ldots \OS_{-2} \OS_{-1} \OS_0 \OS_1 \OS_2 \ldots$ of discrete random variables
$\{ \OS_t \}_{t \in \mathbb{Z}}$ that takes values $\{ \osym_t \}_{t \in \mathbb{Z}}$ over a finite or countable alphabet $\OA$.
A finite block $\OS_i \OS_{i+1} \ldots \OS_{j-1}$ of variables with
$t \in [i,j)$ is denoted $\OS_{i:j}$. The left index is always inclusive,
the right always exclusive. Let $\OAll$ denote the bi-infinite chain
$\OS_{-\infty:\infty}$, and let $\olrharp{\OA}$ denote the set
of all bi-infinite sequences $\oall$ with alphabet $\OA$. $\OPast_{t} = \ldots \OS_{t-2} \OS_{t-1}$ is the
\emph{past} leading up to time $t$, not including $t$, and
$\OFuture_{t} = \OS_t \OS_{t+1} \ldots$ is the \emph{future} leading from it,
including $t$. 

We can also use the time index notation above to specify the time origin of a
process or bi-infinite sequence. This is useful when we are comparing two such
processes or sequences. For example, if we wish to say that $\OAll$ is $\IAll$
delayed by three time steps, this can be written as $\OAll_3 = \IAll_0$. To
indicate the time origin in \emph{specific} realized symbol values in a chain,
we place a period before the symbol occurring at $t = 0$: e.g., $\oall = \ldots
acbb.caba \ldots$, where $\osym_0 = c$. Finally, if the word $abcd$ follows a
sequence of random variables $\OS_{i:j}$ we denote this by $\OS_{i:j+4} =
\OS_{i:j}abcd$, for example. A word occurring before a sequence of random
variables is denoted by a similar concatenation rule; e.g., $\OS_{i-4:j} =
abcd\OS_{i:j}$.

A stochastic process is defined by its \emph{word distributions}:
\begin{align}
\Prob(\OS_{t:t+L}) \equiv \{ \Prob(\OS_{t:t+L} = \osym_{t:t+L}) \}_{\osym_{t:t+L} \in \OA^L},
\label{eq:WordDistributions}
\end{align}
for all $L \in \mathbb{Z}^+$ and $t \in \mathbb{Z}$. 

We will often use an equivalent definition of a stochastic process as a random
variable defined over the set of bi-infinite sequences $\olrharp{\OA}$. In
this case, a stochastic process is defined by an indexed set of probabilities of the form:
\begin{align}
\Prob(\OAll) \equiv \{ \Prob(\OAll \in \sigma ) 
 \}_{ \sigma \subseteq \scriptsize \olrharp{\OA}} ~,
\label{eq:CylinderSetDistributions}
\end{align}
where $\sigma$ is a measurable set of bi-infinite sequences. We can obtain a
process' word probabilities by selecting appropriate measurable
subsets---\emph{cylinder sets}---that correspond to holding the values of a contiguous subset of random variables fixed.

In the following, we
consider only \emph{stationary processes}---those invariant under time
translation:
\begin{align*}
\Prob(\OS_{t:t+L}) &= \Prob(\OS_{0:L}) ~\text{and} \\
\Prob(\OAll_t) &= \Prob(\OAll_0)
  ~,
\end{align*}
for all $t \in \mathbb{Z}$ and $L \in \mathbb{Z}^+$.
This property ensures that a process's behavior has no
\emph{explicit} dependence on time origin.

We will also primarily limit the discussion to \emph{ergodic} stationary
processes---processes where any realization $\oall$ gives good empirical
estimates $\widehat{\Prob}(\OS_{0:L})$ of the process's true word probabilities
$\Prob(\OS_{0:L})$ \cite{Gray74a}. That is, for any finite realization
$\osym_{0:M}$, the empirical estimate $\widehat{\Prob}(w)$ of a word $w=w_0 w_1
\ldots w_{L-1}$ of length $L$, converges almost surely to the true process
probability $\Prob(\OS_{0:L}=w)$ as $M \to \infty$, where:
\begin{align*}
\widehat{\Prob}(w) &
  \equiv \sum_{t=0}^{M-L} \frac{\mathbf{I}_w (\osym_{t:t+L})}{M-L+1}
  ~,
\end{align*}
and the indicator function $\mathbf{I}_w (\osym_{t:t+L})$ equals $1$ when
$\osym_{t:t+L}=w$, and $0$ otherwise. This property ensures, among other
things, that any particular realization of the process reflects the behavior of
the process in general---a useful property when we do not have freedom to
re-initialize the system at will.

\subsection{Examples}

To illustrate key ideas in the development, we use several example processes,
all with the binary alphabet $\OA = \{ 0, 1 \}$. They are used repeatedly in
the following and in the sequels.

\subsubsection{Biased Coin Process}

The \emph{Biased Coin Process} is an independent, identically distributed (IID)
process, where word probabilities factor into a product of single-variable
probabilities:
\begin{align*}
\Prob(\OS_{0:L}) = \Prob(\OS_{0}) \Prob(\OS_{1}) \cdots \Prob(\OS_{L-1}) ~,
\end{align*}
where $\Prob(\OS_{t}) = \Prob(\OS_{0})$ for all $t$.
If $n$ is the number of $0$s in $\osym_{0:L}$, then
$\Prob(\osym_{0:L}) = p^n (1-p)^{L-n}$,
where $p = \Prob(\OS_{0} = 0)$.

\subsubsection{Period-$2$ Process}

The \emph{Period-$2$ Process} endlessly repeats the word $01$, starting with
either a $0$ or a $1$ with equal probability. It is specified by the word
distributions:
\begin{align*}
\Prob(\OS_{i:j} & = (\ldots 1010.1010 \ldots)_{i:j} ) = \tfrac{1}{2} ~\text{and}\\
\Prob(\OS_{i:j} & = (\ldots 0101.0101 \ldots)_{i:j} ) = \tfrac{1}{2}
  ~,
\end{align*}
where $i,j \in \mathbb{Z}, i < j$.

\subsubsection{Golden Mean Process}

The \emph{Golden Mean Process} generates all binary sequences, except those
with consecutive $0$s. After a $1$ is generated, the next $0$ or $1$ appears
with equal likelihood. Its word distributions are determined by a Markov
chain with states $\OS_0=0$ and $\OS_0=1$ having probabilities
$\Prob(\OS_0 = 0) = 1/3$ and $\Prob(\OS_0 = 1) = 2/3$, respectively, and
transition probabilities:
\begin{equation}
\begin{aligned}
\Prob(\OS_1=0|\OS_0=0) &= 0 ~, \\
\Prob(\OS_1=1|\OS_0=0) &= 1 ~, \\
\Prob(\OS_1=0|\OS_0=1) &= \tfrac{1}{2} ~, \text{~and}\\
\Prob(\OS_1=1|\OS_0=1) &= \tfrac{1}{2} ~.
\end{aligned}
\label{eq:GMPCondlProbs}
\end{equation}

\subsubsection{Even Process}

The \emph{Even Process} generates all binary sequences, except that a $1$
always appears in an even-length block of $1$s bordered by $0$s. After an even
number of $1$s are generated, the next $0$ or $1$ appears with equal
likelihood. Notably, the Even Process cannot be represented by a Markov chain
of any finite order or, equivalently, by any finite list of conditional
probabilities over words. As we will see, though, it can be represented
concisely by a finite-state hidden Markov model. To see this, we must first
introduce alternative models---here called \emph{presentations}---for a given
process.

\subsection{Optimal Presentations}

We can completely, and rather prosaically, specify a stationary process by
listing its set of word probabilities, as in Eq. (\ref{eq:WordDistributions}),
or conditional probabilities, as in Eq. (\ref{eq:GMPCondlProbs}). As with the
Even Process, however, generally these sets are infinite and so one prefers to
use finite or, at least, more compact descriptions. Fortunately, there is a
canonical \emph{presentation} for any stationary process---the \eM\ of computational
mechanics \cite{Crut88a,Crut08b}---that is its unique, optimal, unifilar
generator of minimal size; which we now define.

Given a process's word distribution, its \eM\ is constructed by regarding any
two infinite pasts $\opast$ and $\opast^\prime$ as equivalent when they lead to
the same distribution over infinite futures---the same \emph{future morphs}, of
the form $\Prob(\OFuture|\opast)$. This grouping is given by the \emph{causal
equivalence relation} $\sim_\epsilon$:
\begin{align}
\opast \sim_\epsilon \opast^\prime
  \iff \Prob(\OFuture | \OPast = \opast ) = \Prob(\OFuture | \OPast = \opast^\prime )
  ~.
\label{eq:PredEqReln}
\end{align}
The equivalence classes of $\sim_\epsilon$ partition the set $\OPastSet$ of all
allowed pasts. The classes are the process's \emph{causal states}. The indexed
set of causal states is denoted by $\CausalStateSet$ and has elements $\cs_i$.
Note that $i$ is not a time index, but an element of an index set with the same
cardinality as $\CSSet$. The associated random variable over alphabet
$\CausalStateSet$ is denoted by $\CS$. The \emph{$\epsilon$-map} is a function
$\epsilon: \OPastSet \to \CausalStateSet$ that takes a given past to its
corresponding causal state or, equivalently, to the set of pasts to which it is
causally equivalent:
\begin{align*}
\epsilon(\opast) & = \sigma_i \\
  & = \{ \opast^\prime: \opast \sim_\epsilon \opast^\prime \} ~.
\end{align*}

\begin{figure}
  \centering
  \includegraphics[scale=.95]{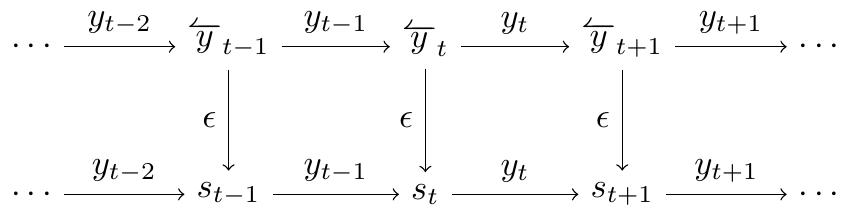}
\caption{Process lattice: The dynamic inherited by the causal states
  $s_t = \cs_i \in \CausalStateSet$ from a process's pasts via the
  $\epsilon$-map: $s_t = \epsilon (\ldots \osym_{t-2} \osym_{t-1} )
  \xrightarrow[]{\osym_t} s_{t+1} = \epsilon (\ldots \osym_{t-2} \osym_{t-1} \osym_t ) $.
  }
\label{fig:processlattice}
\end{figure} 

The $\epsilon$-map induces a dynamic over pasts: Appending a new symbol
$\osym_t$ to past $\opast_{t}$ produces a new past $\opast_{t+1} = \opast_{t}
\osym_t$. This, in turn, defines a stochastic process---the \emph{causal-state
process}---with random variable chain $\SAll = \CS_{-\infty:\infty} = \ldots
\CS_{-1} \CS_0 \CS_1 \ldots$, where at time $t$ each $\CS_t$ takes on some
value $s_t = \cs_i \in \CausalStateSet$. The relationship between a process's
pasts and its causal states is summarized in Fig.~\ref{fig:processlattice}. The
map from the observed chain $\OAll$ to the internal state chain $\SAll$ is the
\emph{causal state filter}. (We return to this mapping later on.) The dynamic
over causal states is specified by an indexed set $\CSTSet$ of
\emph{symbol-labeled transition matrices}:
\begin{align*}
\CSTSet \equiv
  \{ T^{(\osym)}\}_{\osym \in \OA}
  ~,
\end{align*}
where $T^{(\osym)}$ has elements:
\begin{align*}
T^{(\osym)}_{ij} = \Prob(\CS_1 = \cs_j, \OS_0 = \osym | \CS_0 = \cs_i). 
\end{align*}

The $\epsilon$-map also induces a distribution $\pi$ over causal states, with elements:
\begin{align*}
\pi (i) & = \Prob (\CS_0 =\cs_i) \\
  & = \Prob \big( \epsilon \big( \OPast \big) = \cs_i \big)
  ~.
\end{align*}

Since the process is stationary and the $\epsilon$-map is time independent,
$\pi$ is also stationary. We therefore refer to $\pi$ as the process's \emph{stationary distribution}.
Note that for ergodic processes, the stationary distribution can be calculated
directly from the transition matrices alone \cite{Bill61a}.

The tuple $\left( \OA, \CSSet, \CSTSet \right)$ consisting of the process'
alphabet, causal state set, and transition matrix set is the process'
\emph{\eM}.

The \eM\ is a process's unique, maximally predictive, minimal-size unifilar
presentation \cite{Crut88a,Crut98d,Shal98a}. In other words, the causal states
are as predictive as any rival partition $\ASSet$ of the pasts $\OPastSet$. In
particular, any given causal state $\cs_i$ is as predictive as any of its pasts
$\opast \in \epsilon^{-1}(\cs_i)$. Measuring predictive ability via the mutual information between states and future observations, this translates to the statement that:
\begin{align*}
\I[\CS_0 ; \OFuture_0] & = \I[\OPast_0 ; \OFuture_0] \\
                   & \geq \I[\AS_0 ; \OFuture_0] ~,
\end{align*}
where $\AS_0 \in \ASSet$.
Moreover, among all equally predictive (\emph{prescient rival}) partitions
$\PrescientStateSet$ of the past, the causal states minimize the state Shannon
entropy: $\H[\CS] = \H[\pi]\leq \H[\PrescientState]$. Due to the \eM's minimality, the
\emph{statistical complexity} $\Cmu = \H[\CS] = \H[\pi]$ measures the amount of historical
information a process stores.

A process's \eM\ presentation has several additional properties that prove
useful. First, the causal states form a Markov chain. This means that the
\eM\ is a type of \emph{hidden Markov model}. Second, the causal states are
\emph{unifilar}:
\begin{align*}
\H[\CS_{t+1}| \OS_t, \CS_t] & = 0 ~.
\end{align*}
That is, the current state and symbol uniquely determine the next state.
This is necessary for an observer to maintain its knowledge of a process's
current causal state while scanning a sequence of symbols. Third, unlike
general (that is, nonunifilar) hidden Markov models, one can calculate a
process's key informational properties directly from its \eM. For example,
a process's \emph{entropy rate}
$\hmu$ can be written in terms of the causal states:
\begin{align*}
\hmu = \H[\OS_0 | \CS_0] ~.
\end{align*}
And, using the methods of Refs. \cite{Crut08a,Crut08b}, a process's past-future
mutual information---the \emph{excess entropy} $\EE$---is given by its forward
$\FutureCausalState$ and reverse $\PastCausalState$ causal states:
\begin{align}
\EE & \equiv \I[\OPast;\OFuture] \\
  & = \I[\PastCausalState;\ForwardCausalState]
  \nonumber
  ~.
\label{eq:EE}
\end{align}
Generally, the excess entropy is only a lower bound on the information
that must be stored in order to predict a process: $\EE \leq \Cmu$. This difference is captured by the process's \emph{crypticity} $\PC =
\Cmu - \EE$.

Since they are conditioned on semi-infinite pasts, the causal states defined
above correspond to \emph{recurrent} states in a process's complete
\eM\ presentation. They capture a process's time-asymptotic behavior. An
\eM\ also has \emph{transient} causal states that arise when conditioning on
finite-length pasts, as well as a unique start state, which can be either transient or recurrent. When the underlying process is ergodic, they can be derived from the recurrent causal
states using the mixed-state method of Ref. \cite{Crut08b}. In general, they can be obtained from a modified causal equivalence relation, extended to include finite pasts. We omit them, unless otherwise
noted, from our development.

The preceding introduced what is known as the \emph{history} specification of
an \eM: From a stationary, ergodic process, one derives its \eM\ by applying equivalence relation
Eq. (\ref{eq:PredEqReln}). There is also the complementary \emph{generator}
specification: As a generator, rather than using equivalence classes over
a process's histories, the \eM\ is defined as a strongly connected hidden Markov model whose
transitions are unifilar and whose states are probabilistically distinct. Such an \eM\ generates
a unique, ergodic, stationary process, and is the same \eM\ that we would obtain by applying the
causal equivalence relation to said generated process \cite{Trav11a}.  The
following uses both history and generator \eMs; which will be clear in context.

\subsection{Example Process \EMs}

The cardinality of a process's causal state set $\CausalStateSet$ need not be
finite or even countable; see, e.g., Figs. 7, 8, 10, and 17 in Ref.
\cite{Crut92c}. For simplicity in the following, though, we restrict our study
to processes with a finite or countably infinite number of causal states. This
allows us to depict a process graphically, showing its \eM\ as an
\emph{edge-labeled directed graph}. Nodes in the graph are causal states and
edges, transitions between them. A transition from state $\cs_i$ to state
$\cs_j$ while emitting symbol $\osym$ is represented as an edge connecting the
corresponding nodes that is labeled $\mathbf{y} : p$. (Anticipating our needs later on, this differs slightly from prior notation.) Here, $p =
T_{ij}^{(y)}$ is the state transition probability and $\osym$ is the symbol
emitted on the transition. Figure \ref{fig:machines} displays \eM\
state-transition diagrams for the example processes.

\begin{figure}
\centering
\subfigure[Biased Coin Process.]
  	{ \includegraphics{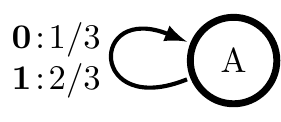} }
\qquad
  \subfigure[Period-2 Process.]
  { \includegraphics{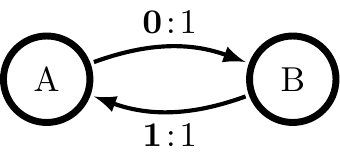} }
\qquad
  \subfigure[Golden Mean Process.]
  { \includegraphics{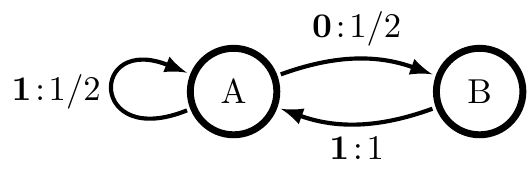} }
\qquad
    \subfigure[Even Process.]
	{ \includegraphics{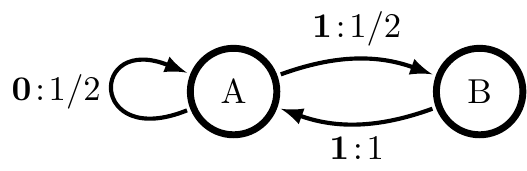} }
\caption{\EMs\ for the example processes. Transitions labeled
  $\mathbf{\osym} : p$, where $p = T_{ij}^{(y)}$ is the state transition
  probability and $\osym$ is the symbol emitted on the transition.
  }
\label{fig:machines}
\end{figure} 

Since the Biased Coin Process's current output is independent of the
past, all pasts are causally equivalent. This leads to a single causal state
$A$, occupied with probability $1$. (See Fig. \ref{fig:machines}(a).)
Therefore, the Biased Coin Process has a statistical complexity of $\Cmu = 0$
bits. It's excess entropy $\EE$ also vanishes. This example illustrates the general
property that IID processes lack causal structure. They can be quite random;
the example here has an entropy rate of $\hmu = \H(2/3) \approx 0.918$ bits per
step, where $\H(p)$ is the binary entropy function.

In contrast, the Period-$2$ Process has two causal states, call them $A$ and
$B$, that correspond to pasts that end in either a $1$ or a $0$, respectively.
(See Fig. \ref{fig:machines}(b).) Since the states are occupied with equal
probability, the Period-$2$ Process has $\Cmu = 1$ bit of stored information.
In this case, $\EE = \Cmu$. It is perfectly predictable, with $\hmu = 0$ bits
per step.

The two causal states of the Golden Mean Process also correspond to pasts
ending with either a $0$ or $1$, but for it the state transition structure
ensures that no consecutive $0$s are generated. (See Fig.
\ref{fig:machines}(c).) Causal state $A$ is occupied with $\Prob(A) =
2/3$ and state $B$ with $\Prob(B) = 1/3$, giving $\Cmu =
\H(2/3) \approx 0.918$ bits. Note that the excess entropy is substantially
less---$\EE \approx 0.2516$ bits---indicating that the process is
\emph{cryptic}. We therefore must store additional state information above and
beyond $\EE$ in order to predict the process \cite{Crut08a}. Its entropy rate
is $\hmu = \H(2/3) \approx 0.918$ bits per step.

As mentioned already, the Even Process cannot be represented by a finite Markov
chain. It can be represented by a finite hidden Markov model, however. In
particular, its \eM\ provides the most compact presentation---a two-state
hidden Markov model. Causal state $A$ corresponds to pasts that end with an
even block of $1$s, and state $B$ corresponds to pasts that end with an odd
block of $1$s. (See Fig. \ref{fig:machines}(d).) Since the probability
distribution over states is the same as that of the Golden Mean Process, the
Even process has $\Cmu \approx 0.918$ bits of statistical complexity and $\hmu
\approx 0.918$ bits per step. In contrast, the excess entropy and statistical
complexity are equal: $\EE = \Cmu$. The Even Process is not cryptic.

\section{Input-Output Processes and Channels}

Up to this point, we focused on a stochastic process and a particular canonical
presentation of a mechanism---the \eM---that can generate it. We now turn to
our main topic, generalizing the \eM\ presentation of a given process to a
presentation of a controlled process---that is, to input-output (I/O) processes.
I/O processes are related to probabilistic extensions of Moore's
sequential machines \cite{Paz71a,Hopc79}, probabilistic extensions of codes
from symbolic dynamics \cite{Lind95a}, and, perhaps more naturally, to
Shannon's communication channels \cite{Shan48a}. And so, we refer to them
simply as \emph{channels}. There are important differences from how channels
are developed in standard treatments \cite{Cove06a}, though. The principal
difference is that we consider channels with memory, while the latter in its
elementary treatment, at least, typically considers memoryless channels or
channels with restricted forms of memory \footnote{Though see Ref. \cite[see
Ch. 7]{Ash65a} and for early efforts Refs. \cite{Blac58a} and \cite{Blac61a}.}.
In addition, with an eye to applications, the framework here is adapted to
\emph{reconstruct} channels from observations of an input-output process.  (A
topic to which we return at the end.) Finally, the development places a unique
emphasis on detecting and analyzing structure inherent in I/O processes.
Our development parallels that for \eMs; see Ref. \cite[and citations
therein]{Crut12a}.

To begin, we define I/O processes. To make these concrete, we provide
several example channels, leveraging the example processes and their
\eMs\ already described. The examples naturally lead to the desired
extension---the \eT. The development then turns to the main properties of \eTs.

Loosely speaking, a channel defines a coupling between stochastic processes. It
can be memoryful, probabilistic, and even anticipate the future. In other
words, the channel's current output symbol may depend probabilistically upon
its past, current, and future input and output symbols. As such, we will be led
to generalize the memoryless and anticipation-free communication channels
primarily studied in elementary information theory \cite{Cove06a}. 

\begin{definition}
A \emph{channel} $\OAll\big|\IAll$ with \emph{input alphabet} $\IA$ and
\emph{output alphabet} $\OA$  is a collection of stochastic processes over
alphabet $\OA$, where each such process $\OAll|\iall$ corresponds to a
bi-infinite \emph{input} sequence in $\olrharp{\IA}$:
\begin{align}
\OAll | \IAll \equiv
  \{ \OAll | \iall \}_{\scriptsize \iall \in \scriptsize\olrharp{\IA}}
\end{align}
That is, each fixed realization $\iall = \ldots \isym_{-1} \isym_0
\isym_1 \ldots$ over input alphabet $\IA$ is mapped to a stochastic process $\OAll | \iall = \OAll|\ldots \isym_{-1} \isym_0 \isym_1 \ldots$ over
alphabet $\OA$.
\end{definition}

If we are given a process $\IAll$---an \emph{input process}---then a channel
maps this distribution over sequences to a joint process $\JAll$, which can
be marginalized to obtain a process $\OAll$---the \emph{output process}. We can
characterize a channel by an indexed set $\mathcal{P}$ of \emph{conditional
word probabilities}: 
\begin{align*}
\Prob(\OS_{t:t+L} |& \iall) \\
  \equiv
  & \{ \Prob\big(\OS_{t:t+L}=\osym_{t:t+L} \big| \IAll = \iall\big) \}_{\osym_{t:t+L} \in \OA^L , \scriptsize\iall \in \scriptsize\olrharp{\IA}}
  ~.
\end{align*}
Equivalently, we can represent a channel as a distribution over bi-infinite
sequences (output) conditioned on a particular bi-infinite sequence
(input). In other words, a channel can be characterized by an indexed set:
\begin{align*}
\Prob\big(\OAll \big| \iall\big)
\equiv \{ \Prob\big(\OAll \in \sigma \big| \IAll = \iall\big)
\}_{\sigma \subseteq \scriptsize\olrharp{\OA}, \scriptsize\iall \in \scriptsize\olrharp{\IA}}
  ~.
\end{align*}
Given an input process' distribution $\Prob(\IAll)$ and a channel's
distribution $\Prob\big(\OAll | \iall\big)$ for all inputs $\iall$, we obtain the output process' distribution as follows:
\begin{align*}
\Prob\big( \OAll\big) &=
\int \Prob\big(\OAll,  \iall\big) \,\mathrm{d} \iall \\
& = \int \Prob\big(\OAll | \iall\big) \Prob(\iall) \,\mathrm{d} \iall,
\end{align*}
where the first integrand shows the appearance of the intermediate joint process $\JAll$.

Let's say a few words about definitional choices made up to this point. As
defined, the channels we consider are \emph{total} (defined for every possible
input sequence). One could extend the
definition to allow for \emph{partial} channels (defined only for a subset of
possible sequences), but we do not consider
such channels in what follows. This is the primary reason for defining a
channel's domain in terms of bi-infinite sequences rather than say,
collections of finite words. Such channels would not necessarily be total.
Also, requiring that channels be defined for \emph{every finite} input word is
restrictive, as even the simplest channels may not have well defined behavior
for say, a single symbol input word. We could instead define channels over a
\emph{subset} of all finite input words and explicitly add in the restriction
that the channel be total, but this is still more restrictive than the
$\olrharp{\IA}$ definition above. Consider, for example, the channel that
outputs all $1$s if its input sequence contains at least one $1$ and outputs
all $0$s otherwise. Such a channel is undefined for any finite input word that
consists of all $0$s, but is well defined for any bi-infinite binary sequence. It is also true that any total channel defined over finite input
words can be trivially defined over bi-infinite sequences by appending an
arbitrary infinite past and future to each finite word (without changing the
channel's output behavior).


Stationarity is as useful a property for channels as it is for processes.

\begin{definition}
A channel is \emph{stationary} if and only if its probability distributions are invariant under time translation:
\begin{align*}
\Prob(\OS_{t:t+L}  | \iall_{t} )
  & = \Prob (\OS_{0:L}  | \iall_{0} ) ~\text{and} \\
  \Prob(\OAll_t  | \iall_{t} ) &= \Prob (\OAll_0  | \iall_{0} ) ~,
\end{align*}
for all $t \in \mathbb{Z}$, $L \in \mathbb{Z}^+$, and every input sequence
$\iall$.
\end{definition}

An immediate consequence is that stationary channels map stationary input
processes to stationary output processes:

\begin{Prop}[Stationarity Preservation] When a stationary channel is applied to a stationary input process, the resulting output process is also stationary.

\begin{proof}
One calculates directly:
\begin{align*}
\Prob( \OAll_t) & =
  \int \Prob(\OAll_t | \iall_{t}) \Prob(\iall_{t}) \mathrm{d} \iall_{t} \\
  & = \int \Prob(\OAll_0 | \iall_{0}) \Prob(\iall_{0}) \mathrm{d} \iall_{0} \\
  & = \Prob(\OAll_0) ~,
\end{align*}
where the second equality follows from stationarity of both input process and
channel, once we note that
$\ldots \mathrm{d} \isym_{t-1} \mathrm{d} \isym_{t} \mathrm{d} \isym_{t+1} \ldots = \ldots \mathrm{d} \isym_{-1} \mathrm{d}
\isym_0 \mathrm{d} \isym_{1} \ldots$, since
under shifted indexing the relationships between $x_i$ and infinitesimals $\mathrm{d} x_i$ are preserved and the latter themselves do not change.
\end{proof}
\end{Prop}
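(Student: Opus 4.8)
The plan is to show that the output process inherits time-translation invariance directly from that of the input process and of the channel, by computing the output distribution at an arbitrary time origin $t$ and reducing it to the distribution at origin $0$. The key tool is the marginalization formula
\begin{align*}
\Prob(\OAll_t) = \int \Prob(\OAll_t \mid \iall_t)\,\Prob(\iall_t)\,\mathrm{d}\iall_t,
\end{align*}
which expresses the output process as an integral over input sequences of the channel's conditional distribution weighted by the input process's distribution.

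First I would invoke stationarity of the channel to replace the conditional factor $\Prob(\OAll_t \mid \iall_t)$ with $\Prob(\OAll_0 \mid \iall_0)$, and stationarity of the input process to replace $\Prob(\iall_t)$ with $\Prob(\iall_0)$. These are exactly the two hypotheses available: the channel definition of stationarity gives the first substitution, and stationarity of the input process gives the second. The remaining task is then to recognize the resulting integral as $\Prob(\OAll_0)$, again via the marginalization formula but now at origin $0$.

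The one point requiring care is the change of integration variable: the integral at origin $t$ is taken with respect to $\mathrm{d}\iall_t$, whereas the integral at origin $0$ uses $\mathrm{d}\iall_0$. I would note that re-indexing the bi-infinite sequence by a fixed time shift is a measure-preserving bijection on $\olrharp{\IA}$, since the product structure of the infinitesimal volume element $\ldots\,\mathrm{d}\isym_{t-1}\,\mathrm{d}\isym_t\,\mathrm{d}\isym_{t+1}\ldots$ is identical to $\ldots\,\mathrm{d}\isym_{-1}\,\mathrm{d}\isym_0\,\mathrm{d}\isym_1\ldots$ after shifting the labels---the individual factors are unchanged and merely relabeled. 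This justifies equating the two integrals, completing the chain $\Prob(\OAll_t) = \Prob(\OAll_0)$.

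The main (and only real) obstacle is this measure-theoretic bookkeeping on the bi-infinite input space: one must argue that the shift map on $\olrharp{\IA}$ preserves the reference measure so that the substitution of variables introduces no Jacobian factor. Everything else is a mechanical substitution of the two stationarity hypotheses into the marginalization identity. Since the excerpt treats the input space as carrying a shift-invariant product structure, I would simply record the shift-invariance of the volume element as the justifying remark rather than developing a full measure-theoretic argument, which matches the level of rigor the surrounding development adopts.
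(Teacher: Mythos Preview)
Your proposal is correct and follows essentially the same approach as the paper: marginalize the output distribution over input sequences, apply the two stationarity hypotheses to the integrand, and justify the change of integration variable by noting that the shift on $\olrharp{\IA}$ merely relabels the factors in the product volume element. The paper's proof is exactly this three-line computation, with the same remark about the shift-invariance of the infinitesimals.
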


We also primarily restrict our discussion to ergodic channels \cite{Gray90a}.

\begin{definition}
An \emph{ergodic channel} is a channel that maps any ergodic input process $\IAll$ to an ergodic joint process $\JAll$.
\end{definition}

Another important channel property is that of causality.

\begin{definition}
A \emph{causal channel} is anticipation-free:
\begin{align*}
\Prob(\OS_{t:t+L}  | \IAll ) = \Prob (\OS_{t:t+L}  | \IPast_{t+L} ) ~.
\end{align*}
That is, the channel has well defined behavior on semi-infinite input pasts
and is completely characterized by that behavior.
\end{definition}

Channel causality is a reasonable assumption when a system has no access to
future inputs. However, as a note of caution in applying the following to
analyze, say, large-scale systems with many components, the choice of
observables may lead to input-output processes that violate causality. For
example, treating spatial configurations of one-dimensional spin systems or
cellular automata as if they were time series---a somewhat common strategy---violates causality. In the following, though, we assume channel
stationarity and causality, unless stated otherwise. 

It is worth noting that causality is often not a severe restriction.
Specifically, the following results extend to channels with finite
anticipation---channels whose current output depends upon $N$ future inputs.
When both the input process and channel are stationary, one delays the
appearance of the channel output by $N$ time indices. This does not change the
output process, but converts \emph{finite anticipation} to finite channel
memory and delayed output. In this way, it is possible to apply the analysis to
follow to channels with anticipation directly, but the optimality theorems
established must be modified slightly.

\begin{figure*}
  \centering
  \subfigure[~Identity Channel]{
 \includegraphics[scale=.33]{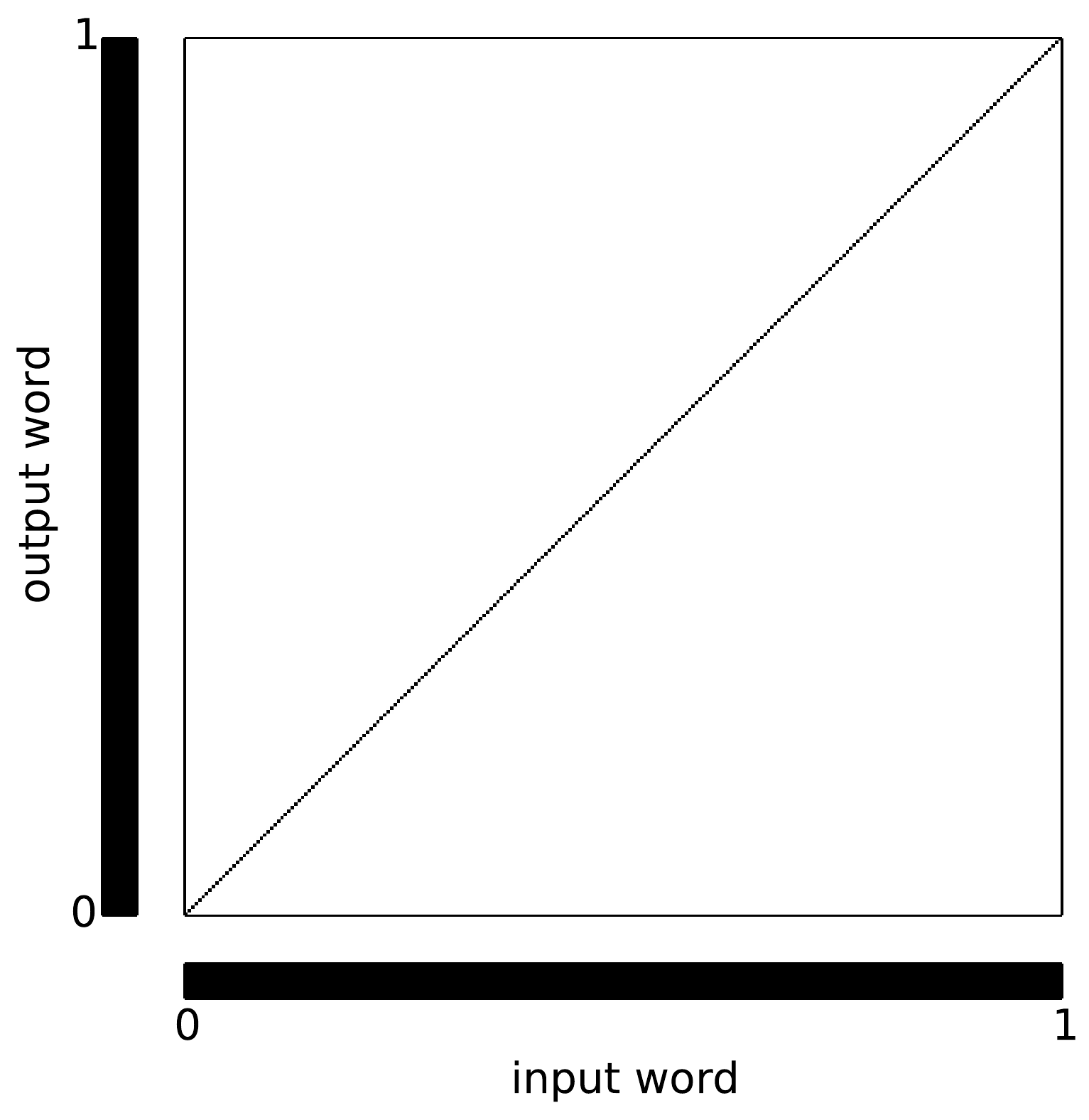}
\label{fig:identityword}
  } \qquad
  \subfigure[~All-is-Fair Channel]{
 \includegraphics[scale=.33]{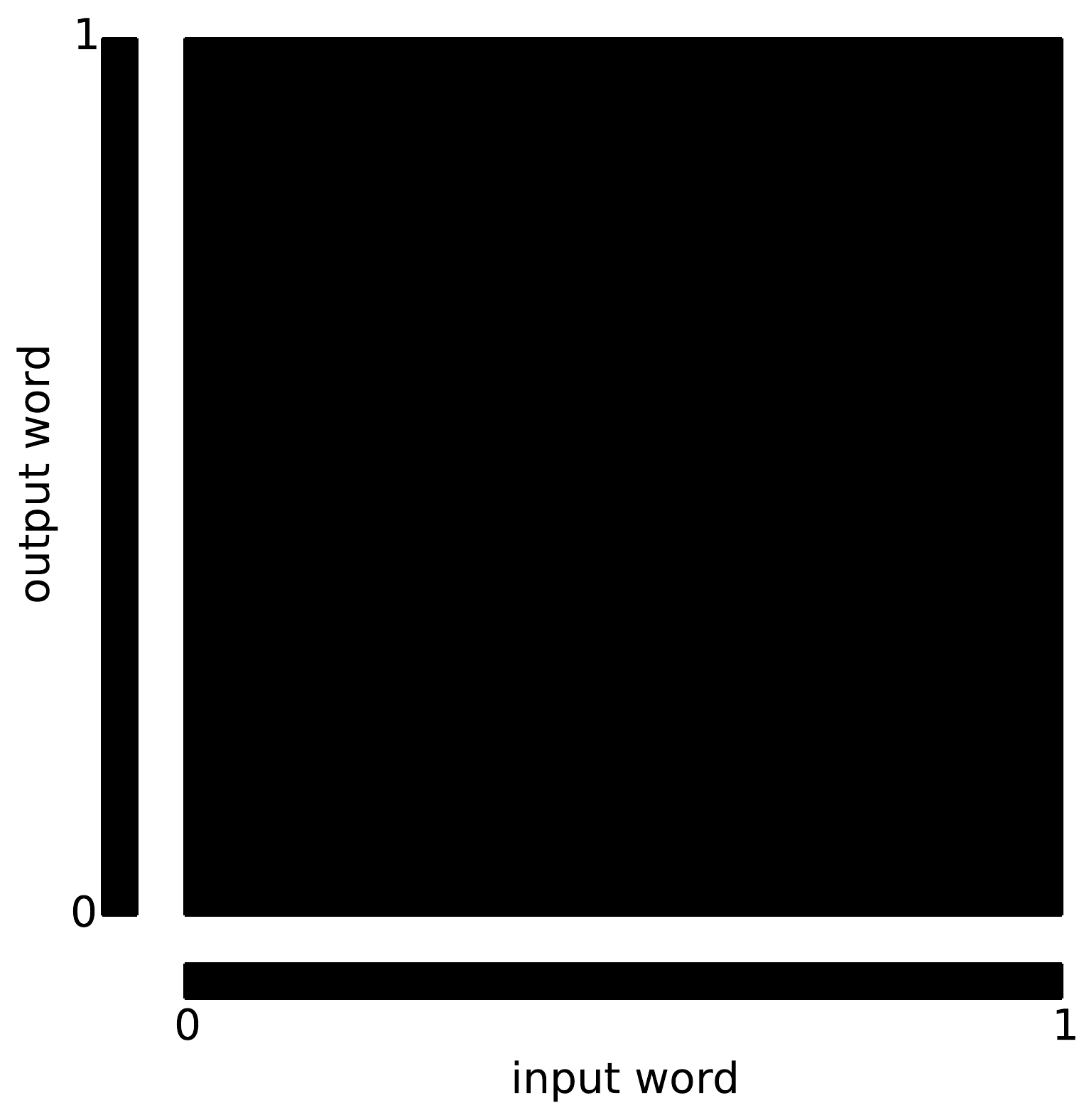}
\label{fig:alltofairword}
  } \qquad
  \subfigure[~Z Channel]{
\includegraphics[scale=.33]{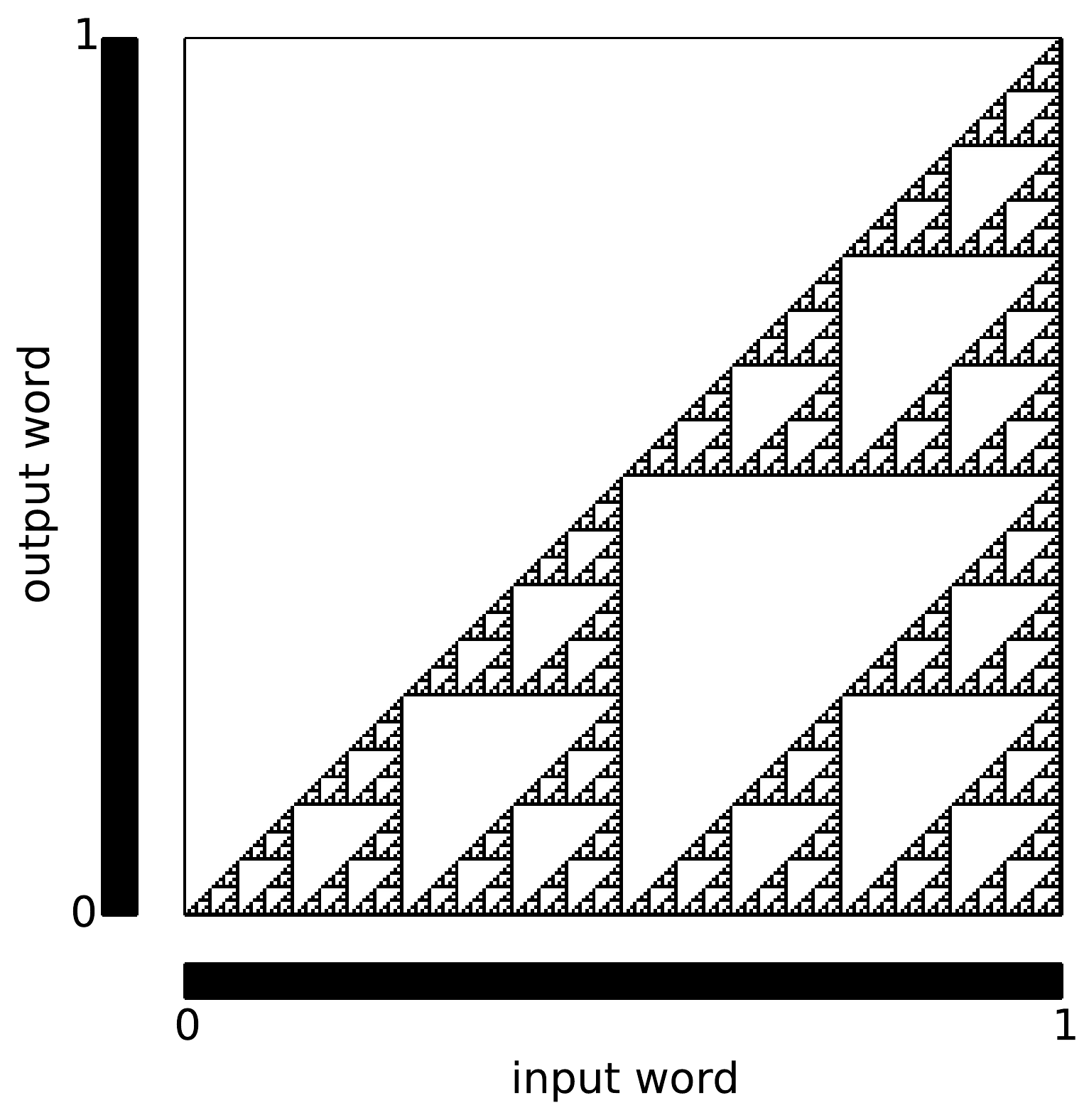}
\label{fig:zchannelword}
  } \qquad
  \subfigure[~Delay Channel]{
\includegraphics[scale=.33]{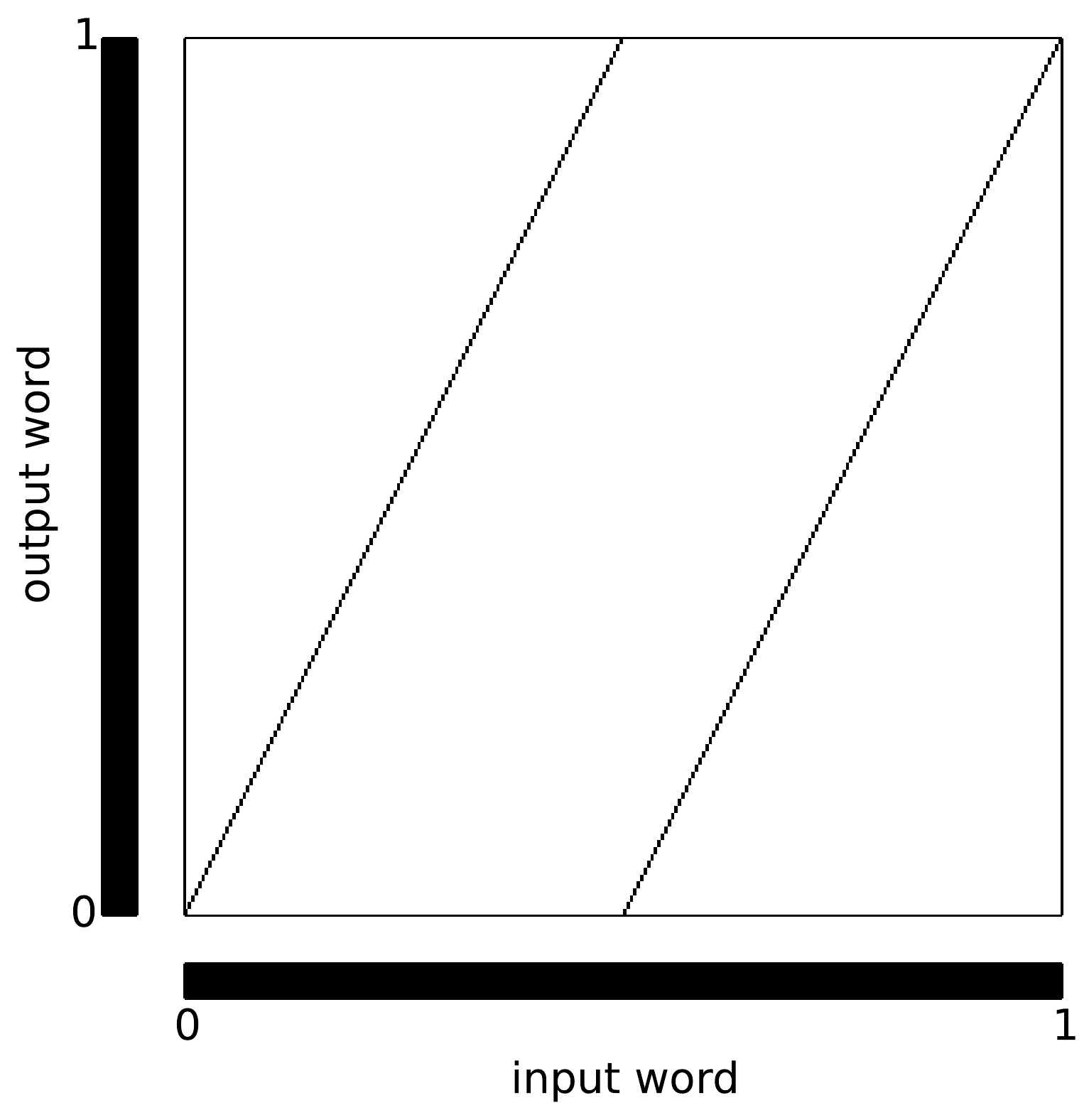}
\label{fig:delayword}
} \qquad
  \subfigure[~Feedforward NOR Channel]{
  \includegraphics[scale=.33]{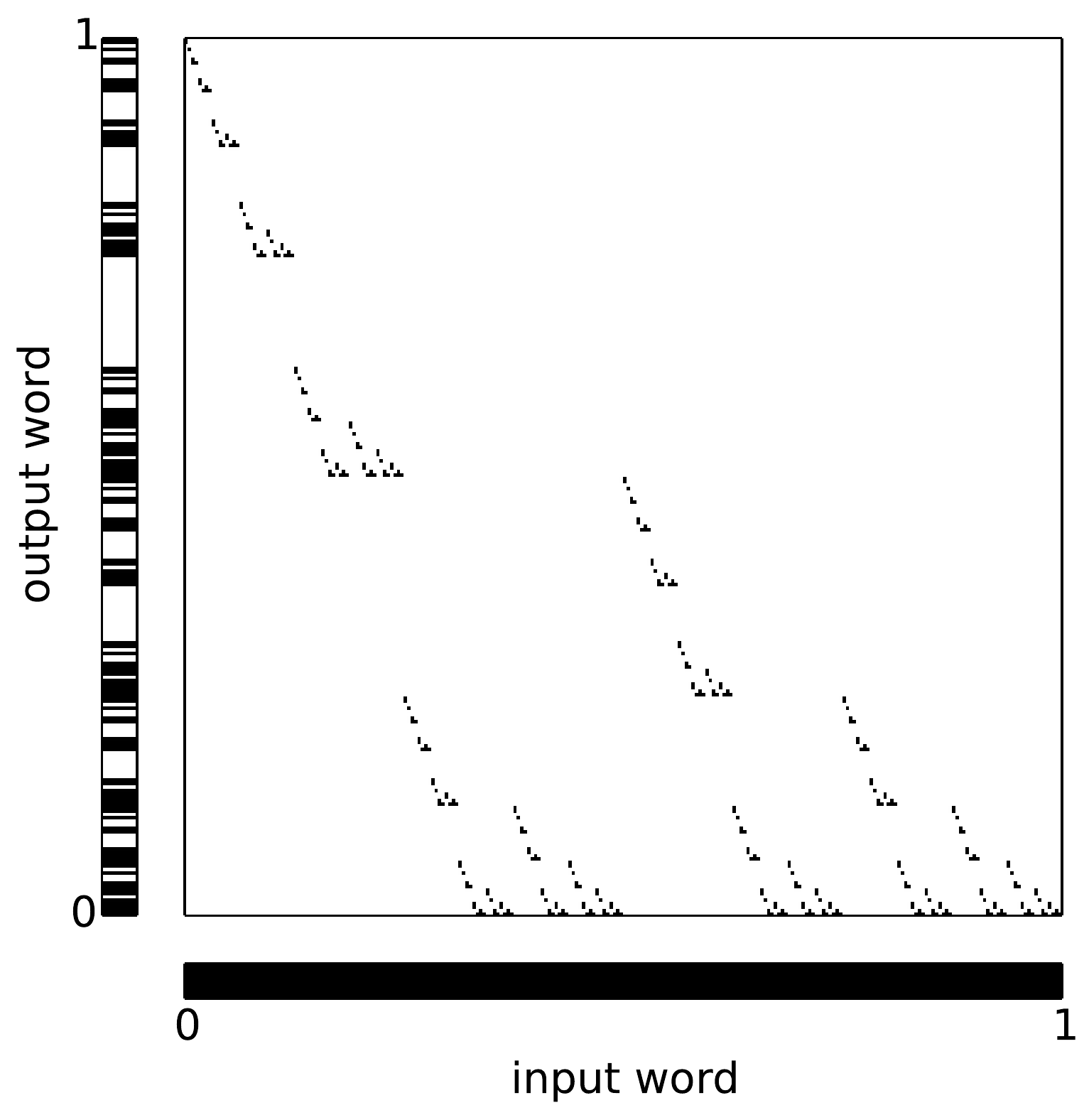}
\label{fig:feedforwardNORword}
  }
\caption{Wordmaps for simple example channels display the map from finite input
  words to finite output words: Allowed input-output pairs $(x_{0:8},y_{0:8})$
  correspond to a dot placed in the unit square at
  $(0.x_0 x_1 \cdots x_7, 0.y_0 y_1 \cdots y_7)$.
  The possible observed input (output) words are displayed below (left) of the horizontal (vertical) axis.
  }
\label{fig:wordplots}
\end{figure*} 

\section{Example Channels and their Classification}

To motivate our main results, consider several example stationary causal
channels with binary input and output. Figure \ref{fig:wordplots} illustrates
the mapping from input words $\{ x_{0:L} = x_0 x_1 \ldots x_{L-1} \}$ to output
words $\{ y_{0:L} = y_0 y_1 \ldots y_{L-1} \}$ for the example channels. The
\emph{wordmaps} there treat each input-output pair $(x_{0:L}, y_{0:L})$ as a
point $p = (p_x,p_y)$ in the unit square, where the input (output) word forms
the binary expansion of $p_x = 0.x_0 x_1 \cdots x_{L-1}$
($p_y = 0.y_0 y_1 \cdots y_{L-1}$). Since the channels are defined for all inputs, the
plots show every input word.

We organize the examples around a classification scheme paralleling that used
in signal processing to highlight memory and the nature of feedback
\cite{oppenheim2010discrete}.

We describe channel behavior using recurrence relations of the form:
\begin{align*}
\OS_t \sim r \left( \IPast_t, \OPast_t, \IS_t \right)
  ~,
\end{align*}
where $r(\cdot)$ is either a logical function, in simple cases, or a
distribution. Note that the next output $\OS_t$ depends only on the past---the
channels are causal.

In general, such a recurrence relation only captures the channel's \emph{present}
behavior. Fortunately, a causal channel's future behavior is summarized by its present
behavior:
\begin{align*}
&\Prob  (\OS_{0:L} \!=\! ab \ldots yz  | \JPast_{0}, \IS_{0:L} \!=\!
\alpha \beta \ldots \psi \omega
  ) \\[6pt]
& = \Prob (\OS_0 \!=\! a | \JPast_{0}, \IS_{0:L} \!=\! \alpha \ldots \omega) \\
&~~~\, \cdot \Prob (\OS_{1} \!=\! b |\JPast_{0}, \IS_{0:L} \!=\! \alpha \ldots \omega, \OS_{0} \!=\!a) \\
&~~\,\cdots \Prob (\OS_{L-1} \!=\! z | \JPast_{0}, \IS_{0:L} \!=\! \alpha \ldots \omega, \OS_{0:L-1} \!=\! a \ldots y) \\[6pt]
& \!\overset{(a)}{=} \! \Prob (\OS_0 \!=\! a | \JPast_{0}, \IS_{0:L} \!=\! \alpha \ldots \omega ) \\
&\hspace{3.7em}~~~\, \cdot \Prob (\OS_{1} \!=\! b |\JPast_{1}, \IS_{1:L} \!=\! \beta \ldots \omega) \\
&\hspace{7.4em}~~\,\cdots  \Prob (\OS_{L-1} \!=\! z | \JPast_{L-1}, \IS_{L-1} \!=\! \omega) \\[6pt]
& \!\overset{(b)}{=} \! \Prob (\OS_0 \!=\! a | \JPast_{0}, \IS_{0} \!=\! \alpha) \\
&\hspace{3.7em}~~~\, \cdot \Prob (\OS_{1} \!=\! b |\JPast_{1}, \IS_{1} \!=\! \beta) \\
&\hspace{7.4em}~~\,\cdots  \Prob (\OS_{L-1} \!=\! z | \JPast_{L-1}, \IS_{L-1} \!=\! \omega) \\[6pt]
& \!\overset{(c)}{=} \! \Prob (\OS_0 \!=\! a | \JPast_{0}, \IS_{0} \!=\! \alpha) \\
&\hspace{5.5em}~~~\, \cdot \Prob (\OS_{0} \!=\! b |\JPast_{0}, \IS_{0} \!=\! \beta) \\
&\hspace{11em}~~\, \cdots  \Prob (\OS_{0} \!=\! z | \JPast_{0}, \IS_{0} \!=\! \omega)
  ,
\end{align*}
where in (a) we merge individual variables into pasts to obtain new pasts, in
(b) we remove input variables that have no effect due to causality, and (c)
follows from stationarity. 

\subsection{Memorylessness}

The \emph{Memoryless Binary Channel}'s (MBC's) current output depends only on
its current input; the analog of an IID process in that its behavior at time $t$
is independent of that at other times. The MBC includes as special cases the
\emph{Binary Symmetric Channel} (BSC) and the \emph{Z Channel} \cite{Cove06a}.
We can summarize the MBC's present behavior with a simplified recurrence relation of the form
$\OS_t \sim r \left( \IS_t \right)$ and
its conditional probabilities factor as follows:
\begin{align*}
\Prob(\OS_{t:t+L} & | \IAll)  \\
 = &\Prob(\OS_{t} | \IS_{t}) \Prob(\OS_{t+1} | \IS_{t+1})
 	\cdots \Prob(\OS_{t+L-1} | \IS_{t+L-1}) ~.
\end{align*}

The first three wordmaps of Fig. \ref{fig:wordplots} illustrate the behavior of
memoryless channels. We see that the \emph{Identity Channel} (Fig.
\ref{fig:wordplots}\subref{fig:identityword}) always maps a word to itself.
Whereas, the \emph{All-is-Fair Channel} (Fig.
\ref{fig:wordplots}\subref{fig:alltofairword}) maps each input word uniformly
to every output word. We immediately see that deterministic channels have
wordmaps with a single filled pixel per plot column.

The Z Channel wordmap is shown in Fig.
\ref{fig:wordplots}\subref{fig:zchannelword}. It transmits all $0$s with no
noise, but adds noise to all $1$s transmitted. The wordmap shows the maximal
noise case, where all $1$s are replaced with the output of a fair coin. We see
that even memoryless channels have nontrivial word mappings. In this case, the
latter forms a self-similar Sierpinski right triangle \cite{Mand77a} in the
unit square.

\subsection{Finite Feedforward}

Now, consider channels whose behavior depends only on a finite input history. Their
behavior on input is analogous to those of order-$R$ Markov chains. They can also be thought
of as stochastic, anticipation-free \emph{sliding block codes} \cite{Lind95a}
with finite memory or as generalized \emph{finite impulse response filters}
\cite{oppenheim2010discrete}. These channels' present behavior can be summarized with a recurrence relation of the
form $\OS_t \sim r(\IS_{t-M:t},\IS_t)$, where $M$ is a finite input history
length such that $M>0$.

The \emph{Delay Channel} simply stores its input at time $t-1$ and outputs it
at time $t$. Its wordmap is shown in
Fig. \ref{fig:wordplots}\subref{fig:delayword}. Those familiar with
one-dimensional iterated maps of the interval will recognize the word mapping
as the \emph{shift map}, capturing the fact that delayed output corresponds to
a binary shift applied to the input word. Note that when viewed as a function
on the space of processes, the Delay Channel acts as the identity.

The \emph{Feedforward NOR Channel}'s output at time $t$ results from
performing a logical NOR ($\downarrow$) on its inputs at times $t$ and $t-1$:
\begin{align*}
\osym_t = \isym_{t-1} \downarrow \isym_t ~.
\end{align*}
The wordmap for the Feedforward NOR Channel is shown in
Fig. \ref{fig:wordplots}\subref{fig:feedforwardNORword}. There, we see that
although the channel is deterministic, the wordmap's self-similarity makes
it difficult to see that the mapping is a function---that there is, in fact,
a single output word for each input.

The additional complexity of the remaining examples does not lead to new types
of apparent graphical structure beyond that seen in the existing wordmaps.
So, wordmaps will be omitted for now. With additional theory developed, we
return to illustrate these channels, but using a more advanced form of wordmap.

\subsection{Infinite Feedforward}

Generally, channels depend on infinitely long input histories. This behavior is analogous to the long-range dependence seen in typical hidden Markov models \cite{Rabi89a,Elli95a} or in \emph{strictly sofic} subshifts \cite{Lind95a}. Channels with dependence upon infinitely long input histories \emph{alone} can also be interpreted as generalized \emph{infinite impulse response filters} \cite{oppenheim2010discrete}. The present behavior of such channels can be summarized by a recurrence relation of the form $\OS_t \sim r \left( \IPast_t , \IS_t \right)$.

The \emph{Odd NOT} Channel stores the parity (even or odd) of the number of
ones observed in its input since the last zero observed; much like the Even
Process. If the parity is currently even, it behaves as the Identity Channel.
If the parity is odd, it outputs the bitwise NOT (bit flip) of its input.
Since the channel's behavior depends on the parity of its input, it cannot
be characterized by finite input histories alone. 

A channel can depend, however, on past \emph{outputs} as well as inputs. Such \emph{feedback} can allow one to replace the infinite-history recurrence relation with one that includes only a finite
history of inputs and outputs. Consider again the Odd NOT
Channel described above. Note that its behavior is determined entirely by the
current input value, as well as the parity of the number of ones observed on
input. In fact, the parity at time $t$ can be summarized by the input and
output at time $t-1$. If $\isym_{t-1}=0$, the parity will always be even. If
$\isym_{t-1}=1$, and $\osym_{t-1}=0$ we know that the parity \emph{was} odd,
since the bit was flipped, but since a $1$ was just observed on input, the
parity is now even. Finally, if $\isym_{t-1}=1$ and $\osym_{t-1}=1$, we know
that the parity was previously even, and the newly observed $1$ makes the
current parity \emph{odd}. Summarizing, we have that:
\begin{align}
\jsym_{t-1}&=(0,0) \Leftrightarrow \text{Even input history parity},
  \nonumber \\
\jsym_{t-1}&=(0,1) \Leftrightarrow \text{Even input history parity}, 
  \nonumber \\
\jsym_{t-1}&=(1,0) \Leftrightarrow \text{Even input history parity},
~\text{and} \nonumber \\
\jsym_{t-1}&=(1,1) \Leftrightarrow \text{Odd input history parity} ~.
\label{eq:OddNotChannel}
\end{align}
By allowing feedback, we can therefore summarize the behavior of the Odd
NOT Channel with a recurrence relation that depends only upon finite history (of length $M=1$):
$\OS_t \sim r \left( \IS_{t-1},\OS_{t-1},\IS_t \right)$.

An interesting observation is that the Odd NOT channel maps the Even Process
to a bit-flipped Golden Mean Process. However, a single process-to-process
mapping does \emph{not} uniquely define a channel. There are an infinite number of channels,
in fact, that map the Even Process to the bit-flipped Golden Mean Process.

\subsection{Finite Feedback}

As seen in the previous section, allowing for even a finite amount of output
feedback can can lead to substantial simplifications in the description of a channel. Let's
consider channels that depend on a finite output history on their own terms.

The most trivial case would be channels that depend solely on output histories, with no dependence on
inputs. Since there is effectively no input, these channels reduce to the
output-only stochastic processes (generators) discussed earlier. Consider, for
example, the \emph{All is Golden Channel} that outputs the Golden Mean Process,
regardless of what input it receives.

A less trivial example is the \emph{Feedback NOR Channel}, similar to the
Feedforward NOR Channel, except the output at time $t$ is the logical NOR
of its current input and \emph{previous output}:
\begin{align*}
\osym_t = \isym_{t} \downarrow \osym_{t-1}
  ~.
\end{align*}   
This channel's behavior is clear in this feedback form. It might be desirable,
however, to find a purely feedforward presentation for the channel. The
recurrence relation for the Feedback NOR Channel can be solved recursively to
give a feedforward presentation that is defined for almost every input history.
We recurse in the following way:
\begin{align*}
\osym_t & =  \isym_{t} \downarrow \osym_{t-1} \\
  & =  \isym_{t} \downarrow (\isym_{t-1} \downarrow \osym_{t-2}) \\
  & = \isym_{t} \downarrow (\isym_{t-1} \downarrow (\isym_{t-2} \downarrow \osym_{t-3}) )  \\
  & = \cdots
  ~,
\end{align*}
and so on, until reaching sufficiently far into the input past that a $1$ is
observed. When this happens, the recursion terminates as the output of
the NOR function is always $0$ when either argument is $1$. We can therefore
construct a purely feedforward recurrence relation, but the input histories
can be arbitrarily long---corresponding to arbitrarily long input histories
consisting entirely of $0$s. The resulting feedforward recurrence relation
is defined for all histories, except for the infinite history of all $0$s. 

Note that such ill-defined behavior for certain infinite histories is typical
in systems that have infinite memory lengths and is not a problem specific to channels.
One can be careful to explicitly \emph{define} behavior for such cases, but
this is beyond the scope of our current work, and these pathological histories
typically occur with zero probability.

In contrast, consider replacing the logical NOR in the Feedback
NOR Channel with an exclusive OR (XOR or $\oplus$), thus giving the \emph{Feedback XOR Channel}:\begin{align}
\osym_t = \isym_{t} \oplus \osym_{t-1}.
\end{align}
In this case, solving for a pure-feedforward relation fails since the output of
a logical XOR is never determined by a single argument. This channel
illustrates the fact that a presentation which includes feedback cannot always
be reduced to a pure-feedforward presentation.

\subsection{Infinite Feedback}

Just as we can define channels whose behavior depends upon infinite input
histories, we can define channels whose behavior depends upon infinite
\emph{output} histories. In fact, we have already studied a channel that can
be represented this way. Consider a channel that stores the parity (even or
odd) of the number of ones observed in its \emph{output} since the last zero
observed. If this parity is currently even, it behaves as the Identity
Channel. If the parity is odd, it outputs the bitwise NOT (bit flip) of its
input. This appears to be very similar to the Odd NOT Channel defined above,
but with a dependence on infinite output histories and the present input, rather than infinite input histories. In fact, this is simply a different \emph{presentation} of the Odd NOT Channel. 

It suffices to show that the feedback presentation of the Odd NOT Channel can
be reduced to the same finite history presentation as with its feedforward
presentation. Proceeding as before, we observe that if $\osym_{t-1}=0$, the
output parity is always even. If $\isym_{t-1}=0$ and $\osym_{t-1}=1$, the
previous parity was odd (the bit was flipped), but the $1$ observed on output
makes the output parity even. Finally, if $\isym_{t-1}=1$ and $\osym_{t-1}=1$,
the output parity was even, and the $1$ observed makes the output parity odd.
Summarizing, we obtained the same presentation as the feedforward
presentation specified by Eq. (\ref{eq:OddNotChannel}).

\subsection{Infinite Feedforward-Feedback}

We just examined an example channel whose presentation depends upon infinite histories
when only feedforward or feedback is allowed, but only a finite
history when both feedforward and feedback are allowed. The following example
shows that finding a finite history presentation is not always possible.
Channels of this form are perhaps the most natural channel generalization of
infinite Markov order (strictly sofic) processes.

The \emph{Odd Random Channel} stores the parity of its input history just as
the Odd NOT Channel does, and it again behaves as the identity when the parity
is currently even. When the parity is odd, the channel outputs a $0$ or $1$
with equal probability. Like the Odd NOT Channel, this has an infinite
feedback presentation that stores the channel's output history parity. The
channel does \emph{not} have any finite history presentation, however. If one
attempts to construct a finite presentation via the recursion unrolling
procedure used for the Odd NOT Channel, it is simple to obtain the following
relationships:
\begin{align*}
\jsym_{t-1}&=(0,0) \Leftrightarrow \text{Even input history parity}, \\
\jsym_{t-1}&=(0,1) \Leftrightarrow \text{Even input history parity},
  ~\text{and} \\
\jsym_{t-1}&=(1,0) \Leftrightarrow \text{Even input history parity}.
\end{align*}
The problem arises from the fact that when $\isym_{t-1}=1$ and
$\osym_{t-1}=1$, the input history parity is uncertain. The channel could have
been operating as the identity (even parity) or giving random output (odd
parity). Looking at progressively longer histories can resolve this
uncertainty, but only once a $0$ has been observed (on either input, output,
or both). This ambiguity requires that we specify arbitrarily long joint
histories to determine the behavior of the channel in general. It is therefore
not possible to construct any finite-history presentation of the Odd Random
Channel.

\subsection{Irreducible Feedforward-Feedback}

As a final example, consider a channel that has a finite presentation when both
feedforward and feedback are allowed, but has \emph{no} pure-feedforward or
pure-feedback presentation. The \emph{Period-2 Identity NOT Channel}
alternates between the identity and bit flipped identity at each time step.
This channel's present behavior is completely determined by whether the
previous input and output bits, $\isym_{t-1}$ and $\osym_{t-1}$, match. When
the bits match, the channel was in its ``identity'' state and is therefore
now in its ``NOT'' state. The opposite is clearly true when the bits do not
match. The channel therefore has a recurrence relation of the form
$\OS_t \sim r(\IS_{t-1}, \OS_{t-1}, \IS_t)$.

Since the behavior does not depend on the particular values of the input or
output, but whether or not they \emph{match}, there is no way to construct a
pure-feedforward or pure-feedback presentation of the channel. This channel
therefore illustrates the notion of \emph{irreducible} output (or input)
memory---dependence upon past output (input) that cannot be eliminated even by
including dependence upon infinite past outputs (inputs).

\subsection{Causal Channel Markov Order Hierarchy}

It turns out that the set of examples above outlines a classification scheme
for causal channels in terms of their Markov orders \cite{Jame10a} that we now
make explicit. Just as Markov order plays a key role in understanding the
organization of processes, it is similarly helpful for channels. Channel Markov
orders are the history lengths required to \emph{completely} specify a causal
channel's behavior, given certain constraints on knowledge of other histories.

\begin{definition}
\noindent
\begin{enumerate}
\item The \emph{pure feedforward Markov order} $\MOrder_{\text{pff}}$ is the
	smallest $M$ such that $\forall L$\\
$\Prob \left( \OS_{0:L} |  \JPast_0, \IS_{0:L} \right) \!=\! \Prob \left( \OS_{0:L} |  \IS_{-M:0}, \IS_{0:L} \right)$.
\item The \emph{pure feedback Markov order} $\MOrder_{\text{pfb}}$ is the
	smallest $M$ such that $\forall L$\\
$\Prob \left( \OS_{0:L} |  \JPast_0, \IS_{0:L} \right) \!=\! \Prob \left( \OS_{0:L} |  \OS_{-M:0}, \IS_{0:L} \right)$.
\item The \emph{channel Markov order} $\MOrder$ is the smallest $M$ such
	that $\forall L$\\
$\Prob \left( \OS_{0:L} |  \JPast_0, \IS_{0:L} \right) \!=\! \Prob \left( \OS_{0:L} |  \JS_{-M:0}, \IS_{0:L} \right)$.
\item The \emph{irreducible feedforward Markov order} $\MOrder_{\text{iff}}$ is the
	smallest $M$ such that $\forall L$\\
$\Prob \left( \OS_{0:L} |  \JPast_0, \IS_{0:L} \right) \!=\! \Prob \left( \OS_{0:L} |  \IS_{-M:0}, \OPast_0, \IS_{0:L} \right)$.
\item The \emph{irreducible feedback Markov order} $\MOrder_{\text{ifb}}$ is the smallest $M$ such that $\forall L$\\
$\Prob \left( \OS_{0:L} |  \JPast_0, \IS_{0:L} \right) \!=\! \Prob \left( \OS_{0:L} |  \IPast_0, \OS_{-M:0}, \IS_{0:L} \right)$.
\end{enumerate}
\end{definition}

For example, we showed that the Odd NOT Channel's presentation requires an infinite history
when only feedforward or feedback is allowed. And so, it has
$\MOrder_{\text{pff}} = \MOrder_{\text{pfb}} = \infty$. However, it only
requires finite history when both are allowed: $\MOrder = 1$. If we have full knowledge of the output past, we still need one symbol of input history in order to characterize the channel, so $\MOrder_{\text{iff}}=1$. Similarly, we have $\MOrder_{\text{ifb}}=1$.

Note that the irreducible feedforward Markov order $\MOrder_{\text{iff}}$
will only be nonzero if the
pure feedback order $\MOrder_{\text{pfb}}$ is undefined. Similarly, the irreducible feedback Markov
order $\MOrder_{\text{ifb}}$ will only be nonzero when the pure feedforward
order $\MOrder_{\text{pff}}$ is undefined. In
words, if a channel has irreducible feedback (feedforward), the channel has no pure feedforward (feedback) presentation. Moreover, the channel Markov
order $\MOrder$ bounds the pure Markov orders from below and the smallest of
the two irreducible Markov orders bounds the channel Markov order from below:
\begin{align*}
\min ( \MOrder_{\text{iff}} , \MOrder_{\text{ifb}}) \leq \MOrder \leq \min ( \MOrder_{\text{pff}}, \MOrder_{\text{pfb}} )
  ~.
\end{align*}

In a sequel, we address input and output memory using information-theoretic
quantities. There, we characterize different \emph{amounts} of input and output
memory and how they relate, whereas here in discussing Markov orders we considered \emph{lengths} of input and output sequences. We also focused on causal channels, which allowed us to restrict our discussion to present behavior and memory lengths. In the case of anticipatory channels, we must explicitly consider \emph{future} behavior, as well as \emph{anticipation} lengths. However, this is best left to another venue, so that we do not deviate too far from the path to our goal.

\subsection{Causal-State Channels}

There is a natural and quite useful channel embedded in any \eM\ presentation
of a stationary process---\emph{the causal-state channel}---that identifies
structure embedded in a process via the \eM's causal states.

Consider a process and its \eM\ $M$. Previously, we described $M$ as a
generator of the process. An \eM, however, is also a \emph{recognizer} of its
process's sequences. Briefly, $M$ reads a sequence and follows the series of
transitions determined by the symbols it encounters. The output of the
causal-state channel is then the sequence of causal states. The operation of
this channel is what we call \emph{causal-state filtering}. Notably, the
induced mapping is a function due to the \eM's unifilarity.

In this way, the channel filters observed sequences, returning step-by-step
associated causal states. Given an \eM, the causal-state filter has the same
topology as the \eM, but input symbols match the \eM\ transition symbols and
output symbols are the state to which the transition goes.

The recurrence relation for causal-state filtering is:
\begin{align*}
\CS_t \sim r \left( \IS_{t-1} , \CS_{t-1} \right) ~.
\end{align*}
As just noted, $r(\cdot)$ is a (nonprobabilistic) function, determined by the
$\epsilon$-map:
\begin{align*}
\CS_t & = \epsilon(\IPast_t) \\
      & = \epsilon(\IS_{t-1} , \IPast_{t-1}) \\
      & = \epsilon(\IS_{t-1} , \CS_{t-1}) ~.
\end{align*}
Thus, the pure feedback order is $\MOrder_{\text{pfb}} = 1$, as is the channel Markov
order $\MOrder = 1$. The pure feedforward order $\MOrder_{\text{pff}}$, however, is the original process's Markov order. For methods to determine the latter see
Ref. \cite{Jame10a}.

In this way, an \eM\ can be used to detect the hidden structures captured
by the causal states. For example, causal-state filtering has been used to
great effect in detecting emergent domains, particles, and particle
interactions in spatially extended dynamical systems \cite{Crut93a},
single-molecule conformational states \cite{Li08a}, and polycrystalline and
fault structures in complex materials \cite{Varn12a}.

\section{Global \EM}

Before introducing channel presentations, we first frame the question in the
global setting of the \eM\ of joint (input-output) processes. This, then,
grounds the \eT\ in terms of an \eM.

Given a stationary process $\IAll$ and a channel $\OAll\big|\IAll$ with output
process $\OAll$, form a joint random variable $Z_t = \JS_t$ over input-output
symbol pairs with alphabet $\boldsymbol{\mathcal{Z}} = \JA$. For example, if
$\IA = \{a,b\}$ and $\OA = \{c,d\}$, then $\JA = \{ac,ad,bc,bd\}$.

\begin{definition}
The process $\olrharp{Z}$ over $\boldsymbol{\mathcal{Z}}$ defines the channel's
\emph{I/O process}.
\end{definition}

\begin{definition}
A stationary channel's \emph{global \eM} is the \eM\ of its I/O process.
\end{definition}

In this setting, the next section asks for a particular decomposition of the
global \eM, when one has selected a portion of $Z_t$ as ``input'' and another
as ``output''. The input process $\IAll$ is then described by the marginal
distribution of the joint process that projects onto ``input'' sequences. The
same also holds for the output process $\OAll$. In this way, the following
results not only provide an analysis for specified input and output processes,
but also an analysis of possible input-to-output mappings embedded in any
process or its \eM. Leveraging this observation and anticipating the sequels,
we also note here that the global \eM\ also provides the proper setting for
posing questions about information storage and flow \emph{within} any given
process.

\section{\ET}

Computational mechanics' fundamental assumption---only prediction
matters---applies as well to channels as to processes. In the case of channels,
though, we wish to predict the channel's future \emph{output} given the
channel's past inputs \emph{and} outputs \emph{and} the channel's future
\emph{input}. This leads to a new causal equivalence relation
$\sim_\epsilon$ over \emph{joint} pasts $\olharplow{z} = \jpast$:
\begin{align}
\jpast \sim_\epsilon \jpast^\prime \iff & \nonumber \\
\label{eq:ETEquivalence}
  \Prob \big( \OFuture \big| \IFuture, & \JPast = \jpast \big) \\
    & \quad = \Prob \big(\OFuture \big| \IFuture, \JPast = \jpast^\prime \big)
	\nonumber
	~.
\end{align}
Compare Eq. (\ref{eq:PredEqReln}) applied to the I/O process. The equivalence
classes of $\sim_\epsilon$ partition the set
$\olharplow{\boldsymbol{\mathcal{Z}}} = \JPastSet$ of all input-output pasts.
These classes are the channel's \emph{causal states}, denoted
$\CausalStateSet$. The \emph{$\epsilon$-map} is a function $\epsilon: \JPastSet
\to \CausalStateSet$ that maps each joint past to its corresponding channel
causal state or, equivalently, to the set of joint pasts to which it is
causally equivalent:
\begin{align*}
\epsilon \big( \jpast \big) = \cs_i =
  \big\{ \jpast^\prime: \jpast \sim_\epsilon \jpast^\prime \big\}
  ~.
\end{align*}

\begin{figure}
\centering
\includegraphics[scale=.75]{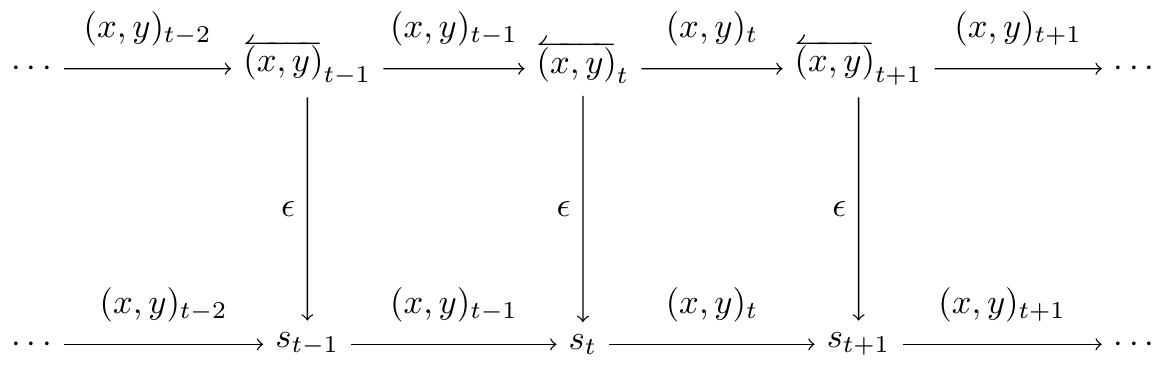}
\caption{\ET\ dynamic induced by the causal states
  $s_t = \cs_i \in \CausalStateSet$ from a channel's joint pasts via the
  $\epsilon$-map: $s_t = \epsilon (\ldots \jsym_{t-2} \jsym_{t-1} )
  \xrightarrow[]{\jsym_t} s_{t+1} = \epsilon (\ldots \jsym_{t-2} \jsym_{t-1} \jsym_t ) $.
  }
\label{fig:channellattice}
\end{figure} 

The dynamic over causal states is again inherited from the implicit dynamic
over joint pasts via the $\epsilon$-map, resulting from appending the joint
symbol $z_t = \jsym_t$, as shown in Fig. \ref{fig:channellattice}. Since state
transitions now depend upon the current input symbol, we specify the dynamic
by an indexed set of \emph{conditional-symbol transition matrices}:
\begin{align*}
\CSTSet \equiv
  \left\{ T^{(\osym|\isym)} \right\}_{\isym \in \IA, \osym \in \OA},
\end{align*}
where $T^{(\osym|\isym)}$ has elements:
\begin{align*}
T_{ij}^{(\osym|\isym)} = \Prob(\CS_1 = \cs_j, \OS_0 = \osym | \CS_0 = \cs_i, \IS_0 = \isym ).
\end{align*}

While the causal states for a stationary process have a \emph{unique}
stationary distribution, each stationary input to a channel can drive its
causal states into a \emph{different} stationary state distribution. The
$\epsilon$-map from joint histories to the channel's causal states can also be
seen as a function that maps a \emph{distribution} over joint histories to a
distribution over the channel's causal states. Since each input history
specifies a particular distribution over \emph{output} histories (via the
channel's conditional word probabilities), it follows that a
\emph{distribution} over input histories also specifies a distribution over
output histories. When this input history distribution is specified via a
particular input process, we obtain a unique distribution over causal states
via its $\epsilon(\cdot)$ function. We write this input-dependent state
distribution $\pi_\IS$ as:
\begin{align*}
\pi_\IS (i) & = \Prob_\IS (\CS_0 =\cs_i) \\
  & = \Prob_\IS \big( \epsilon \big( \JPast \big) = \cs_i \big)
  ~,
\end{align*}
where the subscript $\IS$ indicates that the input process has a specific,
known distribution. The distribution over joint histories is stationary by
assumption here and, since the $\epsilon$-map is time independent, $\pi_\IS$ is
stationary. We therefore refer to $\pi_\IS$ as the (input-dependent)
\emph{stationary distribution}.

When both the input process and channel are stationary and ergodic, we can
calculate this stationary distribution from the input and channel's
causal-state transition matrices using a generalization of the algorithm found
in Ref. \cite{Bill61a}. We save an in-depth discussion of this algorithm for a
sequel.

\begin{definition}
The tuple $(\IA, \OA, \CSSet, \CSTSet)$---consisting of the channel's input and
output alphabets, causal states, and conditional-symbol transition
probabilities, respectively---is the channel's \emph{\eT}.
\end{definition}

Note that in the causal equivalence relation for channels, we condition on the input future $\IFuture$, as a channel is
defined by its output behavior \emph{given} input. Requiring causal 
equivalence for the output future alone (or the joint future for that matter)
requires knowledge of a particular \emph{input process} as well. In
particular, if we \emph{do} have knowledge of the input process we can extend
the standard causal equivalence relation to an equivalence relation
involving joint pasts and joint future morphs, giving us the \emph{global}
(joint) $\eM$ of the preceding section.

\begin{figure}
\centering
\includegraphics[scale=1]{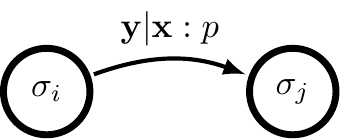}
\caption{Generic transition from \eT\ causal state $\cs_i$ to state $\cs_j$
  while accepting input symbol $\isym$ and emitting symbol $\osym$ with
  probability $p = T_{ij}^{(\osym|\isym)}$.
  }
\label{fig:transducernotation}
\end{figure}

As with \eMs, it is useful to consider channels whose \eTs\ have a finite (or
countable) number of causal states. This restriction again allows us to
represent an \eT\ as a labeled-directed graph. Since transitions between causal
states now depend on inputs as well as outputs, we represent a transition from
state $\cs_i$ to state $\cs_j$ while accepting input symbol $\isym$ and
emitting symbol $\osym$ as a directed edge from node $\cs_i$ to node $\cs_j$
with edge label $\mathbf{y} | \mathbf{x} : p$, where $p = T_{ij}^{(\osym|\isym)}$ is
the edge transition probability. This is illustrated in Fig.
\ref{fig:transducernotation}.

\begin{figure*}
\centering
\includegraphics[scale=0.55]{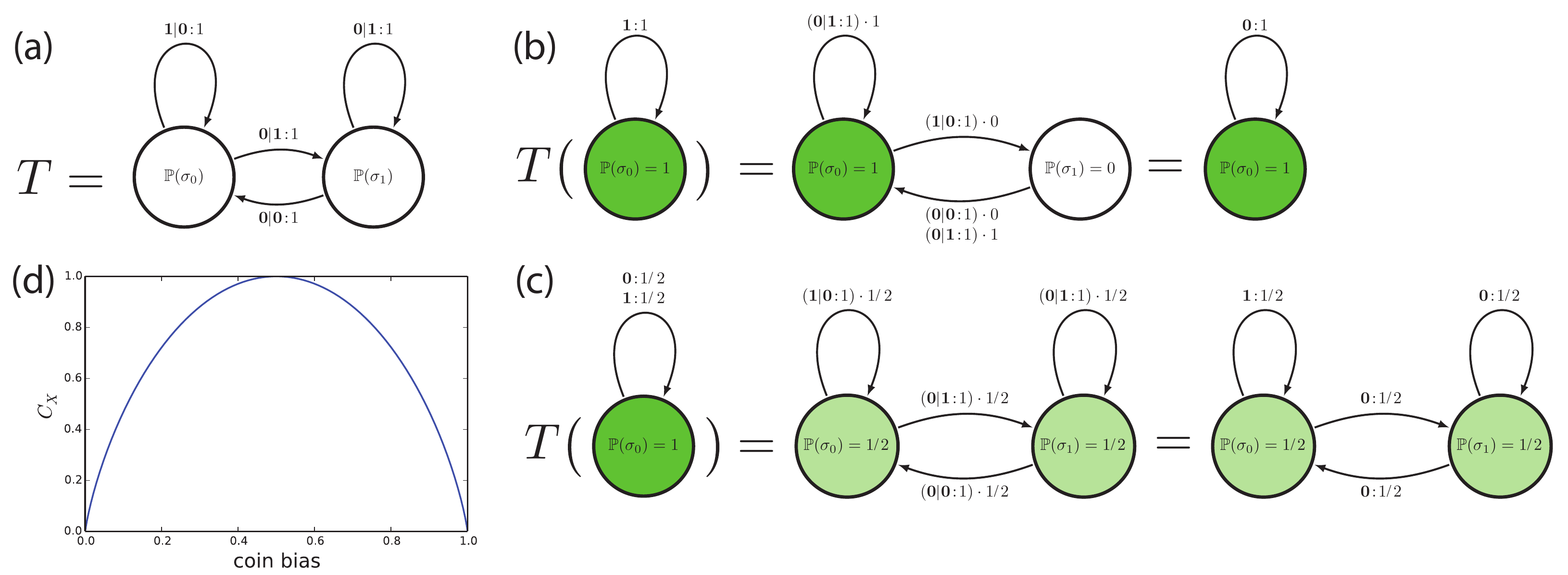}
\caption{A channel's statistical complexity $C_{\IS}$ depends on the input
  process $\IS$:
  (a) An example transducer $T$---the feedforward NOR channel described
  later---driven by two example inputs, showing the intermediate stage
  calculation $T(\IS)$ and the \eM\ of the output process $\OS = T(\IS)$.
  (b) When driven by an input process consisting of all $1$s, the statistical
  complexity vanishes: $C_{\IS} (T) = \log_2 (\Pr(\cs_0) = 1) = 0$ bits.
  (c) When driven by the Fair Coin Process, the distribution over causal states
  is uniform and there is a positive statistical complexity: $C_{\IS} (T) =
  \H(1/2) = 1$ bit.
  (d) Statistical complexity $C_{\IS}$ as a function of the bias of an input
  Biased Coin Process.
  }
\label{fig:TransducerAndComplexity}
\end{figure*} 

\section{Structural Complexity}

To monitor the degree of structuredness in an \eT, we use the Shannon
information captured by its causal states, paralleling the definition of
\eM\ statistical complexity. The stationary distribution over \eT\ states allows one to define an \eT's
\emph{input-dependent} statistical complexity:
\begin{align*}
C_{\IS} = \H[\pi_\IS]
  ~.
\end{align*}
(Note that $\IS$ replaces the previously subscripted measure $\mu$ in
\eM\ statistical complexity to specify the now-relevant measure.) While
quantifying structural complexity, $C_{\IS}$'s dependence on input requires
a new interpretation. Some processes drive a transducer into simple,
compressible behavior, while others will lead to complex behavior. Figure
\ref{fig:TransducerAndComplexity} illustrates this.

Input dependence can be removed by following the standard definition of channel
capacity \cite{Cove06a}, giving a single number characterizing an \eT. We take
the supremum of the statistical complexity over input processes. This gives an
upper bound on \eT\ complexity---the \emph{channel complexity}:
\begin{align*}
\overline{\Cmu} = \sup_{\IS} C_{\IS}
  ~,
\end{align*}
where the maximizing input measure $\mu$ is implicitly defined. Note that not
all transducers can be driven to a uniform distribution over states. Thus,
recalling that uniform distributions maximize Shannon entropy, in general
$\overline{\Cmu} \leq C_0 \equiv \log_2 |\CausalStateSet|$---the
\emph{topological state complexity}.

\section{Reproducing a Channel}
To establish that the \eT\ is an exact presentation of the causal channel it
models, we must show that it reproduces the channel's conditional word
probabilities. We first establish some needed notation.

Recall that the $\epsilon$-map takes a distribution over joint
histories to a distribution over the \eT's causal states and that a particular
input history defines a distribution over a channel's output history. It
follows that a particular input history defines a distribution over joint
histories and, therefore, also defines a distribution over an \eT's causal
states via the $\epsilon$-map. We call this distribution $\tau$:
\begin{align*}
\tau (i) & = \Prob_\IS \big(\CS_0 =\cs_i \big| \IPast_0 = \ipast \big) \\
  & = \Prob_\IS \big( \epsilon \big( \JPast_0 \big) = \cs_i \big| \IPast_0 = \ipast
  \big)
  ~.
\end{align*}
Note that while the $\epsilon$-map takes a particular \emph{joint} history to a
unique state, the distribution over states induced by a particular \emph{input}
history need not be concentrated on a single state.

When the input process is known and stationary, the \eT's stationary
distribution $\pi_\IS$ can be determined. This provides a starting distribution
for the \eT, which is updated by the \eT's symbol transition matrices as each
input symbol is observed and each output symbol is generated. We can therefore
calculate the final state distribution $\Prob(\CS_{L}=\cs_m | \isym_{0:L},
\CS_0 \sim \pi_\IS)$ that results from starting states in distribution
$\pi_\IS$ and observing finite input word $\isym_{0:L}$:
\begin{align*}
\Prob(&\CS_{L}=\cs_m | \isym_{0:L}, \CS_0 \sim \pi_\IS) \\
  & = \sum_{\osym_{0:L}} ~\sum_{i,j,k,\cdots,l}
  \pi_\IS(i) T_{ij}^{(\osym_0|\isym_0)} T_{jk}^{(\osym_1|\isym_1)} \cdots
  T_{lm}^{(\osym_{L-1}|\isym_{L-1})}
  ~.
\end{align*}

We then obtain $\tau$ from the \eT\ by shifting this distribution by $L$ and taking the limit as $L \to \infty$:
\begin{align*}
\tau (i) & = \Prob_\IS \big(\CS_0 =\cs_i \big| \IPast_0 = \ipast \big) \\
& = \lim_{L \to \infty} \Prob \big(\CS_{0}=\cs_i \big| \isym_{-L:0}, \CS_{-L} \sim \pi_\IS \big)
  ~.
\end{align*}

We can now establish that the \eT\ is an exact presentation of the causal channel that it models.

\begin{Prop}[Presentation]
A causal channel's \eT\ exactly (and only) reproduces the channel's conditional word probabilities $\Prob (\OS_{0:L}  | \IPast_{L} )$.
\end{Prop}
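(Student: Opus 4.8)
The plan is to show that the \eT, started from the input-induced state distribution $\tau$ and driven forward by the conditional-symbol transition matrices $\CSTSet$, reproduces $\Prob(\OS_{0:L} | \IPast_L)$ for every $L$ and every input past. The argument divides into three movements: a structural lemma establishing that the causal states form a unifilar Markov chain under the joint symbols; a factorization of finite output-word probabilities into products of transition-matrix elements; and an averaging step that installs the correct initial distribution $\tau$ and strips away the irrelevant part of the conditioning.

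First I would establish the key structural fact, that the causal-state process is Markov and the \eT\ is unifilar: the triple $(\cs_i, \isym, \osym)$ determines the successor state $\epsilon\big(\jpast(\isym,\osym)\big)$. The mechanism is a chain-rule factorization of the future morph, using causality to drop future inputs from the present output,
\begin{align*}
\Prob\big(\OFuture_0 \big| \IFuture_0, \jpast_0\big) = \Prob\big(\OS_0 = \osym_0 \big| \IS_0 = \isym_0, \jpast_0\big) \cdot \Prob\big(\OFuture_1 \big| \IFuture_1, \jpast_0 (\isym_0,\osym_0)\big) .
\end{align*}
Both the left-hand side and the first factor on the right depend on $\jpast_0$ only through $\epsilon(\jpast_0)$ --- the former by the definition of $\sim_\epsilon$, the latter as a causal marginal of that same morph. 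Hence, for any positive-probability emission, the shifted morph $\Prob(\OFuture_1 | \IFuture_1, \jpast_1)$ depends on $\jpast_0$ only through $\epsilon(\jpast_0)$ together with $(\isym_0,\osym_0)$. Consequently causally equivalent pasts stay equivalent after appending a common joint symbol, which is exactly unifilarity and shows that the successor state and the emission are governed by $T^{(\osym|\isym)}$.

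With unifilarity in hand I would prove, by induction on $L$, the factorization
\begin{align*}
\Prob\big(\OS_{0:L} = \osym_{0:L} \big| \IS_{0:L} = \isym_{0:L}, \CS_0 = \cs_i\big) = \sum_{j_1,\ldots,j_L} T^{(\osym_0|\isym_0)}_{i j_1} T^{(\osym_1|\isym_1)}_{j_1 j_2} \cdots T^{(\osym_{L-1}|\isym_{L-1})}_{j_{L-1} j_L} ,
\end{align*}
whose inductive step is the single-symbol Markov identity: conditioned on the current causal state, the joint emission is independent of all earlier symbols (causal-state sufficiency) and, by causality, of all future inputs, so each step contributes precisely the factor $T^{(\osym_t|\isym_t)}_{j_t j_{t+1}}$. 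Finally I would average over the initial state. Causality guarantees that the output past, and therefore $\CS_0$, is unaffected by the future inputs $\isym_{0:L}$, so $\Prob(\CS_0 = \cs_i | \IPast_L) = \tau(i)$; and causal-state sufficiency lets me replace $\IPast_L$ by $\IS_{0:L}$ in the conditioning once $\CS_0$ is fixed. Combining the pieces gives
\begin{align*}
\Prob\big(\OS_{0:L} = \osym_{0:L} \big| \IPast_L\big) = \sum_i \tau(i) \sum_{j_1,\ldots,j_L} T^{(\osym_0|\isym_0)}_{i j_1} \cdots T^{(\osym_{L-1}|\isym_{L-1})}_{j_{L-1} j_L} ,
\end{align*}
which is exactly the quantity computed from the \eT. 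Since these conditional word probabilities completely characterize a causal channel, the \eT\ reproduces the channel exactly and introduces nothing more.

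The hard part will be the structural lemma of the first movement, specifically the measure-theoretic bookkeeping: conditioning on semi-infinite pasts and on the positive-probability emissions needed to form the ratio, and confirming that the future-morph equality defining $\sim_\epsilon$ descends to the finite-word marginals uniformly in $L$. Once that is secured, the factorization and averaging steps are routine applications of the chain rule, the Markov property, and causality.
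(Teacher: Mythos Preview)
Your proposal is correct and lands on the same formula as the paper---start the \eT\ in the input-induced distribution $\tau$ and multiply conditional-symbol transition matrices---but your route is substantially more careful than the paper's own proof, which essentially just writes down Eq.~(\ref{eq:ETConditionalWordDist}) and appeals to ``repeated application'' of the $T^{(\osym|\isym)}$. The structural lemmas you establish inline (unifilarity of the causal states under joint symbols, causal-state sufficiency for the past, independence of $\CS_0$ from future inputs by causality) are in the paper, but they appear as \emph{separate} propositions \emph{after} the Presentation result; the paper's proof of Presentation tacitly leans on them without having proved them yet. What your self-contained argument buys is logical independence from that later material and an explicit justification for why the product of transition matrices actually equals the channel's conditional word probability; what the paper's brevity buys is avoiding duplication, since Joint Unifilarity and Causal Shielding are stated and proved in their own right shortly afterward.
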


\begin{proof}
Recall that a \emph{causal} channel's output words do not depend on any inputs
occurring \emph{after} the output word. Therefore, the \eT\ must reproduce all
of a channel's conditional word probabilities of the form
$\Prob (\osym_{0:L}  | \ipast_{L} )$. As discussed above, an input history
induces a distribution $\tau$ over the \eT's causal states. So, we calculate
the word probabilities directly via repeated application of the \eT's symbol
transition matrices:
\begin{align}
\label{eq:ETConditionalWordDist}
\Prob (\osym_{0:L} & | \ipast_{L} ) \\
  & = \sum_{i,j,k,\cdots,l,m}
  \tau (i) T_{ij}^{(\osym_0|\isym_0)} T_{jk}^{(\osym_1|\isym_1)} \cdots
  T_{lm}^{(\osym_{L-1}|\isym_{L-1})}
  \nonumber
  ~.
\end{align}
\end{proof}

In fact, we can transduce a \emph{finite} input word even when the channel's
behavior depends upon arbitrarily long input histories. By simply starting
the \eT\ in its stationary distribution $\pi_\IS$, we have:
\begin{align*}
\Prob (\osym_{0:L} & | \isym_{0:L} ) \\
  & = \sum_{i,j,k,\cdots,l,m}
  \pi_\IS (i) T_{ij}^{(\osym_0|\isym_0)} T_{jk}^{(\osym_1|\isym_1)} \cdots
  T_{lm}^{(\osym_{L-1}|\isym_{L-1})}
  ~.
\end{align*}

In either case, we can multiply these conditional word probabilities by the
input's word probabilities to obtain joint word probabilities:
\begin{align*} 
\Prob ( \jsym_{0:L} ) = \Prob (\osym_{0:L} | \isym_{0:L} ) \Prob(\isym_{0:L})
  ~.
\end{align*}
Summing over input words then gives output word probabilities:
\begin{align*}
\Prob ( \osym_{0:L} ) = \sum_{\isym_{0:L} \in \IA^L} \Prob ( \jsym_{0:L} )
  ~.
\end{align*}

We can also start the \eT\ with other, arbitrary state distributions when
certain initial behavior is desired or if the current internal configuration of
the \eT\ is known. This can be very useful in practice, but the resulting
generated behavior is no longer guaranteed to be stationary or to match the
original channel's behavior. One use of arbitrary state distributions is
\emph{real-time} transduction of symbols, where a state distribution $\nu$ is
repeatedly updated each time-step after a single input symbol $\isym_t$ is
transduced to an output symbol $\osym_t$:
\begin{align*}
\Prob \big(\CS_{t+1}=\cs_i \big| \jsym_{t}, \CS_{t} \sim \nu \big)
  = \sum_{i,j} \nu (i) T_{ij}^{(\osym_t|\isym_t)}
  ~.
\end{align*}

\section{Optimality}

We now establish that the \eT\ is a channel's unique, maximally predictive,
minimal statistical complexity unifilar presentation. Other properties,
analogous to those of the \eM, are also developed.  Several proofs parallel
those in Refs.  \cite{Crut98d,Crut98d,Shal98a}, but are extended from \eMs\ to
the \eT.  For this initial development, we also adopt a caveat from there
concerning the use of infinite pasts and futures. For example, the
semi-infinite pasts' entropy $\H[\OPast]$ is typically infinite. And so, to
properly use such quantities, one first introduces finite-length chains (e.g.,
$\H[\OS_{0:L}]$) and at the end of an argument one takes infinite-length limits,
as appropriate. Here, as previously, we do not include these extra steps,
unless there is subtlety that requires attention using finite-length chains.

\begin{Prop}[Causal States Proxy the Past]
\label{proxy}
When conditioned on causal states, the future output given input is
independent of past input and past output: 
\begin{align*}
\Prob \big( \OFuture_0 \big| \IFuture_0, \JPast_0, \CS_0 \big)
  = \Prob \big( \OFuture_0 \big| \IFuture_0, \CS_0 \big)
  ~.
\end{align*}
\end{Prop}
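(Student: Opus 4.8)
The plan is to show that the causal state $\CS_0 = \epsilon(\JPast_0)$ is a sufficient statistic of the joint past for predicting future output given future input---which is essentially what the defining equivalence relation Eq.~(\ref{eq:ETEquivalence}) asserts. The argument has three movements: collapse the left-hand conditioning, invoke the equivalence relation to show the relevant conditional is constant on causal-state classes, and average over a class to identify that constant with the right-hand side.

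First I would observe that $\CS_0$ is a deterministic function of $\JPast_0$ via the $\epsilon$-map, so $\CS_0 = \epsilon(\JPast_0)$ almost surely. Conditioning on a quantity together with a deterministic function of it adds nothing, hence
\[
\Prob\big(\OFuture_0 \big| \IFuture_0, \JPast_0, \CS_0\big) = \Prob\big(\OFuture_0 \big| \IFuture_0, \JPast_0\big).
\]
This reduces the claim to showing that conditioning on the full joint past is equivalent, given the future input, to conditioning on the causal state alone. Next I would use the definition of $\sim_\epsilon$: by Eq.~(\ref{eq:ETEquivalence}), any two joint pasts $\jpast, \jpast'$ with $\epsilon(\jpast) = \epsilon(\jpast')$ satisfy $\Prob(\OFuture_0 | \IFuture_0, \JPast_0 = \jpast) = \Prob(\OFuture_0 | \IFuture_0, \JPast_0 = \jpast')$. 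Thus the map $\jpast \mapsto \Prob(\OFuture_0 | \IFuture_0, \JPast_0 = \jpast)$ factors through the $\epsilon$-map, depending on $\jpast$ only through its causal state $\cs = \epsilon(\jpast)$, and so defines a well-defined output morph $\Prob(\OFuture_0 | \IFuture_0, \CS_0 = \cs)$.

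To close the loop I would apply the law of total probability: writing $\Prob(\OFuture_0 | \IFuture_0, \CS_0 = \cs)$ as the average of $\Prob(\OFuture_0 | \IFuture_0, \JPast_0 = \jpast)$ over the class $\epsilon^{-1}(\cs)$, weighted by $\Prob(\JPast_0 = \jpast | \IFuture_0, \CS_0 = \cs)$, and noting every summand shares the common value established in the previous step, the average collapses to that value regardless of the weighting. Substituting $\cs = \epsilon(\JPast_0)$ and chaining the equalities then yields the claim.

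The main obstacle is a matter of care rather than depth: making the within-class averaging rigorous while conditioning on the (possibly infinite) future input $\IFuture_0$. One must confirm that the input future does not reintroduce a dependence on which past within the class is realized---but this is exactly what the equivalence relation guarantees, since it fixes the output morph across the entire class uniformly in $\IFuture_0$, so the constancy of the integrand is independent of the weighting. As flagged in the paper's optimality preamble, the cleanest treatment first introduces finite-length blocks $\OS_{0:L}$, $\IS_{0:L}$ so that the conditional probabilities are genuinely well defined, and passes to the limit $L \to \infty$ only at the end, after the averaging argument has been applied.
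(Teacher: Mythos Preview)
Your proof is correct and follows essentially the same approach as the paper: both arguments use (i) that $\CS_0 = \epsilon(\JPast_0)$ is a deterministic function of the joint past, so adding it to the conditioning on $\JPast_0$ is redundant, and (ii) that the defining equivalence relation makes the future morph depend on $\jpast$ only through its causal state. The paper states (ii) tersely as ``by construction,'' whereas you spell out the within-class averaging that underlies it; your version is more detailed but not a different route.
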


\begin{proof}
By construction, the causal states have the same future morphs as their corresponding pasts:
\begin{align*}
\Prob \big(\OFuture_0 \big| \IFuture_0, \JPast_0 \big)
	= \Prob \big(\OFuture_0 \big| \IFuture_0, \CS_0 \big) ~.
\end{align*}
Since the causal states are a function of the
past---$\CS_0 = \epsilon \left(\JPast_0\right)$---we also have that:
\begin{align*}
\Prob \big(\OFuture_0 \big| \IFuture_0, \JPast_0, \CS_0 \big)
  = \Prob \big(\OFuture_0 \big| \IFuture_0, \JPast_0  \big) ~.
\end{align*}
Combining these two equalities gives the result.
\end{proof}

In other words, when predicting a channel's future behavior from its past behavior, it suffices to use the causal states instead.

\begin{Prop}[Causal Shielding]
Past output $\OPast_0$ and future output $\OFuture_0$ given future input
$\IFuture_0$ are independent given the current causal state $\CS_0$:
\begin{align*}
\Prob \big( \OAll_0 \big| \IFuture_0, \CS_0 \big) 
  = \Prob \big( \OPast_0 \big| \IFuture_0, \CS_0 \big)
  \Prob \big( \OFuture_0 \big| \IFuture_0, \CS_0 \big)
  ~.
\end{align*}
\end{Prop}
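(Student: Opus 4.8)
The plan is to reduce the claimed conditional independence to a single conditional-probability identity and then discharge that identity using Proposition~\ref{proxy} (Causal States Proxy the Past). Writing $\OAll_0 = (\OPast_0, \OFuture_0)$ and applying the chain rule under the common conditioning variables $(\IFuture_0, \CS_0)$ gives
\[
\Prob(\OAll_0 | \IFuture_0, \CS_0) = \Prob(\OFuture_0 | \OPast_0, \IFuture_0, \CS_0)\, \Prob(\OPast_0 | \IFuture_0, \CS_0).
\]
Comparing with the desired factorization, it therefore suffices to prove the single identity
\[
\Prob(\OFuture_0 | \OPast_0, \IFuture_0, \CS_0) = \Prob(\OFuture_0 | \IFuture_0, \CS_0),
\]
i.e., that once we know the causal state and the future input, the past output carries no additional information about the future output.

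To establish this identity, the idea is to reintroduce the past input by marginalization and then invoke the Proxy proposition. First I would expand over the past input:
\[
\Prob(\OFuture_0 | \OPast_0, \IFuture_0, \CS_0) = \sum_{\IPast_0} \Prob(\OFuture_0 | \IPast_0, \OPast_0, \IFuture_0, \CS_0)\, \Prob(\IPast_0 | \OPast_0, \IFuture_0, \CS_0).
\]
Since $(\IPast_0, \OPast_0) = \JPast_0$ and $\CS_0 = \epsilon(\JPast_0)$ is a function of that joint past, the first factor is exactly $\Prob(\OFuture_0 | \JPast_0, \IFuture_0, \CS_0)$, which by Proposition~\ref{proxy} equals $\Prob(\OFuture_0 | \IFuture_0, \CS_0)$ and, crucially, no longer depends on $\IPast_0$. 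I can then pull it out of the sum, leaving $\sum_{\IPast_0} \Prob(\IPast_0 | \OPast_0, \IFuture_0, \CS_0) = 1$, which yields the identity. Substituting back into the chain-rule decomposition produces the stated product form.

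The main obstacle---really the only delicate point---is the passage through $\JPast_0$: the Proxy proposition conditions on the \emph{full} joint past, whereas the shielding statement conditions only on the \emph{output} past together with the causal state. Marginalizing over $\IPast_0$ is precisely what bridges this gap, and the step works only because the Proxy result shows the conditioned future-output morph is constant in $\IPast_0$. A secondary, purely technical, subtlety is that $\IPast_0$ and $\OPast_0$ are semi-infinite, so the displayed sums are really integrals over past sequences; following the caveat opening this section I would carry the argument out for finite-length chains $\OS_{-K:0}$, $\OS_{0:L}$, and $\IS_{-K:0}$ and take $K, L \to \infty$ at the end, the limits posing no new difficulty since the Proxy identity holds at every finite truncation.
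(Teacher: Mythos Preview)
Your proof is correct and follows the same approach as the paper: apply the chain rule to factor $\Prob(\OAll_0 \mid \IFuture_0, \CS_0)$ and then invoke Proposition~\ref{proxy} to drop $\OPast_0$ from the conditioning in the second factor. The paper compresses this into two lines, asserting the second equality ``follows from applying Prop.~\ref{proxy}'' directly; you have been more careful by noting that Proposition~\ref{proxy} conditions on the \emph{joint} past $\JPast_0$ rather than just $\OPast_0$, and by supplying the marginalization over $\IPast_0$ that bridges the gap---a step the paper leaves implicit.
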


\begin{proof}
We directly calculate:
\begin{align*}
\Prob \big( \OAll_0 \big| \IFuture_0, \CS_0 \big)
  &= \Prob \big( \OPast_0 \big| \IFuture_0, \CS_0 \big)
  \Prob \big( \OFuture_0 \big| \IFuture_0, \OPast, \CS_0 \big)\\
  & = \Prob \big( \OPast_0 \big| \IFuture_0, \CS_0 \big)
  \Prob \big( \OFuture_0 | \IFuture_0, \CS_0 \big)
  ~.
\end{align*}
Where the second equality follows from applying Prop. \ref{proxy} to the second
factor.
\end{proof}

In the following, depending on use, we refer to either of the previous
propositions as \emph{causal shielding}.

\begin{Prop}[Joint Unifilarity]
The current causal state $\CS_0$ and current input-output symbol pair
$\JS_0$ uniquely determine the next causal state. In this case:
\begin{align*}
\H[\CS_1 | \JS_0,\CS_0] = 0 ~.
\end{align*}
\end{Prop}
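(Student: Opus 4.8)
The plan is to recast the vanishing conditional entropy $\H[\CS_1 \mid \JS_0, \CS_0] = 0$ as the statement that $\CS_1$ is a \emph{deterministic} function of the pair $(\CS_0, \JS_0)$, since a conditional entropy vanishes exactly when the conditioned variable is almost surely fixed by the conditioning variables. Now $\CS_1 = \epsilon(\JPast_1)$ with $\JPast_1 = \JPast_0 \JS_0$ (the current joint symbol appended to the current joint past), while $\CS_0 = \epsilon(\JPast_0)$. So it suffices to show that the $\epsilon$-map \emph{respects symbol appending}: whenever two joint pasts satisfy $\epsilon(\jpast) = \epsilon(\jpast^\prime)$, appending the same joint symbol $\jsym = (\isym, \osym)$ keeps them causally equivalent, $\epsilon(\jpast\,\jsym) = \epsilon(\jpast^\prime\,\jsym)$. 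Granting this, $\epsilon(\jpast\,\jsym)$ descends to a well-defined function of $\epsilon(\jpast)$ and $\jsym$; equivalently $\CS_1$ is fixed once $\CS_0$ and $\JS_0$ are known, which is precisely the claim.

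The crux is a factorization of the future morph that peels off the time-$0$ symbol. Splitting $\OFuture_0 = \OS_0 \OFuture_1$ and $\IFuture_0 = \IS_0 \IFuture_1$ and applying the chain rule gives
\begin{align*}
\Prob\big(\OFuture_0 \big| \IFuture_0, \JPast_0 = \jpast\big)
  &= \Prob\big(\OS_0 \big| \IS_0, \JPast_0 = \jpast\big) \\
  &\quad \cdot \Prob\big(\OFuture_1 \big| \IFuture_1, \JPast_1 = \jpast\,\jsym_0\big) ~,
\end{align*}
where $\jsym_0 = (\IS_0, \OS_0)$. Causality is what lets me drop the dependence of the time-$0$ output on the future inputs $\IFuture_1$ in the first factor, and the identification $(\JPast_0, \JS_0) = \JPast_1$ turns the second factor into the future morph of the \emph{extended} past. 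By stationarity this time-$1$ morph equals the morph that defines the causal state of $\jpast\,\jsym_0$ via Eq. (\ref{eq:ETEquivalence}), so the factorization directly ties the causal state of $\jpast$ to that of $\jpast\,\jsym_0$.

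The conclusion then follows by division. If $\epsilon(\jpast) = \epsilon(\jpast^\prime)$, the left-hand morphs coincide for $\jpast$ and $\jpast^\prime$; marginalizing over $\OFuture_1$ and invoking causality shows the first factors $\Prob(\OS_0 = \osym \mid \IS_0 = \isym, \JPast_0)$ agree as well. Hence, wherever this common next-symbol probability is nonzero, the second factors must also agree, giving identical future morphs for $\jpast\,\jsym$ and $\jpast^\prime\,\jsym$ and therefore $\epsilon(\jpast\,\jsym) = \epsilon(\jpast^\prime\,\jsym)$. The step I expect to be the main obstacle is the boundary case where the next-symbol probability vanishes: there the conditional morph of the extended past is undefined, but such a symbol $\jsym$ labels a transition of probability zero out of the state $\cs = \epsilon(\jpast)$ and so contributes nothing to $\H[\CS_1 \mid \JS_0, \CS_0]$ and may be discarded. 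As in the surrounding optimality arguments, I would carry the entire computation with finite output blocks $\OS_{0:L}$, where all conditionals are genuine finite-dimensional distributions, and pass to $L \to \infty$ only at the end.
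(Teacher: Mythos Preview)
Your proposal is correct and follows essentially the same route as the paper: both arguments reduce the claim to showing that causal equivalence is preserved under appending a joint symbol, factor the future morph into a single-symbol piece times the morph of the extended past, use causal equivalence (plus marginalization) to match the single-symbol factors, divide when that factor is positive, and treat the vanishing case as a forbidden transition. Your explicit invocation of causality to strip $\IFuture_1$ from the single-symbol factor and your remark about carrying finite blocks $\OS_{0:L}$ are minor cosmetic differences, not a different strategy.
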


\begin{proof}
If two pasts are causally equivalent, then either (i) appending a new
symbol pair $\jsym$ to both pasts results in two \emph{new} pasts that
are also causally equivalent or (ii) such a symbol pair is never observed
when in $\CS_0$. We
must show that the two new pasts have the same future morph:
\begin{align*}
& \jpast \sim_\epsilon \jpast^\prime \\
  & \implies \Prob \big(\OFuture_{1} \big| \IFuture_{1},\jpast(a,b) \big)
   = \Prob \big(\OFuture_{1} \big| \IFuture_{1},\jpast'(a,b) \big)
  ~,
\end{align*}
where we have $a \in \IA$ and $b \in \OA$ and
$\jpast (a,b) = ( \ipast a, \opast b )$, and the futures $\OFuture_{1}$ and $\IFuture_{1}$ denote those immediately
following the associated conditioning pasts $\jpast(a,b)$ and $\jpast^\prime(a,b)$, respectively.
Or, we must show that the input-output pair $\jsym$ is forbidden.

First, let $\jpast \sim_\epsilon \jpast'$. Since causal equivalence applies for \emph{any} joint future, it applies to the particular future beginning with symbol pair $(a,b)$:
\begin{align*}
\Prob \big(b\OFuture_{1} \big| a\IFuture_{1}, \jpast \big)
  = \Prob \big(b\OFuture_{1} \big| a\IFuture_{1}, \jpast^\prime \big)
  ~.
\end{align*}
Factoring:
\begin{align*}
\Prob \big(b\OFuture_{1} \big| \cdot \big)
	= \Prob \big(\OFuture_{1} \big| \OS_0 = b, \cdot \big)
	\Prob \big(\OS_0 = b \big| \cdot \big) 
\end{align*}
gives:
\begin{align*}
& \Prob \big(\OFuture_{1} \big| \OS_0 = b, a\IFuture_{1}, \jpast \big)
	\Prob \big(\OS_0 = b \big|  a\IFuture_{1}, \jpast \big) \\
  & \quad = \Prob \big(\OFuture_{1} \big| \OS_0 = b, a\IFuture_{1}, \jpast^\prime \big) 
	  \Prob \big(\OS_0 = b \big| a\IFuture_{1}, \jpast^\prime \big)
  ~.
\end{align*}
The second factors on both sides are equal by causal equivalence. So, there are
two cases: These factors either vanish or they do not. If they are positive,
then we have:
\begin{align*}
\Prob \big(\OFuture_{1} \big| \OS_0 = b, a\IFuture_{1}, \jpast \big)
	= \Prob \big(\OFuture_{1} \big| \OS_0 = b, a\IFuture_{1}, \jpast^\prime \big)
	~.
\end{align*}
Rewriting the conditional variables with the symbol pair $(a,b)$ attached to
the joint past then gives the first part of the result:
\begin{align*}
\Prob \big(\OFuture_{1} \big| \IFuture_{1},\jpast (a,b) \big)
  = \Prob \big(\OFuture_{1} \big| \IFuture_{1},\jpast^\prime(a,b) \big)
  ~.
\end{align*}
In the other case, when the factors vanish, we have:
\begin{align*}
\Prob \big(\OS_0 = b \big|  a\IFuture_{1}, \jpast \big)
   = \Prob \big(\OS_0 = b \big| a\IFuture_{1}, \jpast^\prime \big) 
   = 0 ~.
\end{align*}
This implies that:
\begin{align*}
  \Prob \big(\OS_0 = b \big| \IS_0=a, \jpast \big)
   & = \Prob \big(\OS_0 = b \big| \IS_0 = a, \jpast^\prime \big) \\
   & = 0
  ~.
\end{align*}
In other words, $\OS_0=b$ is never observed following either past,
given $\IS_0=a$. That is, $(a,b)$ is forbidden.

It then follows that:
\begin{align*}
\H[S_1|S_0,\JS_0] = 0 ~,
\end{align*}
which is equivalent to joint unifilarity when there is a finite number of causal states.
\end{proof}

Unifilarity guarantees that once we know the process is in a particular causal
state---we are ``synchronized'' \cite{Jame10a}---we do not lose synchronization
over time. This is an important property when using causal states to simulate
or predict a system's behavior. Using presentations that are nonunifilar,
typically it is necessary to keep track of a \emph{distribution} over states.

Unifilarity is also a useful property to have when attempting to infer an \eT\
from data. Inference of nonunifilar transducers can be challenging, partly
due to the existence of multiple possible state paths given a particular start
state. Unifilar transducers avoid this problem, effectively reducing the
difficulty to that of inferring a Markov chain from data \cite{Stre13a}.
Finally, unifilarity plays a key role, as a sequel shows, in calculating
channel information quantities.

The next theorem shows that \eTs\ are input-dependent hidden Markov models.

\begin{Prop}[Markovity]
A channel's causal states satisfy the conditional Markov property:
\begin{align*}
\Prob \big(\CS_t \big| \IS_{t-1}, \SPast_t \big)
  = \Prob(\CS_t|\IS_{t-1},\CS_{t-1}) ~.
\end{align*}
\end{Prop}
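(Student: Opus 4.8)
The plan is to reduce the claim to two facts already established above: Joint Unifilarity and the Causal States Proxy the Past property (Proposition~\ref{proxy}). First I would invoke Joint Unifilarity to write the next causal state as a deterministic function of the previous state and the previous joint symbol, $\CS_t = f(\CS_{t-1}, \IS_{t-1}, \OS_{t-1})$. Since the statement already conditions on $\IS_{t-1}$ and since $\CS_{t-1}$ is the last entry of $\SPast_t$, the only residual randomness in $\CS_t$ enters through the previous output $\OS_{t-1}$. Decomposing by the law of total probability over that output gives
\begin{align*}
\Prob\big(\CS_t \big| \IS_{t-1}, \SPast_t\big)
  = \sum_{\osym} \one\big[\CS_t = f(\CS_{t-1}, \IS_{t-1}, \osym)\big]\,
    \Prob\big(\OS_{t-1} = \osym \big| \IS_{t-1}, \SPast_t\big)
  ~,
\end{align*}
so it remains only to show that the output factor depends on the past through $\CS_{t-1}$ alone.

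To handle that factor I would apply Proposition~\ref{proxy}, shifted to time $t-1$ and restricted to the first output symbol. Because the full future-output morph given future input and $\JPast_{t-1}$ equals the morph given $\CS_{t-1}$ alone, the equality descends to the single-symbol marginal, yielding $\Prob(\OS_{t-1} | \IS_{t-1}, \JPast_{t-1}, \CS_{t-1}) = \Prob(\OS_{t-1} | \IS_{t-1}, \CS_{t-1})$. Here I also use that $\CS_{t-1} = \epsilon(\JPast_{t-1})$ is itself a function of the joint past, so the conditioning on $\CS_{t-1}$ on the left is redundant and $\Prob(\OS_{t-1} | \IS_{t-1}, \JPast_{t-1}) = \Prob(\OS_{t-1} | \IS_{t-1}, \CS_{t-1})$.

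The main obstacle is the final conditioning step, passing from the joint past $\JPast_{t-1}$ to the coarser state history $\SPast_t = \ldots \CS_{t-2}\CS_{t-1}$. I would note that each entry of $\SPast_t$ is determined by the corresponding prefix of $\JPast_{t-1}$, so $\SPast_t$ is a function of $\JPast_{t-1}$, and on any fixed value of $\SPast_t$ the last coordinate $\CS_{t-1}$ is fixed. Since the output probability $\Prob(\OS_{t-1} | \IS_{t-1}, \CS_{t-1})$ is already measurable with respect to $\CS_{t-1}$, hence with respect to $\SPast_t$, conditioning down from $\JPast_{t-1}$ to the coarser $\SPast_t$ leaves it unchanged, so $\Prob(\OS_{t-1} | \IS_{t-1}, \SPast_t) = \Prob(\OS_{t-1} | \IS_{t-1}, \CS_{t-1})$. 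Substituting this back and running the total-probability decomposition in reverse collapses the sum to $\Prob(\CS_t | \IS_{t-1}, \CS_{t-1})$, which is precisely the claimed conditional Markov property. I expect the only genuine care needed is this measure-theoretic coarsening argument, which the paper's convention of suppressing such details lets me state briefly.
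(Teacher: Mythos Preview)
Your proposal is correct and follows essentially the same route as the paper's proof: both invoke Joint Unifilarity to convert the causal-state transition into an event on $\OS_{t-1}$ (the paper packages this as the set $\mathcal{Z}$ of outputs inducing the transition, you write it as the sum over $\one[\CS_t = f(\CS_{t-1},\IS_{t-1},\osym)]$), and then both drop the extra history using the Proxy/causal-shielding property together with the observation that the state history is a function of the joint past. Your explicit tower-property coarsening argument is just a more careful unpacking of what the paper states in one sentence.
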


\begin{proof}
Since the causal-state transitions are unifilar, there is a well defined set
of output symbols $\mathcal{Z} \subseteq \OA$ that causes a transition from
state $\cs_j$ to state $\cs_k$. We therefore have:
\begin{align*}
\Prob \big(\CS_t=\cs_k & \big| \IS_{t-1}, \CS_{t-1}=\cs_j, \SPast_{t-1} \big) \\
  & = \Prob \big(\OS_{t-1} \in \mathcal{Z} \big| \IS_{t-1}, \CS_{t-1}=\cs_j ,
  \SPast_{t-1} \big)
  ~.
\end{align*}
Causal shielding applies to finite futures as well as infinite. This, combined
with the observation that $\SPast_{t-1}$ is purely a function of the past, allows
us to use $\CS_{t-1}$ to causally shield $\OS_{t-1}$ from $\SPast_{t-1}$, giving:
\begin{align*}
\Prob \big(\OS_{t-1} \in \mathcal{Z} & \big|\IS_{t-1}, \CS_{t-1}=\cs_j,
\SPast_{t-1} \big) \\
	& = \Prob(\OS_{t-1} \in \mathcal{Z}|\IS_{t-1}, \CS_{t-1}=\cs_j) \\
	& = \Prob(\CS_t=\cs_k|\IS_{t-1}, \CS_{t-1}=\cs_j)
  ~.
\end{align*}
The final equality is again possible due to unifilarity.
\end{proof}

The following theorem shows that the causal states store as much information as
possible (from the past) about a channel's future behavior---a desirable
property for any predictive model.

\begin{definition}
The \emph{prescience} of a set $\ASSet$ of rival states---equivalence classes
of an alternative partition of pasts---reflects how well the rival states predict a channel's future behavior. Quantitatively, this is monitored by the amount of information
they share with future output, given future input:
\begin{align*}
\I \big[ \AS ~; \OFuture \big| \IFuture \big] ~,
\end{align*}
where $\AS$ is the associated rival-state random variable.
\end{definition}

Note that it is sometimes simpler to prove statements about conditional entropy
than it is for mutual information. Due to this, we will transform statements
about prescience into statements about uncertainty in prediction in several
proofs that follow. Specifically, we will make use of the identity:
\begin{align*}
\I \big[ \AS_0 ~; \OFuture_0 \big| \IFuture_0 \big]
  & = \lim_{L \to \infty} \I \big[ \AS_0 ~; \OS_{0:L} \big| \IFuture_0 \big] \\
  & = \lim_{L \to \infty} \left( \H \big[\OS_{0:L} \big| \IFuture_0 \big]
  - \H \big[\OS_{0:L} \big| \IFuture_0, \AS_0 \big] \right)
  ,
\end{align*}
where $\H \big[\OS_{0:L} \big| \IFuture_0, \AS_0 \big]$ is the finite-future
\emph{prediction uncertainty}. Note that the infinite-future prediction
uncertainty $\H \big[\OFuture_0 \big| \IFuture_0, \AS_0 \big]$ typically will be
infinite, but rewriting the prescience in terms of the limit of finite-future
prediction uncertainty allows us to continue to work with finite
quantities.

\begin{theorem}[Maximal Prescience]
Among all rival partitions $\ASSet$ of joint pasts, the causal states have
maximal prescience and they are as prescient as pasts:
\begin{align*}
\I \big[\CS ~; \OFuture \big| \IFuture \big]
     & = \I \big[\JPast ~; \OFuture \big| \IFuture \big]  \\
	 & \geq \I[\AS ~; \OFuture| \IFuture ] ~.
\end{align*}
\end{theorem}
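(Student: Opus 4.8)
The plan is to prove the two assertions separately, first reducing both to finite-future quantities via the prescience identity $\I[\AS_0; \OFuture_0 | \IFuture_0] = \lim_{L\to\infty} \I[\AS_0; \OS_{0:L} | \IFuture_0]$ recorded just above the theorem, so that every mutual information is a limit of well-defined finite-length quantities. The whole argument is a conditional-information mirror of the standard $\epsilon$-machine maximal-prediction proof, leaning on two facts: the causal state is a deterministic function of the joint past, $\CS_0 = \epsilon(\JPast_0)$, and causal shielding in the form of Prop.~\ref{proxy}, which states $\Prob(\OFuture_0 | \IFuture_0, \JPast_0, \CS_0) = \Prob(\OFuture_0 | \IFuture_0, \CS_0)$, i.e.\ that $\OFuture_0$ is conditionally independent of $\JPast_0$ given $(\IFuture_0, \CS_0)$.

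For the equality $\I[\CS; \OFuture | \IFuture] = \I[\JPast; \OFuture | \IFuture]$, I would compute $\I[\JPast, \CS; \OFuture | \IFuture]$ two ways with the chain rule for conditional mutual information. Because $\CS$ is a function of $\JPast$, adjoining it is harmless: $\I[\JPast, \CS; \OFuture | \IFuture] = \I[\JPast; \OFuture | \IFuture]$. Expanding in the opposite order gives $\I[\JPast, \CS; \OFuture | \IFuture] = \I[\CS; \OFuture | \IFuture] + \I[\JPast; \OFuture | \IFuture, \CS]$, and the trailing term vanishes since causal shielding is exactly the statement $\I[\JPast; \OFuture | \IFuture, \CS] = 0$. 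Equating the two expressions yields the equality. Each step is performed at finite $L$, using that causal shielding holds for finite futures as well (as already invoked in the Markovity proof), and the limit is taken at the end.

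For the inequality, let $\ASSet$ be any rival partition of joint pasts, so $\AS_0 = \eta(\JPast_0)$ is again a function of the past. The same chain-rule manipulation gives $\I[\JPast; \OFuture | \IFuture] = \I[\JPast, \AS; \OFuture | \IFuture] = \I[\AS; \OFuture | \IFuture] + \I[\JPast; \OFuture | \IFuture, \AS]$, and since the last term is a non-negative conditional mutual information we obtain $\I[\JPast; \OFuture | \IFuture] \geq \I[\AS; \OFuture | \IFuture]$. Chaining this with the equality already established gives $\I[\CS; \OFuture | \IFuture] = \I[\JPast; \OFuture | \IFuture] \geq \I[\AS; \OFuture | \IFuture]$, which is precisely the claim; notably this needs nothing about $\AS$ beyond its being a function of the joint past.

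The main obstacle is not the algebra but the careful handling of the semi-infinite past and future. I expect the delicate points to be: confirming that the limits defining prescience exist and that the chain-rule identities survive the $L \to \infty$ limit, where monotonicity of $\I[\AS_0; \OS_{0:L} | \IFuture_0]$ in $L$ is the natural lever; and verifying that conditioning on the full joint past $\JPast$ is well-defined so that $\CS_0 = \epsilon(\JPast_0)$ and the shielding identity may legitimately be applied \emph{inside} the mutual informations rather than merely on finite cylinders. These are exactly the finite-length-then-limit caveats flagged at the start of the Optimality section, so I would spell them out only at the junctures where interchanging the limit with the function-of-the-past step or the shielding step is not immediate.
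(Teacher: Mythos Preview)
Your proposal is correct and essentially matches the paper's proof: both rest on the two facts that any rival state is a function of the joint past and that the causal state captures all predictive information (so conditioning further on $\JPast$ given $\CS$ adds nothing), with the finite-$L$-then-limit bookkeeping handled exactly as you describe. The only cosmetic difference is packaging: the paper first converts the mutual-information claim into the equivalent conditional-entropy statement $\H[\OS_{0:L}|\IFuture_0,\CS_0]=\H[\OS_{0:L}|\IFuture_0,\JPast_0]\le\H[\OS_{0:L}|\IFuture_0,\AS_0]$ and cites the Data Processing Inequality, whereas you stay in mutual-information form and use the chain rule plus nonnegativity of $\I[\JPast;\OFuture\mid\IFuture,\AS]$---these are the same inequality in two notations.
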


\begin{proof}
We will prove the equivalent statement that the causal states 
\emph{minimize} finite-future prediction \emph{uncertainty} for 
futures of any length $L$ and have the same finite-future prediction
uncertainty as pasts; i.e., that for all $L$:
\begin{align*}
\H \big[\OS_{0:L} \big| \IFuture_0, \CS_0 \big]
& = \H \big[\OS_{0:L} \big| \IFuture_0, \JPast_0 \big]  \\
 & \leq \H \big[\OS_{0:L} \big| \IFuture_0, \AS_0 \big]
 ~,
\end{align*}
Like the causal states, rival states---equivalence classes of an alternative
partition $\ASSet$ of pasts---are a function of the past:
\begin{align*}
\AS = \eta \big( \JPast \big)
  ~.
\end{align*}
By the Data Processing Inequality \cite{Cove06a}, we have:
\begin{align*}
\H \big[\OS_{0:L} \big| \IFuture_0, \JPast_0 \big]
	  \leq \H \big[\OS_{0:L} \big| \IFuture_0, \AS_0 \big] ~.
\end{align*}
The causal states share future morphs with their
corresponding pasts. By simple marginalization, the same is true for 
finite-future morphs:
\begin{align*}
\Prob \big(\OS_{0:L} & \big| \IFuture_0, \CS_0 \big)
  = \Prob \big(\OS_{0:L} \big| \IFuture_0, \JPast_0  \big) \\
  & \implies \H \big[\OS_{0:L} \big| \IFuture_0, \CS_0 \big]
  = \H \big[\OS_{0:L} \big| \IFuture_0, \JPast_0 \big]
  ~.
\end{align*}
\end{proof}

Note that this proof also shows that the causal states are maximally prescient for all length-$L$ futures: $\I [\CS ~; \OS_{0:L} | \IS_{0:L} ] = \I [\JPast_0 ~; \OS_{0:L} \big| \IS_{0:L}]$. 

The causal equivalence relation can also be applied directly to a
channel that anticipates its future inputs, but the resulting \eT\ has outputs
that depend not only upon the current state and input symbol, but some set of
\emph{future} input symbols. If the set is finite, then one uses the previous
construction that transforms finite anticipation into additional transducer memory (statistical complexity). In this case, the causal states still capture all of the
information from the past needed for prediction. That is, they have maximal
prescience. Any additional future input dependence, though, must be encoded in
the machine's transitions.

\begin{definition}
A \emph{prescient rival} is an indexed set $\PSSet$ of states (with elements
$\ps_i$ and random variable $\PS$) that is as predictive as any past:
\begin{align*}
\I \big[\PS ~; \OFuture \big| \IFuture \big]
     & = \I \big[\JPast ~; \OFuture \big| \IFuture \big]
  ~.
\end{align*}
\end{definition}

\begin{Lem}[Refinement]
The partition of a prescient rival $\PSSet$ is a refinement (almost everywhere)
of the causal-state partition of the joint input-output pasts.
\label{lem:Refine}
\end{Lem}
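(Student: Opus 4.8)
The plan is to reduce the refinement claim to a statement about future morphs: I will show that any two joint pasts assigned to the same prescient-rival state must induce the same conditional future-output distribution given future input, which is exactly the causal-equivalence condition of Eq.~(\ref{eq:ETEquivalence}). This forces each prescient-rival cell to lie inside a single causal state, so the partition $\PSSet$ refines the causal-state partition $\CausalStateSet$ (almost everywhere). The starting observation is that, like the causal states, a prescient rival is a function of the joint past, $\PS = \eta(\JPast)$ for the partition map $\eta$ that defines $\PSSet$.

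The key step is to convert prescience into \emph{sufficiency}. Since $\PS$ is determined by $\JPast$, we have $\I[\JPast; \OFuture | \IFuture] = \I[(\JPast,\PS); \OFuture | \IFuture]$, and the chain rule for conditional mutual information gives
\[
\I[\JPast; \OFuture | \IFuture]
  = \I[\PS; \OFuture | \IFuture]
  + \I[\JPast; \OFuture | \PS, \IFuture].
\]
The defining equality of a prescient rival, $\I[\PS; \OFuture | \IFuture] = \I[\JPast; \OFuture | \IFuture]$, then forces $\I[\JPast; \OFuture | \PS, \IFuture] = 0$. In other words, conditioned on the prescient-rival state and the future input, the full joint past carries no further information about future output, so $\Prob(\OFuture | \IFuture, \JPast) = \Prob(\OFuture | \IFuture, \PS)$ almost everywhere.

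The main obstacle is the infinite-future caveat noted earlier: the mutual informations above need not be finite, so I would execute the chain-rule step at finite horizon, writing $\I[\JPast; \OS_{0:L} | \IFuture] = \I[\PS; \OS_{0:L} | \IFuture] + \I[\JPast; \OS_{0:L} | \PS, \IFuture]$. The last term is nonnegative and nondecreasing in $L$; since the first two sequences each increase to their infinite-future values and those limits coincide by prescience, the limit of the last term is $0$, whence $\I[\JPast; \OS_{0:L} | \PS, \IFuture] = 0$ for every $L$. This yields $\Prob(\OS_{0:L} | \IFuture, \JPast) = \Prob(\OS_{0:L} | \IFuture, \PS)$ a.e., and taking $L \to \infty$ upgrades it to the full-future morph equality. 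Finally, if $\jpast$ and $\jpast^{\prime}$ share a prescient-rival state $\ps$, this equality gives $\Prob(\OFuture | \IFuture, \JPast = \jpast) = \Prob(\OFuture | \IFuture, \PS = \ps) = \Prob(\OFuture | \IFuture, \JPast = \jpast^{\prime})$, i.e. $\jpast \sim_\epsilon \jpast^{\prime}$, so $\epsilon(\jpast) = \epsilon(\jpast^{\prime})$. Each $\ps$-cell therefore sits within one causal state, and since this holds almost everywhere, $\PSSet$ refines $\CausalStateSet$ a.e.
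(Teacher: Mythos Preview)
Your proof is correct and takes a genuinely different route from the paper. The paper argues by writing each rival morph $\Prob(\OS_{0:L}\mid\ifuture,\ps_k)$ as a convex mixture $\sum_j \Prob(\cs_j\mid\ps_k)\,\Prob(\OS_{0:L}\mid\ifuture,\cs_j)$ of causal-state morphs, applies concavity of entropy to obtain $\H[\OS_{0:L}\mid\IFuture,\PS]\geq \H[\OS_{0:L}\mid\IFuture,\CS]$, and then uses prescience to force equality in Jensen, which pins $\Prob(\cs_j\mid\ps_k)$ to a single $j$. You instead use the chain rule $\I[\JPast;\OS_{0:L}\mid\IFuture]=\I[\PS;\OS_{0:L}\mid\IFuture]+\I[\JPast;\OS_{0:L}\mid\PS,\IFuture]$ together with prescience to kill the residual term, obtaining the sufficiency statement $\Prob(\OFuture\mid\IFuture,\JPast)=\Prob(\OFuture\mid\IFuture,\PS)$ a.e.\ directly. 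Your argument is cleaner and never needs to invoke the causal states until the last line; the paper's argument, by contrast, exposes the mixture structure and pinpoints exactly where strict concavity does the work. Both rely on the same finite-$L$ limiting convention the paper flags, and your monotonicity observation on $\I[\JPast;\OS_{0:L}\mid\PS,\IFuture]$ is the right way to handle that caveat.
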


\begin{proof}
Since the causal states' future morphs include every possible future morph of a
channel, we can always express a prescient rival's future morph as a (convex)
combination of the causal states' future morphs. This allows us to rewrite the
entropy over a prescient rival (finite-length) future morph as:
\begin{align}
\H \big[ \OS_{0:L} \big| \ifuture, & \ps_k \big]
	= \H \big[ \Prob \big(\OS_{0:L} \big| \ifuture, \ps_k \big) \big]
	\nonumber \\
	& = \H \left[ \sum_j
		\Prob \big(\OS_{0:L} \big| \ifuture, \cs_j \big) \Prob(\cs_j | \ps_k)
		\right]
	~. 
\label{Hconvex1}
\end{align}
Since entropy is convex, we also have:
\begin{align}
\H & \left[ \sum_j
	\Prob \big(\OS_{0:L} \big| \ifuture, \cs_j \big)
	\Prob ( \cs_j | \ps_k) \right]
  \nonumber \\
  & \quad\quad \geq \sum_j \Prob(\cs_j | \ps_k)
  \H \big[\OS_{0:L} \big| \ifuture, \cs_j \big]
  ~ .
\label{Hconvex2}
\end{align}
Therefore:
\begin{align*}
\H \big[\OS_{0:L} \big| \IFuture_0,\PS_0 \big]
  & = \sum_k \Prob(\ps_k) \H \big[\OS_{0:L} \big| \IFuture_0,\ps_k \big] \\
  & \geq \sum_k \Prob(\ps_k) \sum_j
  	\Prob(\cs_j | \ps_k) \H \big[\OS_{0:L}  \big| \IFuture_0, \cs_j \big] \\
  & = \sum_{j,k} \Prob(\cs_j,\ps_k) \H \big[\OS_{0:L} \big|\IFuture_0,\cs_j \big] \\
  & = \sum_j \Prob(\cs_j) \H \big[\OS_{0:L}  \big| \IFuture_0, \cs_j \big] \\
  & = \H \big[\OS_{0:L}  \big| \IFuture_0, \CS_0 \big]
  ~ ,
\end{align*}
where the inequality follows from Eqs. (\ref{Hconvex1}) and (\ref{Hconvex2}).
Since the rival states $\PS$ are prescient, we know that equality must be
attained in this inequality for each $L$. Equality is only possible when
$\Prob(\cs_j | \ps_k) = 1$ for exactly one value of $j$ and vanishes for
every other $j$. That is, if a rival state is prescient, it is contained
entirely within a single causal state, aside from a set of measure zero.
Thus, the partition of the prescient rival states is a refinement of the
causal-state partition almost everywhere.
\end{proof}

\begin{theorem}[Minimality]
For any given input process $\IS$, causal states have the minimal conditional statistical complexity among all
prescient rival partitions $\PS$:
\begin{align*}
C_{\IS}(\CS) \leq C_{\IS} \big(\PS \big) ~.
\end{align*}
\end{theorem}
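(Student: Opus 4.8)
The plan is to reduce the statement to a standard entropy inequality for deterministic functions of a random variable, using the Refinement Lemma as the essential structural input. First I would unwind the definitions: the conditional statistical complexity is the Shannon entropy of the input-induced stationary state distribution, so that $C_{\IS}(\CS) = \H[\CS]$ and $C_{\IS}(\PS) = \H[\PS]$, where both entropies are evaluated under the measure $\pi_\IS$ (and its prescient-rival analogue) that the input process $\IS$ induces over states. With this reduction in hand, the theorem becomes exactly the assertion that $\H[\CS] \leq \H[\PS]$.

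Next I would invoke Lemma \ref{lem:Refine}. It shows that every prescient rival state $\ps_k$ sits, up to a set of measure zero, inside a single causal state $\cs_j$; equivalently, $\Prob(\cs_j \mid \ps_k) \in \{0,1\}$ almost everywhere. This says precisely that the causal state is a deterministic function of the prescient rival state, $\CS = f(\PS)$, holding $\pi_\IS$-almost everywhere. Consequently $\H[\CS \mid \PS] = 0$, since knowing $\PS$ pins down $\CS$ on a full-measure set.

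The conclusion then follows from the chain rule applied in two directions. Expanding the joint entropy one way gives $\H[\PS, \CS] = \H[\PS] + \H[\CS \mid \PS] = \H[\PS]$; expanding it the other way gives $\H[\PS, \CS] = \H[\CS] + \H[\PS \mid \CS] \geq \H[\CS]$, because conditional entropy is nonnegative. Combining the two yields $\H[\PS] \geq \H[\CS]$, which is the claimed inequality $C_{\IS}(\CS) \leq C_{\IS}(\PS)$.

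The main obstacle is not the algebra but ensuring the ``almost everywhere'' refinement is harmless when passing to entropies. I would be careful that the function relationship $\CS = f(\PS)$ holds under the same input-dependent measure $\pi_\IS$ used to define $C_{\IS}$---this is guaranteed because Lemma \ref{lem:Refine} is established for a fixed input process via the prescience condition---and that the measure-zero exceptional set contributes nothing to $\H[\CS \mid \PS]$, so the equality $\H[\CS \mid \PS] = 0$ is exact rather than approximate. Once that bookkeeping is settled, no further estimates are required.
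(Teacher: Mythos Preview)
Your proposal is correct and follows essentially the same approach as the paper: both invoke the Refinement Lemma to obtain a function $f$ with $\CS = f(\PS)$ almost everywhere, and then conclude $\H_\IS[\CS] \leq \H_\IS[\PS]$. The only cosmetic difference is that the paper cites the inequality $\H[f(\PS)] \leq \H[\PS]$ directly, whereas you unpack it via the chain rule and $\H[\CS \mid \PS] = 0$; these are the same argument at different levels of detail.
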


\begin{proof}
Since a prescient rival partition is a refinement almost everywhere, there
exists a function $f$ defined almost everywhere that maps each rival state
to the causal state that (almost everywhere) contains it:
\begin{align*}
f( \ps_i ) = \cs_j ~.
\end{align*}
Then, we have:
\begin{align*}
\H_\IS \big[\PS \big] & \geq \H_\IS \big[ f\big( \PS \big) \big] \\
            & = \H_\IS [\CS] ~.
\end{align*}
\end{proof}

\begin{Cor}
Causal states minimize the channel complexity $\overline{\Cmu}$.
\end{Cor}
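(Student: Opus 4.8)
The plan is to derive this corollary directly from the Minimality theorem just established, by promoting its pointwise (per-input-process) inequality to the supremum that defines channel complexity. Recall that for a fixed prescient rival partition $\PS$ the channel complexity is $\overline{\Cmu}(\PS) = \sup_{\IS} C_{\IS}(\PS)$, and likewise $\overline{\Cmu}(\CS) = \sup_{\IS} C_{\IS}(\CS)$ for the causal states. The Minimality theorem supplies, for every input process $\IS$, the inequality $C_{\IS}(\CS) \leq C_{\IS}(\PS)$. The only analytic fact I will use beyond this is the monotonicity of the supremum: if $f(\IS) \leq g(\IS)$ for all $\IS$, then $\sup_\IS f \leq \sup_\IS g$.

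First I would fix an arbitrary prescient rival $\PS \in \PSSet$ and observe that, for each input process $\IS$, the Minimality theorem gives $C_{\IS}(\CS) \leq C_{\IS}(\PS) \leq \sup_{\IS'} C_{\IS'}(\PS) = \overline{\Cmu}(\PS)$. Since $\overline{\Cmu}(\PS)$ is thus an upper bound for $C_{\IS}(\CS)$ uniformly in $\IS$, taking the supremum over $\IS$ on the left yields
\[
\overline{\Cmu}(\CS) = \sup_{\IS} C_{\IS}(\CS) \leq \overline{\Cmu}(\PS) ~.
\]
Because $\PS$ was arbitrary among prescient rivals, this shows the causal states minimize channel complexity, as claimed.

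The one point that deserves care — and the closest thing to an obstacle — is that the supremum defining $\overline{\Cmu}$ may be attained (or merely approached) at \emph{different} input processes for $\CS$ and for $\PS$, so one cannot simply compare the two optimizing inputs side by side. The monotonicity argument sidesteps this: it never requires the maximizers to coincide, only the pointwise bound. I would also note explicitly that the prescient-rival partition $\PS$ is defined by the channel's causal-equivalence structure, conditioning on the input future $\IFuture$ rather than on any particular input measure; hence the \emph{same} partition $\PS$ is the object being driven by every input process $\IS$, which is precisely what makes the per-input comparison $C_{\IS}(\CS) \leq C_{\IS}(\PS)$ well posed across the entire supremum.
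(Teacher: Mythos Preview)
Your proposal is correct and matches the paper's approach: the paper's proof is simply ``Immediate from the preceding theorem,'' and your argument is precisely the natural elaboration of that remark, lifting the per-input inequality $C_{\IS}(\CS) \leq C_{\IS}(\PS)$ to the supremum via monotonicity. Your additional remarks about non-coinciding maximizers and the input-independence of the partition $\PS$ are valid clarifications, though more than the paper felt necessary to spell out.
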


\begin{proof}
Immediate from the preceding theorem.
\end{proof}

In words, we established the fact that the causal states store all of the
information contained in the past that is necessary for predicting a channel's future behavior
and as little of the remaining information ``overhead'' contained in the
past as possible. Given an input process, \eT\ causal states maximize
$\I[\OFuture;\CS|\IFuture]$ while minimizing $\I[\JPast;\CS]$.

The final optimality theorem shows that any states which have these properties are in fact the causal states.

\begin{theorem}[Uniqueness]
The \eT\ is the unique prescient, minimal partition of pasts. If $C_{\IS} \big(\PS \big) = C_{\IS}(\CS)$ for every input process $\IAll$, then the
corresponding states $\PS$ and $\CS$ are isomorphic to one another almost
everywhere. And, their equivalence relations $\sim_\eta$ and $\sim_\epsilon$
are the same almost everywhere.
\end{theorem}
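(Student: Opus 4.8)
The plan is to upgrade the Refinement Lemma (Lemma \ref{lem:Refine}), which gives only one-sided containment of partitions, into a two-sided statement by exploiting the complexity equality for \emph{every} input process. First I would fix an arbitrary input process $\IS$. A prescient rival $\PS$ is, by definition, as prescient as the pasts, so Lemma \ref{lem:Refine} applies with respect to this input's measure: the $\PS$-partition refines the causal-state partition almost everywhere, and there is a function $f$ with $\CS = f(\PS)$ a.e. (this is exactly the map used in the Minimality proof). This already yields one direction of the equivalence-relation claim, namely that $\jpast \sim_\eta \jpast'$ implies $\jpast \sim_\epsilon \jpast'$: the rival relation is at least as fine as causal equivalence.

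For the reverse containment I would invoke the hypothesis $C_\IS(\PS) = C_\IS(\CS)$. Applying the chain rule to the deterministic map $f$,
\begin{align*}
\H_\IS[\PS] = \H_\IS[f(\PS)] + \H_\IS[\PS \mid \CS] = \H_\IS[\CS] + \H_\IS[\PS \mid \CS],
\end{align*}
so the equality $\H_\IS[\PS] = \H_\IS[\CS]$ forces $\H_\IS[\PS \mid \CS] = 0$. Hence, with respect to this input's measure, $\PS$ is itself a deterministic function of $\CS$: restricted to the positive-probability states, $f$ is injective, so it is a measure-preserving bijection between the realized causal states and the realized rival states. Under the fixed input process the two partitions therefore agree almost everywhere, which is the claimed isomorphism for that measure.

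The final step, and the one I expect to be the main obstacle, is to promote these per-input-process, measure-dependent conclusions to the asserted almost-everywhere isomorphism of the partitions of \emph{all} joint pasts and to the identity $\sim_\eta = \sim_\epsilon$. The difficulty is measure-theoretic bookkeeping: a single input process constrains behavior only on the joint pasts it realizes with positive probability, so one must show the null exceptional sets can be controlled uniformly. I would argue contrapositively. Suppose $\sim_\epsilon$ failed to refine $\sim_\eta$ on a set of positive measure; then some causal state $\cs_j$ would split into distinct rival states $\ps_k, \ps_{k'}$, each carrying positive probability under some input process. Choosing an input (for instance a full-support i.i.d.\ input) that realizes the offending joint pasts makes $\H_\IS[\PS \mid \CS] > 0$, hence $C_\IS(\PS) > C_\IS(\CS)$, contradicting the hypothesis for that input. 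Since every potentially offending cell is ruled out this way, $\sim_\eta$ and $\sim_\epsilon$ coincide almost everywhere, and the state bijection assembled above is the desired isomorphism.
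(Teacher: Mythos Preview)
Your argument is correct and matches the paper's approach: both use the Refinement Lemma to get $\CS=f(\PS)$ a.e., then combine $\H_\IS[\CS\mid\PS]=0$ with the hypothesis $\H_\IS[\PS]=\H_\IS[\CS]$ to force $\H_\IS[\PS\mid\CS]=0$ (the paper phrases this via mutual-information symmetry rather than the chain rule, but the computation is identical), yielding the inverse map $g=f^{-1}$ and hence $\sim_\eta=\sim_\epsilon$ a.e. Your final paragraph on controlling the exceptional null sets across input processes is more careful than the paper, which simply asserts the a.e.\ conclusion without discussing which measure is meant.
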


\begin{proof}
Again, the Refinement Lemma (Lemma \ref{lem:Refine}) says that $\CS=f \big(\PS
\big)$ almost everywhere. It therefore follows that $\H_\IS  \big[\CS \big|\PS
\big] = 0$. Moreover, by assumption $\H_\IS [\CS] = \H_\IS  \big[\PS \big]$.
Combining these with the symmetry of mutual information gives:
\begin{align*}
\I_\IS  \big[\CS;\PS \big] & = \I_\IS  \big[\PS;\CS \big] \\
\H_\IS [\CS] - \H_\IS  \big[\CS \big|\PS \big]
    & = \H_\IS  \big[\PS \big] - \H_\IS \big[\PS \big|\CS \big] \\
\H_\IS [\CS] - 0 & = \H_\IS [\CS] - \H_\IS  \big[\PS \big|\CS \big] \\
\H_\IS  \big[\PS \big|\CS \big] &= 0 ~. 
\end{align*}
The latter holds if and only if there is a function $g$ such that $\PS = g(\CS)$
almost everywhere. By construction $g$ is the inverse $f^{-1}$ of $f$ almost everywhere. We have that $f \circ \eta = \epsilon$ and
$f^{-1} \circ \epsilon = \eta$.

Finally, the two equivalence relations $\sim_\epsilon$ and $\sim_\eta$ are the same almost everywhere:
\begin{align*}
\jpast & \sim_\epsilon \jpast^\prime \\
  \implies \epsilon \Big( \jpast \Big) & = \epsilon \Big( \jpast^\prime \Big) \\
  \implies f^{-1} \circ \epsilon \Big( \jpast \Big)
  	& = f^{-1} \circ \epsilon \Big( \jpast^\prime \Big) \\
  \implies \eta \Big(\jpast \Big)
  	& = \eta \Big( \jpast^\prime \Big) \\
  \implies \jpast & \sim_\eta
  	\jpast^\prime
  ~,
\end{align*}
and
\begin{align*}
\jpast & \sim_\eta \jpast^\prime \\
  \implies \eta \Big( \jpast \Big) & = \eta \Big( \jpast^\prime \Big) \\
  \implies f \circ \eta \Big( \jpast \Big)
  	& = f \circ \eta \Big( \jpast^\prime \Big) \\
  \implies \epsilon \Big( \jpast \Big)
  	& = \epsilon \Big( \jpast^\prime \Big) \\
  \implies \jpast & \sim_\epsilon \jpast^\prime
  ~.
\end{align*}
\end{proof}

\ET\ uniqueness means that $C_\IS$ is \emph{the} conditional complexity of a
channel and, therefore, justifies calling $\overline{\Cmu}$ \emph{the} channel
complexity.

\section{Global \EM\ versus \ET}

Given a particular joint process or its global $\eM$, it is possible (provided
that the input process satisfies certain requirements) to construct the $\eT$ that maps
input $\IAll$ to output $\OAll$, such that $\JAll = \big( \IAll,f(\IAll)
\big)$, where $f$ is the transducer and $f \big(\IAll \big)$ designates the
output of the transducer, given input process $\IAll$. Sequels address the
relationship between a joint process' global $\eM$ and corresponding $\eT$ at
both the process (channel) level and at the automata ($\eM$ and $\eT$) level.
There, we provide algorithms for ``conditionalizing'' a joint process or $\eM$
to obtain the corresponding channel or $\eT$, as well as algorithms for
obtaining input or output marginals, applying an $\eT$ to an input $\eM$,
composing multiple $\eT$s, and inverting an invertible $\eT$.

Note that the ability to construct an $\eT$ from a joint process can be useful
when attempting to  infer an $\eT$ from data, as such data will typically come
from a system driven by some \emph{particular} (possibly controllable) input;
i.e., the data is a sample of a \emph{joint process}. 

\section{History \ET\ versus Generator \ET}

The preceding focused on the \emph{history} specification of an \eT, where a
machine is obtained by partitioning a channel's histories (joint pasts). We can
also consider the \emph{generator} specification, where we instead \emph{start}
with a machine that produces a stationary, ergodic channel. Taking this
perspective, an \eT\ is an input-dependent, strongly connected, aperiodic
hidden Markov model with unifilar transitions and probabilistically distinct
states. The history and generator specifications of an \eT\ are likely
equivalent---as they are with \eMs\ \cite{Trav11a}---but we leave such a proof
to future work.

\section{Examples Revisited}

With the \eT\ defined, we can revisit the example channels examined above.
We display channel structure via its \eT's state transition diagram, as well
as a wordmap that now colors each joint history based on its corresponding
causal state. Input and output history projections are also shown. Histories 
that are not mapped to a single causal state are colored black. Recall that 
causal states partition \emph{joint} histories, so input or output histories 
alone need not correspond to a unique causal state. We will discuss Markov
orders for these channels, but we leave it to the reader to construct an 
exhaustive list of Markov orders for each channel.

\begin{figure}
 \centering
 \includegraphics[scale=1]{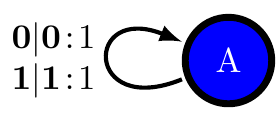}\\
 \includegraphics[scale=.5]{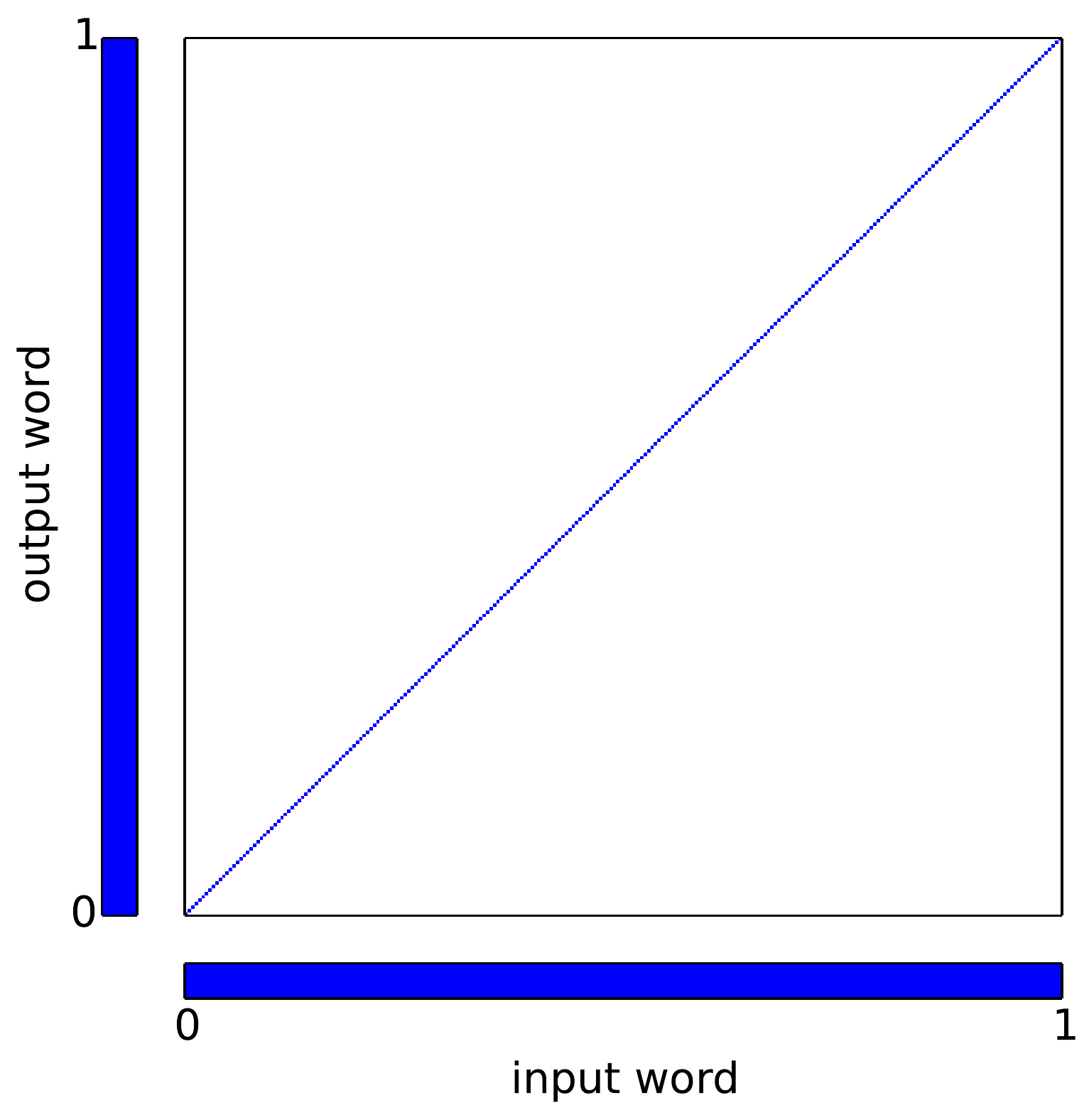}
\caption{Identity Channel: \eT\ and causal-state colored wordmap. See text
  for explanation.
  }
\label{fig:identitystate}
\end{figure} 

\begin{figure}
  \centering
\includegraphics[scale=1]{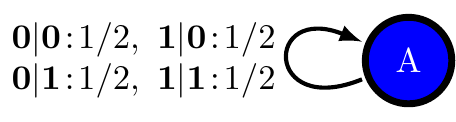}\\
 \includegraphics[scale=.5]{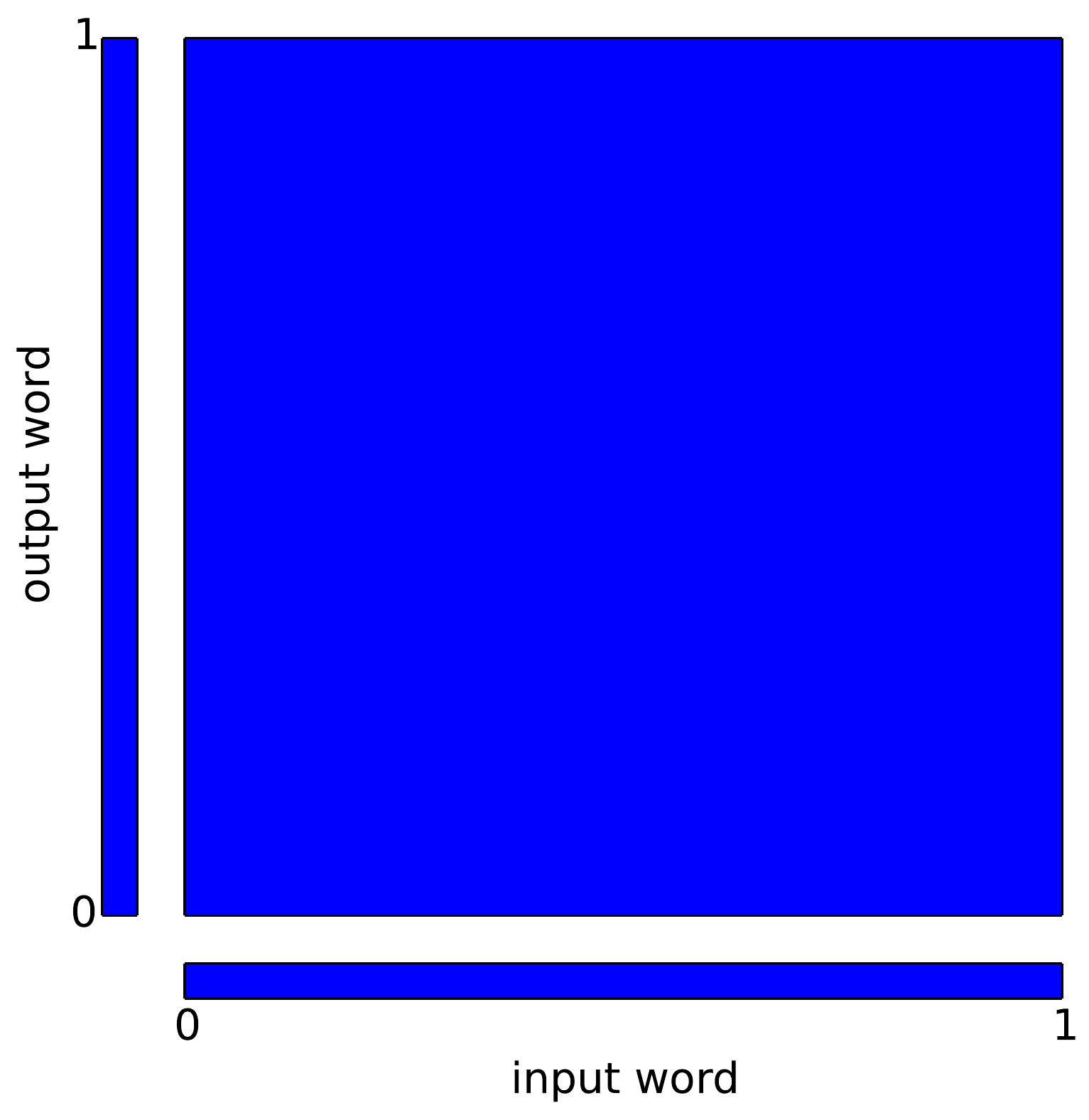}
\caption{All-Is-Fair Channel: \eT\ and causal-state colored wordmap.}
\label{fig:alltofairstate}
\end{figure} 

\begin{figure}
  \centering
\includegraphics[scale=1]{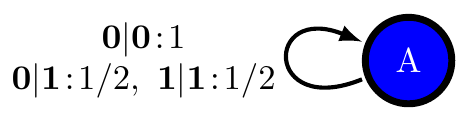}\\
\includegraphics[scale=.5]{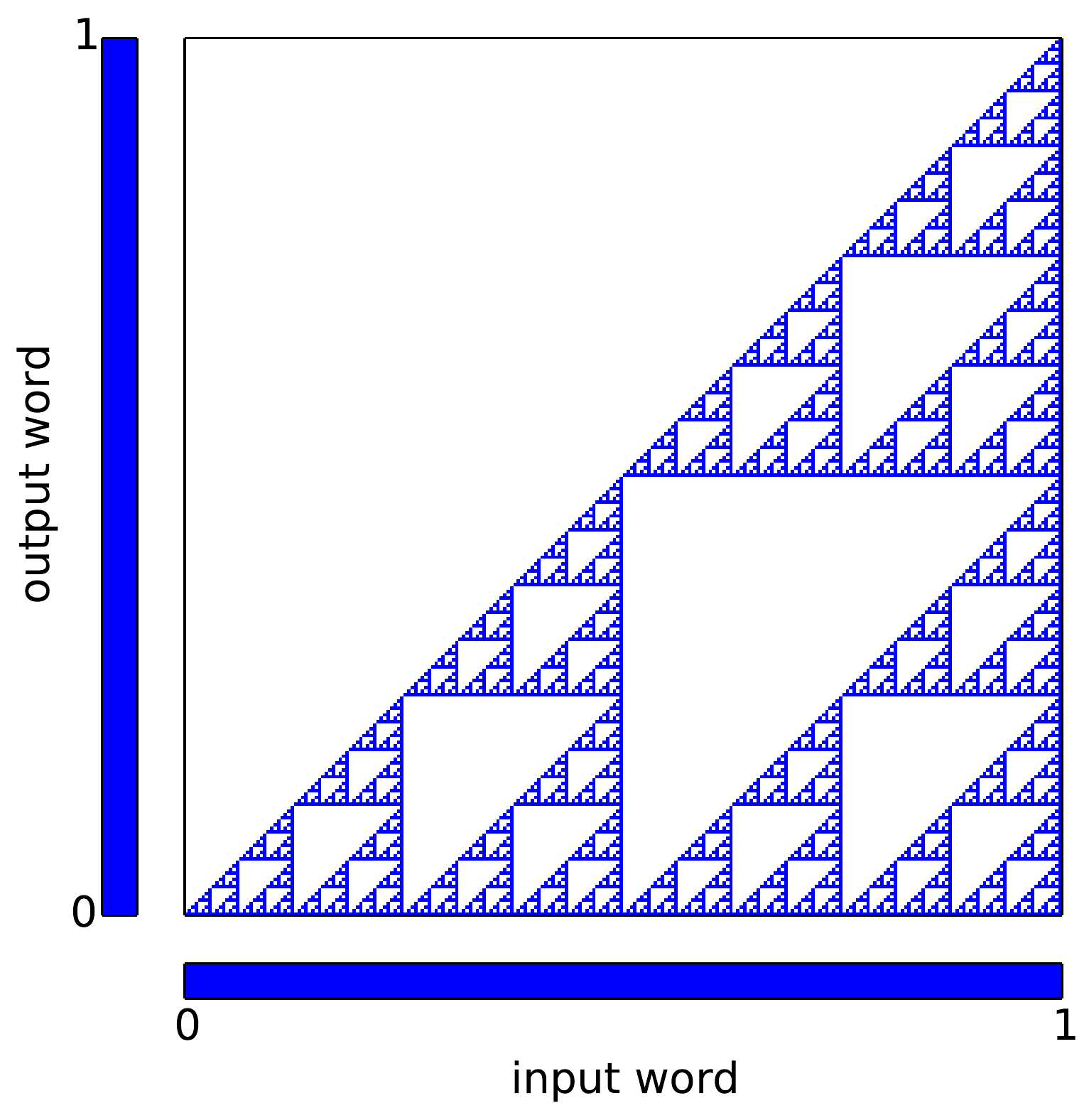}
\caption{Z Channel: \eT\ and causal-state colored wordmap.}
\label{fig:zchannelstate}
\end{figure} 

Since the Identity (Fig. \ref{fig:identitystate}), All is Fair (Fig.
\ref{fig:alltofairstate}), and Z (Fig. \ref{fig:zchannelstate}) Channels are
memoryless, their behavior does not depend on the past. As a result, there
is a single causal state containing every past and so their wordmaps are
monochromatic. Since they have a single causal state,
$C_X = \overline{\Cmu} = 0$.

\begin{figure}
  \centering
\includegraphics[scale=1]{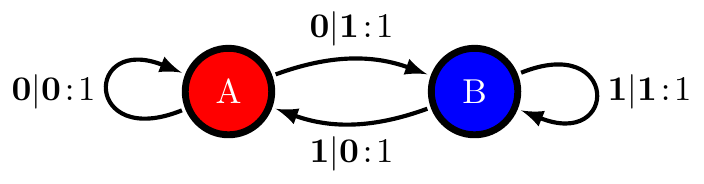}\\
\includegraphics[scale=.5]{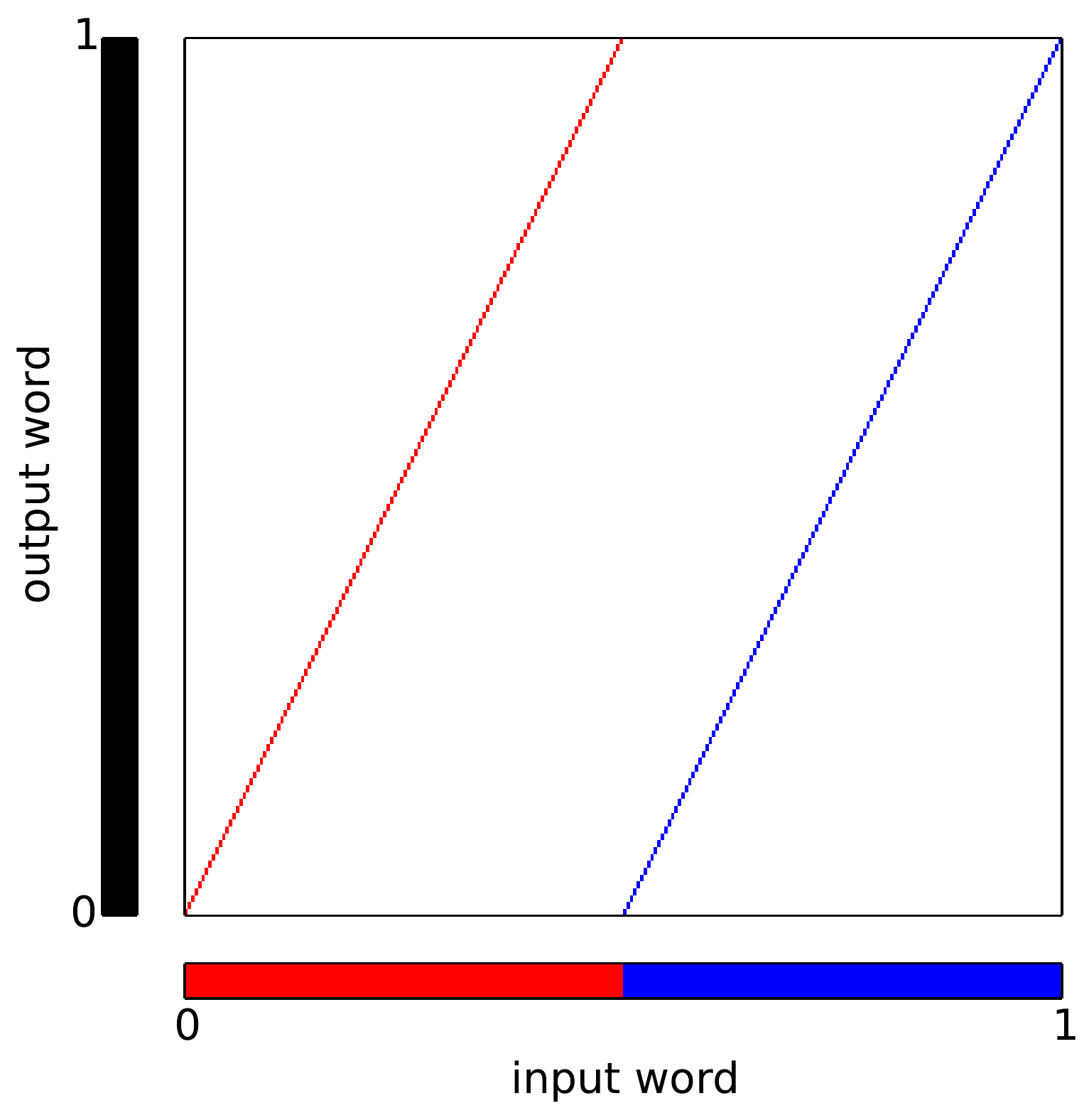}
\caption{Delay Channel: \eT\ and causal-state colored wordmap.}
\label{fig:delaystate}
\end{figure} 

In contrast, the Delay Channel (Fig. \ref{fig:delaystate}) has two causal states and so two colors
corresponding to pasts in which the input ends on a $0$ (left half of the
wordmap) or a $1$ (right half of the wordmap). This partitioning into halves is
a characteristic of channels with a pure feedforward Markov order $\MOrder_\text{pff}
= 1$.  We also see that the output words are colored black, illustrating the
fact that the output tells us nothing about the current causal state. The channel's pure feedforward Markov order of $1$ can be seen in the channel's \eT\ state-transition diagram by observing that all transitions on input
symbol $0$ lead to state $A$ and all transitions on input symbol $1$ lead to
state $B$. Since the Delay Channel is undefined for outputs alone ($\MOrder_\text{pfb}$ is undefined), it is the first example
channel with a nontrivial irreducible feedforward order: $\MOrder_\text{iff} = 1$. The causal states of the Delay Channel simply store a single
bit of input, their entropy therefore matches the length-$1$ block entropy of the input
process: $C_\IS = H[\IS_0]$. Maximizing the latter over input process gives us
$\overline{\Cmu}=1$, attained with Fair Coin Process input.

\begin{figure}
  \centering
\includegraphics[scale=1]{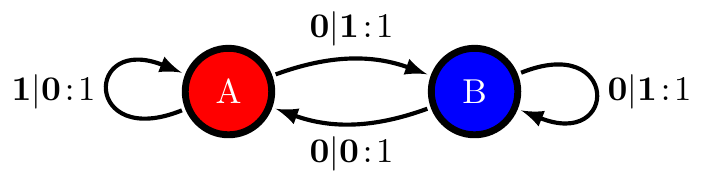}\\
  \includegraphics[scale=.5]{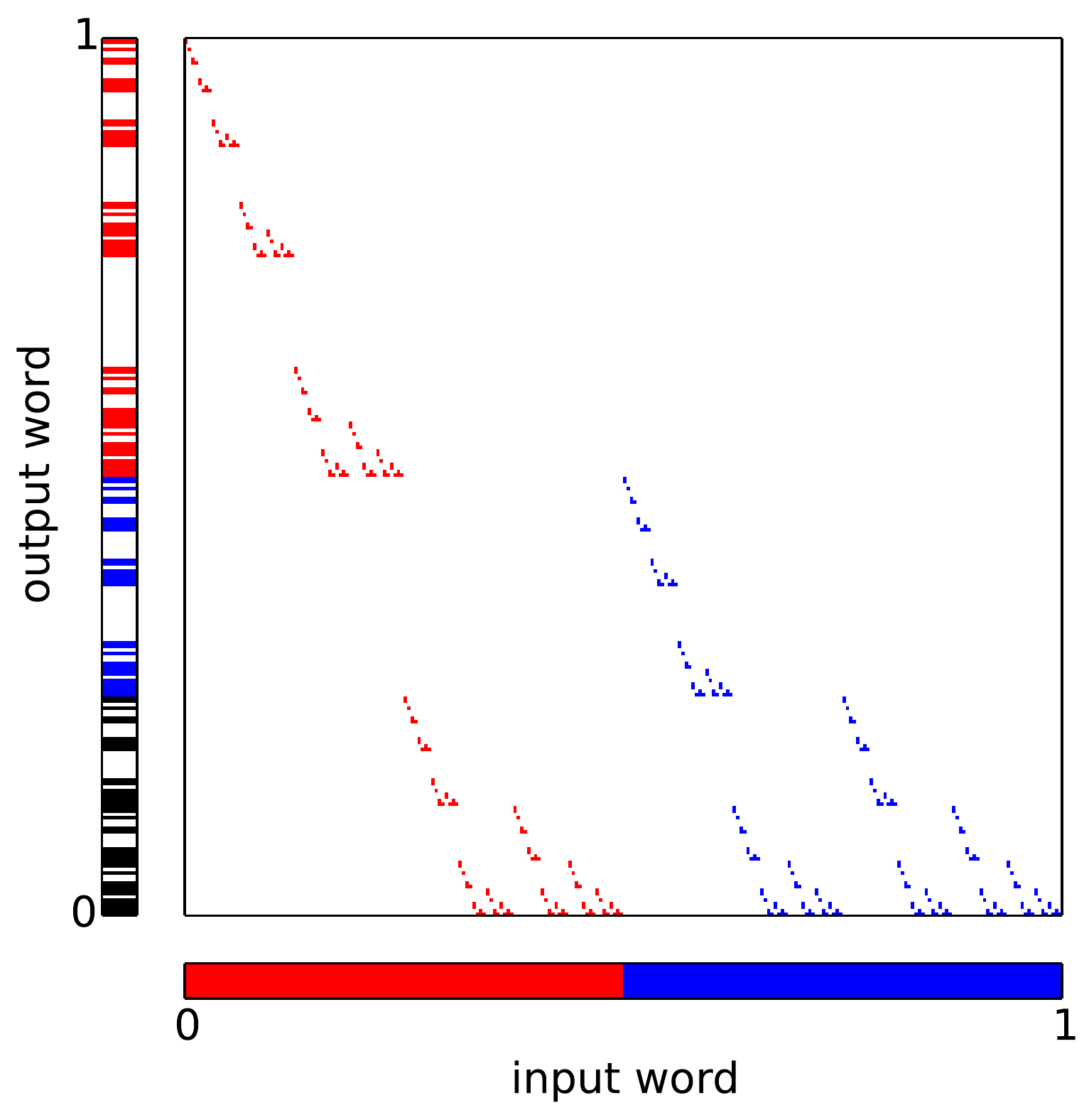}
\caption{Feedforward NOR Channel: \eT\ and causal-state colored wordmap.}
\label{fig:feedforwardNORstate}
\end{figure} 

The Feedforward NOR Channel (Fig. \ref{fig:feedforwardNORstate}) also stores
the previous input symbol and, therefore, partitions input histories the same
way ($\MOrder_\text{pff}= 1$). We see in the wordmap that there is again an ambiguity of causal state
given output histories alone and there is, therefore, no pure feedback
presentation for the channel. Specifically, we see that the ambiguity arises
for histories where the output ends on two $0$s (lower quarter of the wordmap).
This can be verified in the channel's \eT\ by observing that a $1$ on output
always leads to state $A$, but a $0$ on output only leads to a unique state if
it is followed by a $1$ on output. A single symbol of input is \emph{always} needed to guarantee well defined behavior ($\MOrder_\text{iff} = 1$). Since the Feedforward NOR Channel's causal
states store the same information as the Delay Channel, we can again drive the
channel with the Fair Coin Process to attain $\overline{\Cmu}=1$ bit.

\begin{figure}
  \centering
\includegraphics[scale=1]{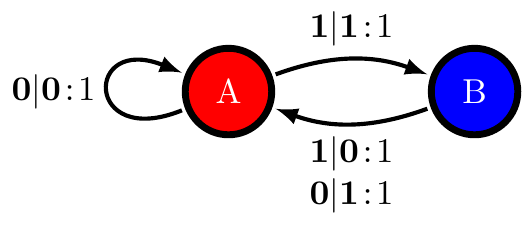}\\
  \includegraphics[scale=.5]{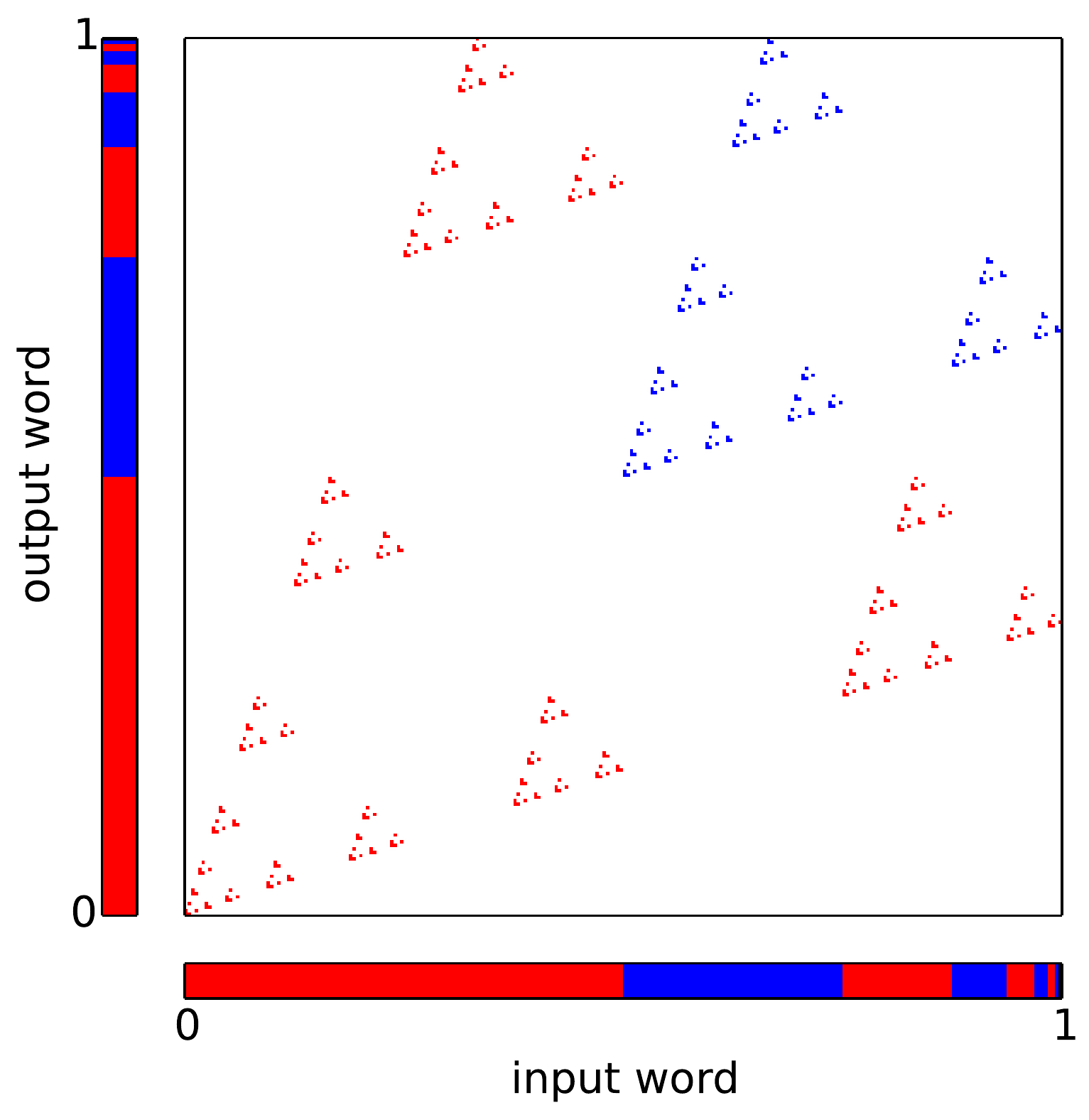}
\caption{Odd NOT Channel: \eT\ and causal-state colored wordmap.}
\label{fig:flipoddstate}
\end{figure} 

The wordmap for the Odd NOT Channel (Fig. \ref{fig:flipoddstate}) has projected
input (and output) partitions with structure at all scales. This is the
signature of states that depend upon infinite histories---one must provide an
arbitrarily long binary expansion to specify the location of the causal state
boundaries and, therefore, the causal states themselves. If we observe both
inputs and outputs, we only need to specify in which quadrant a joint history
lies in order to determine its causal state. That is, the Odd NOT Channel is
sofic on both input and output alone (infinite pure feedforward and pure feedback Markov
orders, $\MOrder_{\text{pff}}$ and $\MOrder_{\text{pfb}}$, respectively), but Markovian when
both input and output are considered (finite Markov order $\MOrder$). We also
see that the causal states store the \emph{same information} (parity) about
input histories as they do output histories, by observing the symmetry along
the diagonal. Since the Period-$2$ Process generates sequences that always
alternate between even and odd parity, we can drive the channel with this
process to induce a uniform distribution over its causal states.  Therefore, we
have $\overline{\Cmu} = 1$ bit, again.

\begin{figure}
  \centering
\includegraphics[scale=1]{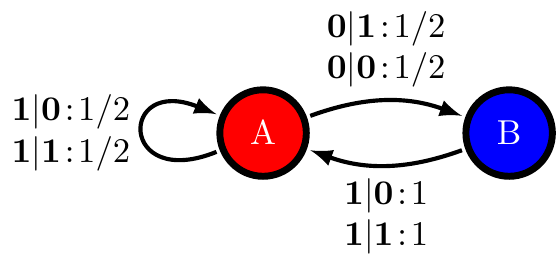}\\
  \includegraphics[scale=.5]{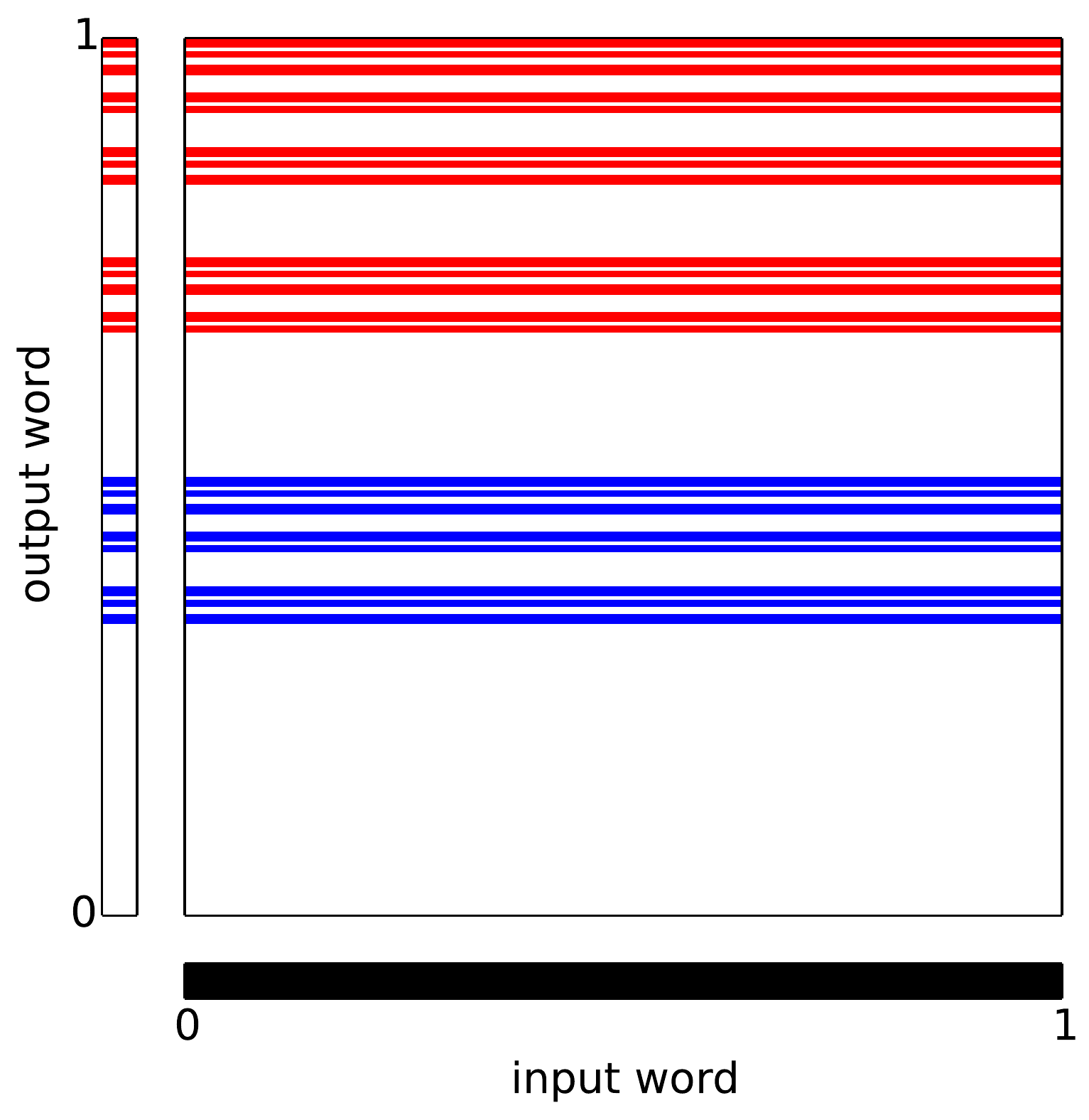}
\caption{All-Is-Golden Channel: \eT\ and causal-state colored wordmap.}
\label{fig:alltogmstate}
\end{figure} 

We see that the All-Is-Golden Channel (Fig. \ref{fig:alltogmstate}) has a
pure feedback Markov order of $\MOrder_\text{pfb} = 1$. Since it ignores its input,
however, input histories tell us nothing about in which causal state the
channel is. We also see that the wordmap is horizontally symmetric due to this
lack of input dependence. Since the state transitions depend only on output,
the state distribution and, therefore, the statistical complexity are
independent of input. In particular, the channel's statistical complexity is
that of the Golden Mean Process (GMP):
$C_X = \overline{\Cmu} = \Cmu(\text{GMP}) \approx 0.918$ bits.

\begin{figure}
  \centering
\includegraphics[scale=1]{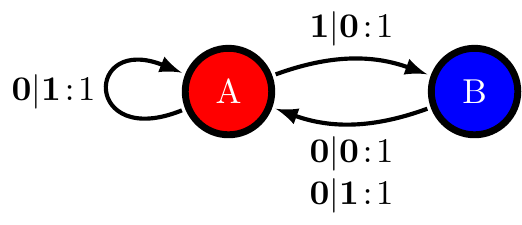}\\
  \includegraphics[scale=.5]{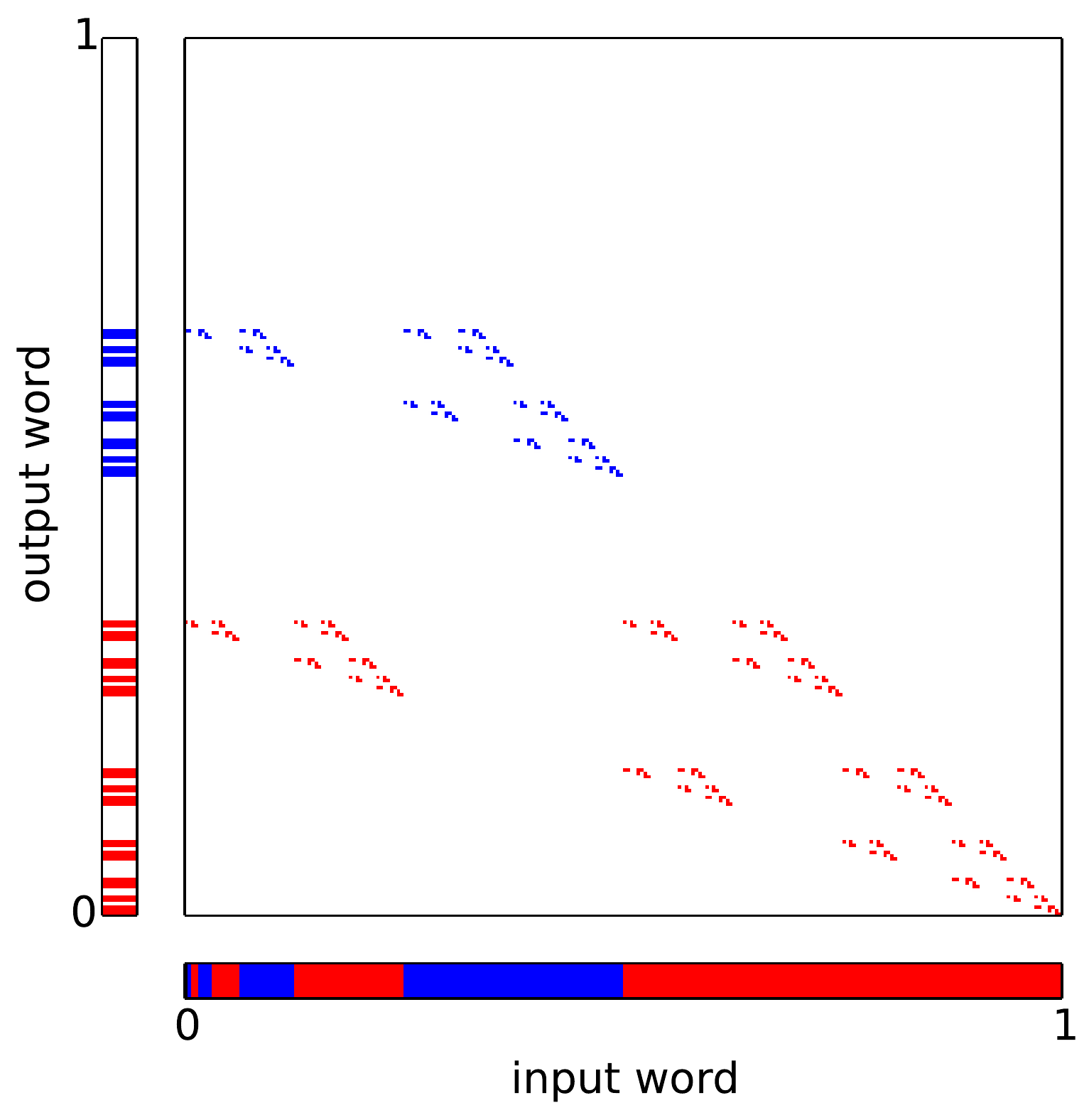}
\caption{Feedback NOR Channel: \eT\ and causal-state colored wordmap.}
\label{fig:feedbackNORstate}
\end{figure} 

\begin{figure}
  \centering
\includegraphics[scale=1]{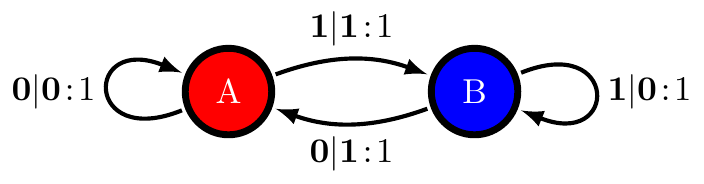}\\
  \includegraphics[scale=.5]{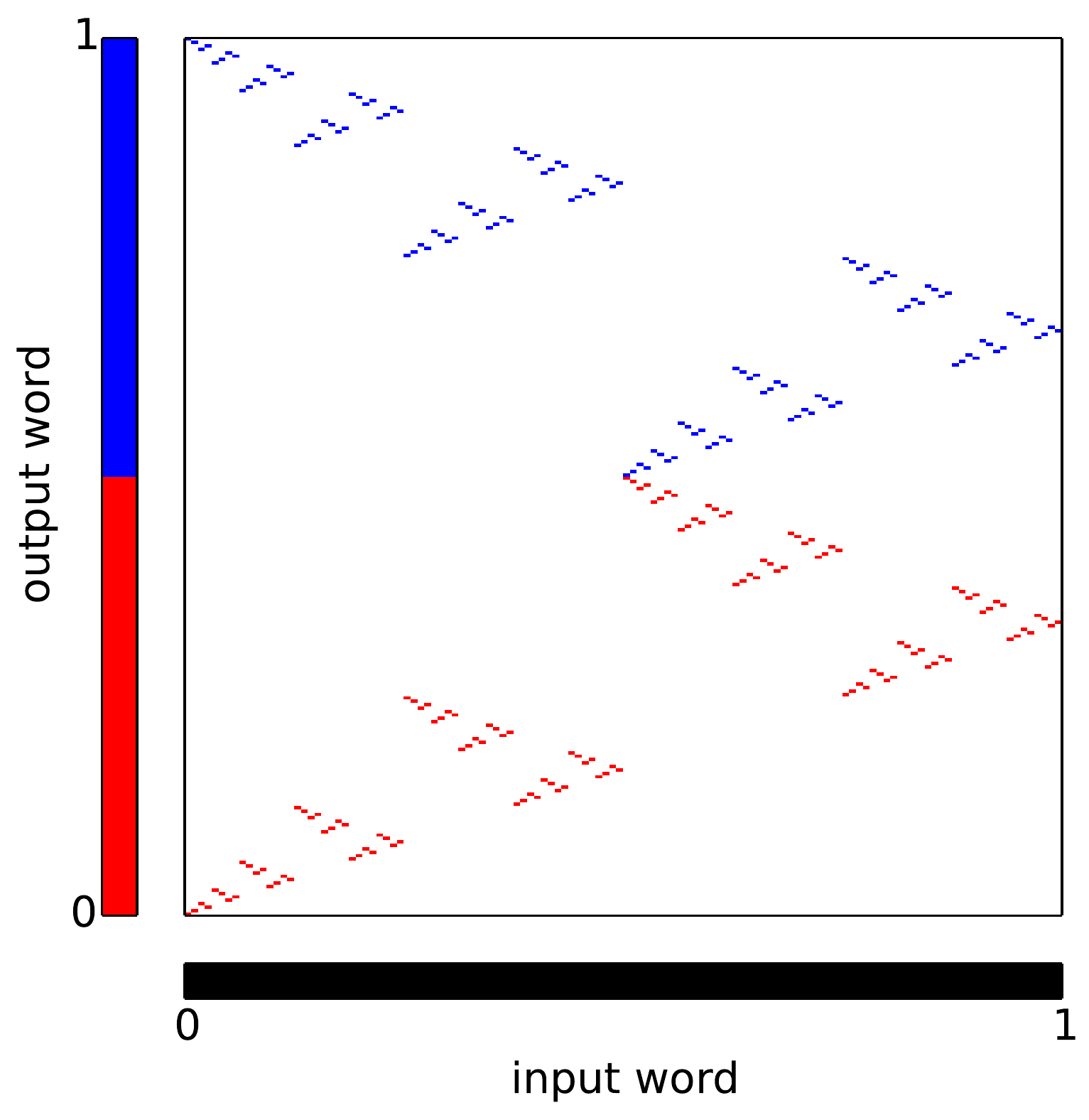}
\caption{Feedback XOR Channel: \eT\ and causal-state colored wordmap.}
\label{fig:feedbackXORstate}
\end{figure} 

The wordmap for the Feedback NOR Channel (Fig. \ref{fig:feedbackNORstate})
clearly shows that it has infinite pure feedforward Markov order
$\MOrder_{\text{pff}}$, but finite
Markov $\MOrder$ and pure feedback Markov $\MOrder_{\text{pfb}}$ orders. Contrast this with the wordmap for the
Feedback XOR Channel (Fig. \ref{fig:feedbackXORstate}) clearly showing
that the causal state, and so the channel's behavior, cannot be determined by
input alone. Observe that the Feedback NOR Channel is in state $A$ with
probability $1$ when a $1$ is observed on input and oscillates between states
$A$ and $B$ if the channel is driven with a period-$2$ cycle of $0$s and $1$s
from that point on. We can therefore induce a uniform distribution over causal
states by driving the channel with the Period-$2$ Process. We can also induce
a uniform distribution over the Feedback XOR Channel's causal states by driving
the channel with the Fair Coin Process, which causes all state transitions to occur with equal probability.
In both cases, $\overline{\Cmu} = 1$ bit.

\begin{figure}
  \centering
\includegraphics[scale=1]{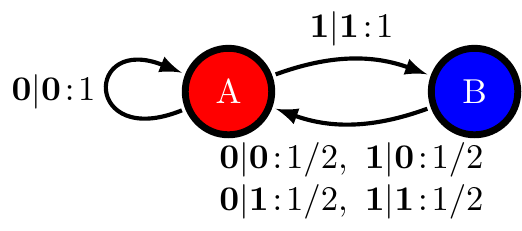}\\
  \includegraphics[scale=.5]{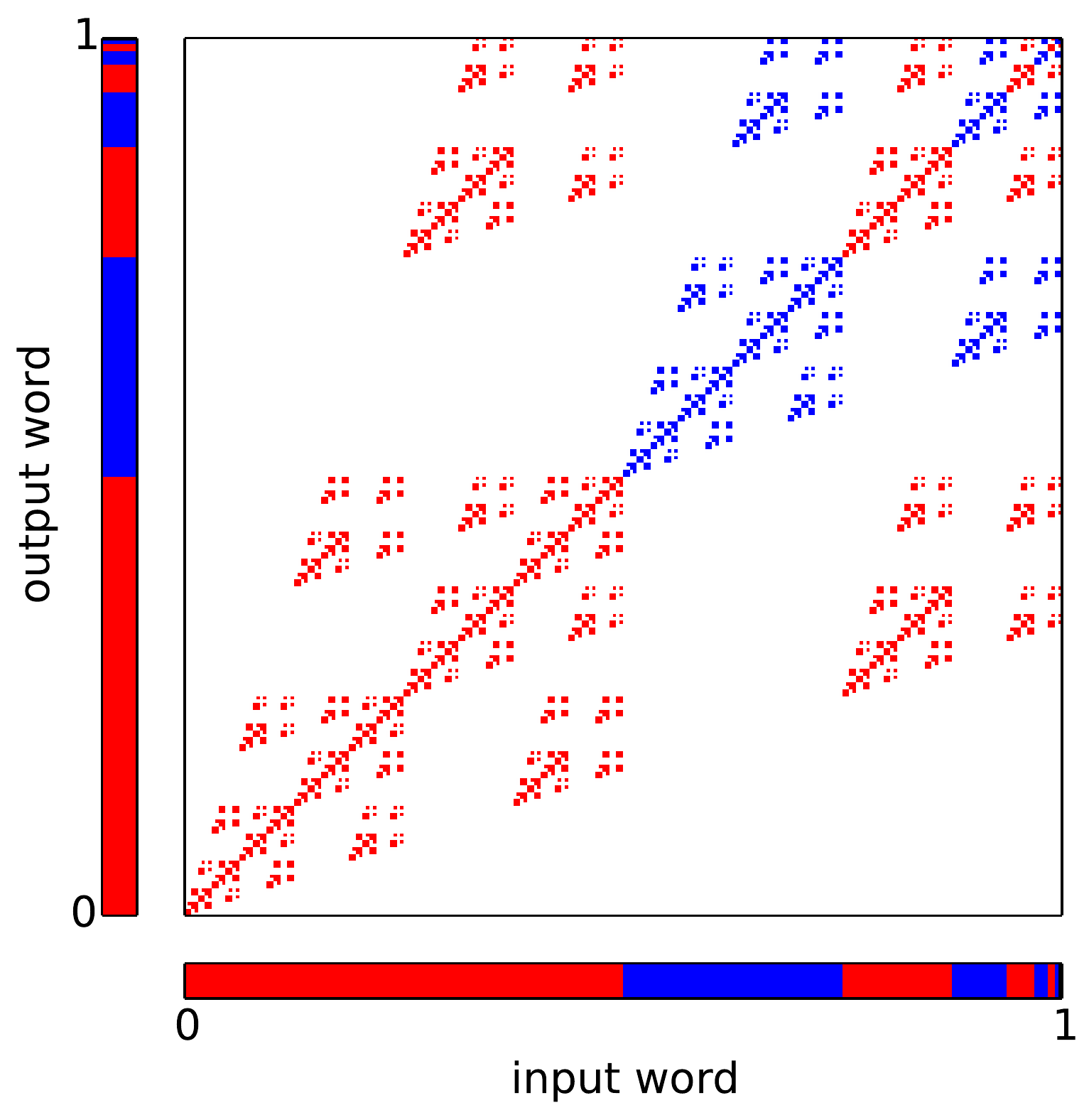}
\caption{Odd Random Channel: \eT\ and causal-state colored wordmap.}
\label{fig:oddrandomstate}
\end{figure} 

The Odd Random Channel (Fig. \ref{fig:oddrandomstate}) has infinite
pure feedforward and pure feedback Markov orders ($\MOrder_{\text{pff}}=\MOrder_{\text{pfb}}=\infty$), but unlike the Odd NOT channel,
the Markov order $\MOrder$ is infinite. In the Odd NOT channel, we saw structure
at all scales in the input and output projections of the wordmap, but a
partitioning into quadrants in the complete wordmap. Now, we see that there
is no such simple partition in the complete wordmap, and there is structure
at all scales in the coloring of histories. In other words, one must specify
arbitrarily long \emph{pairs} of binary expansions (input and output) in order
to specify a causal state. Specifically, knowing that a past ended in
$\jsym_t=(1,1)$ does not uniquely determine a causal state. This can be seen as
multiple colors appearing in the upper-right quadrant of the wordmap. If,
however, we know that the \emph{previous} pair was $(0,0)$, $(0,1)$, or
$(1,0)$, we see that the history maps to causal state $B$; corresponding to
the upper-left, lower-left, and lower-right subquadrants of the upper-right
quadrant, respectively. Similarly, if the previous pair was $(1,1)$, we are
left with an ambiguity in state; corresponding to the upper-right subquadrant
of the upper-right quadrant. Therefore, we require an arbitrarily
long past to determine the channel's causal state in general. Since the causal
states store the same parity as the Odd NOT channel, we can again drive the
channel with the Period-$2$ Process to induce a uniform distribution, giving
us $\overline{\Cmu} = 1$ bit.

\begin{figure}
  \centering
\includegraphics[scale=1]{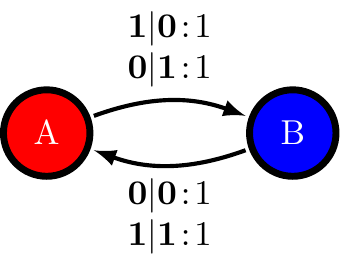}\\
  \includegraphics[scale=.5]{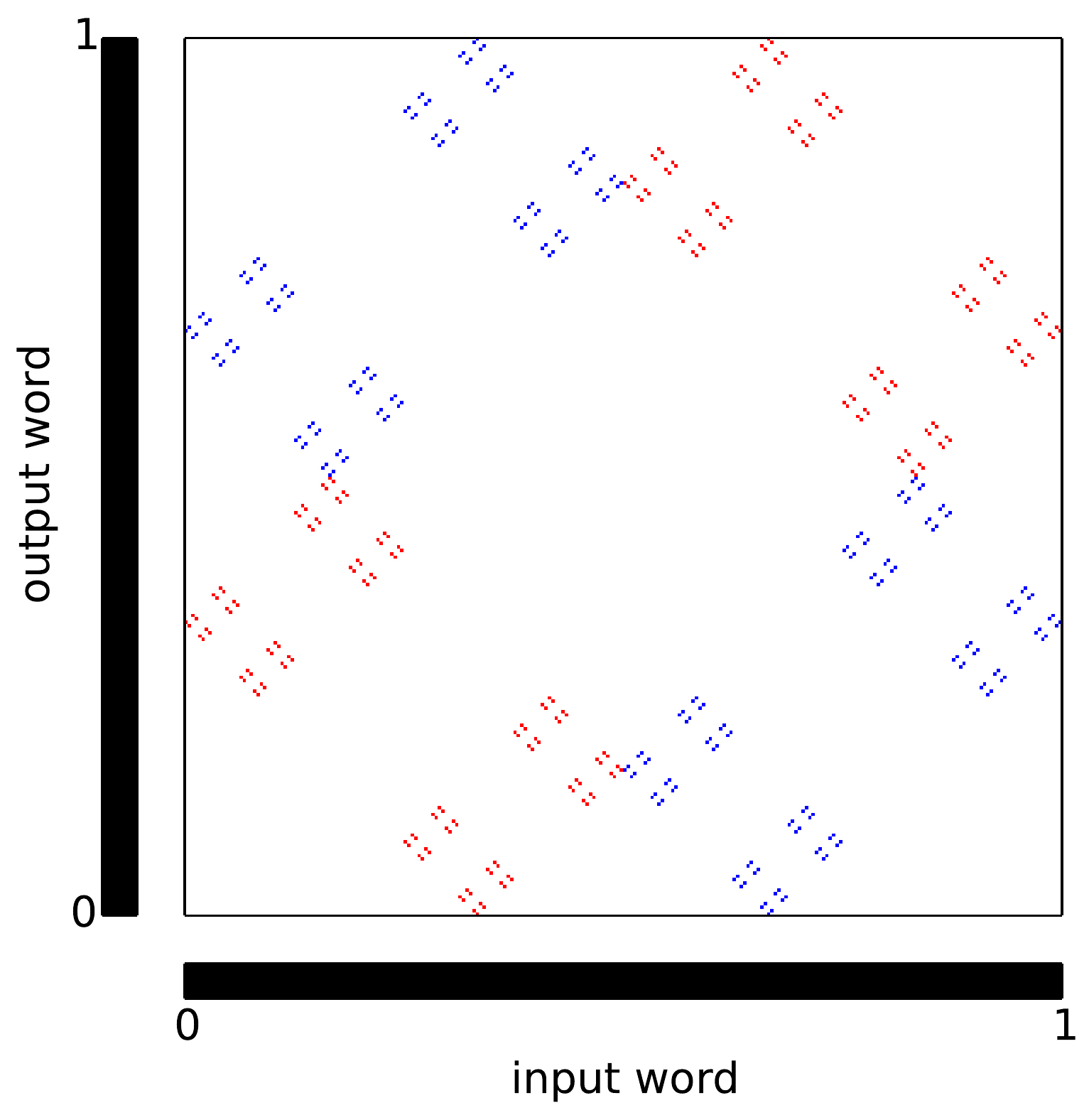}
\caption{Period-2 Identity NOT Channel: \eT\ and causal-state colored wordmap.}
\label{fig:flipidentitystate}
\end{figure} 

The Period-2 Identity NOT Channel (Fig. \ref{fig:flipidentitystate}) has a
Markov order of $\MOrder = 1$, but we clearly see that neither input nor output
alone determines the channel's causal state ($\MOrder_\text{iff}=\MOrder_\text{ifb}=1$). Since the states have a uniform
distribution regardless of input, we have $C_\IS = \overline{\Cmu} = 1$ bit.

\section{Infinite-state \ETs: The Simple Nonunifilar Channel}

\begin{figure}
  \centering
\includegraphics[scale=1]{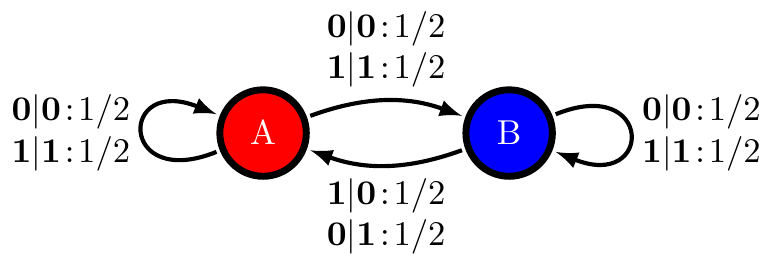}\\
  \includegraphics[scale=.5]{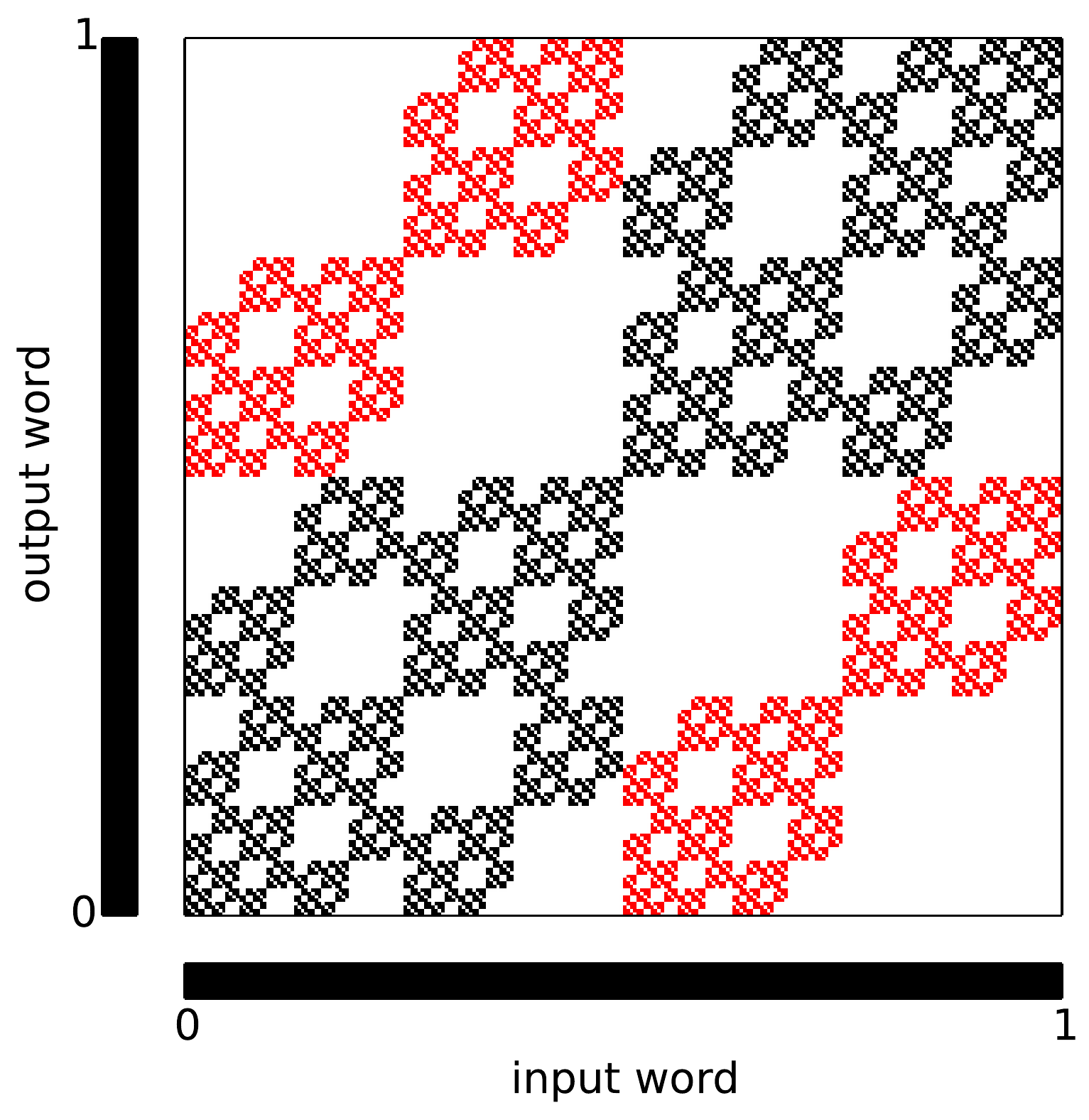}
\caption{Simple Nonunifilar Channel: \emph{Nonunifilar} transducer presentation and state colored wordmap.}
\label{fig:SNC}
\end{figure} 

The example channels were chosen to have a finite number of causal states
(typically two), largely to keep the analysis of their structure accessible. We
can see that even with a few states, \eTs\ capture a great deal of behavioral
richness. Nonetheless, many channels have an infinite number of causal states.
Consider, for example, the \emph{Simple Nonunifilar Channel}. This channel's
behavior is captured simply by the finite-state presentation shown in Fig.
\ref{fig:SNC}. When in state $A$, the channel behaves as the identity and has
an equal probability of staying in state $A$ or transitioning to state $B$.
When in state $B$, the channel will either behave as the identity and
transition back to state $B$ or behave as the bit flipped identity and
transition to state $A$, each with equal probability. 

Observe that the transducer shown is \emph{non}unifilar and is, therefore,
\emph{not} the \eT\ for the channel. For example, the joint symbol $(0,0)$ can
cause state $A$ to transition to either itself or state $B$. This
nonunifilarity manifests in the wordmap as large blocks of black points. These
indicate joint histories that lead to a \emph{mixture} of transducer states.
This illustrates the fact that an observer cannot typically retain
synchronization to a \emph{particular} state of a nonunifilar transducer---a
problem not present when using unifilar transducers.

\begin{figure}
  \centering
\includegraphics[scale=1]{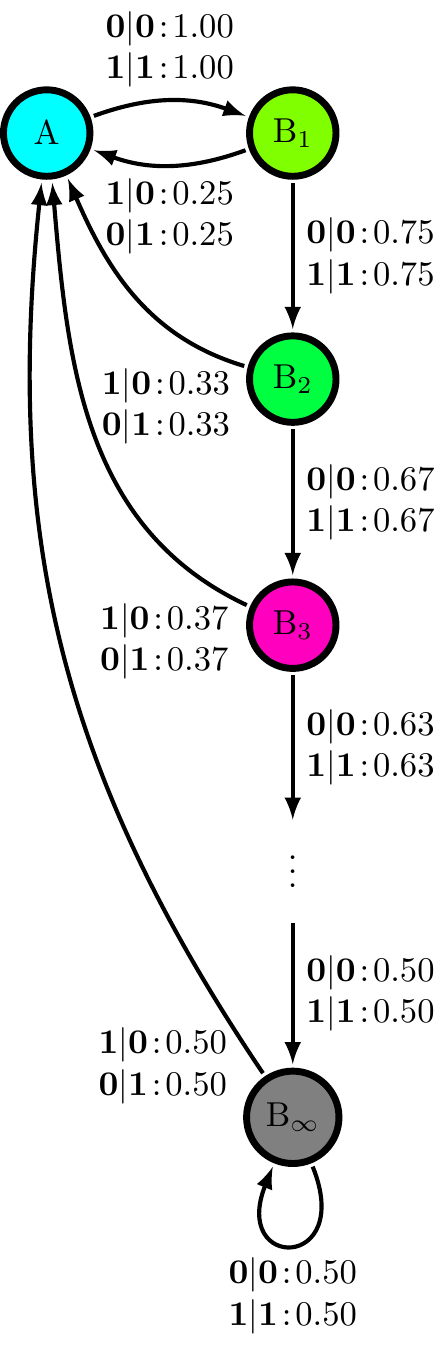}\\
  \includegraphics[scale=.5]{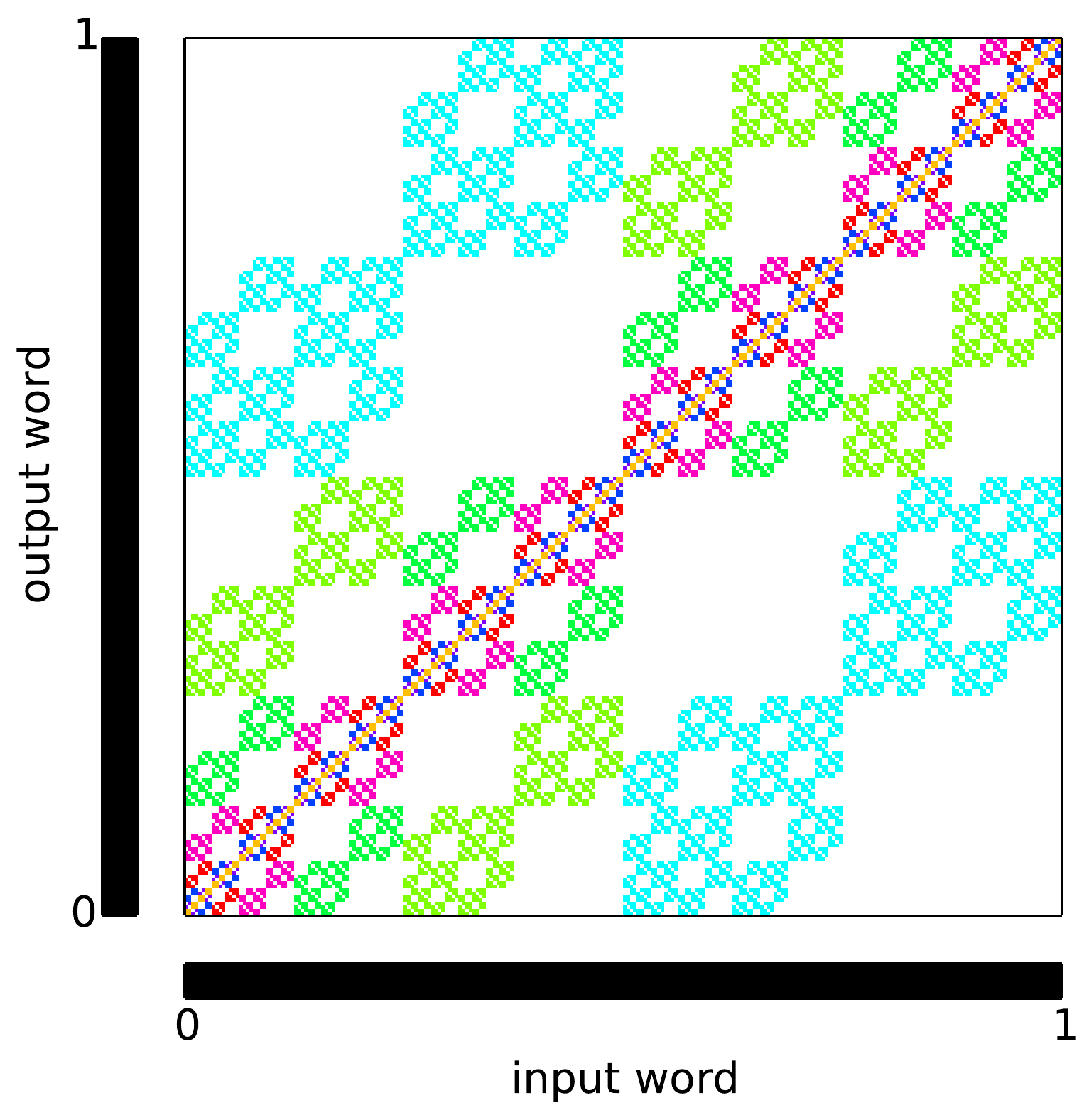}
\caption{Simple Nonunifilar Channel: \eT\ and causal-state colored wordmap.}
\label{fig:SNCem}
\end{figure} 

It is possible to construct the \eT\ for the Simple Nonunifilar Channel, but
doing so results in a transducer with a countably infinite set of states.
This minimal, unifilar \eT\ can be seen in Fig. \ref{fig:SNCem}. Since any
channel with a finite Markov order (and finite alphabet) will have a finite
number of causal states, a channel with an infinite number of causal states
will have infinite Markov order. This is also evident in the causal-state
wordmap for the Simple Nonunifilar Channel, as one needs infinite resolution
(in general) to determine to which causal state a joint past leads. In fact,
even if we know either the infinite input or output past, we still need to know
the full output or input past, respectively, in order to characterize the
channel's behavior. This is therefore the first example we have seen with
$\MOrder_\text{iff}=\MOrder_\text{ifb}=\infty$.

Observe that while the Simple Nonunifilar Channel's output clearly depends upon
its input, its state-to-state transitions do \emph{not}. Its statistical
complexity is therefore independent of the input process chosen. In fact, the
causal states and transitions between them are identical to the Simple
Nonunifilar Source \cite{Crut92c}. The statistical complexity is therefore
equal to the statistical complexity of the Simple Nonunifilar source: $C_\IS =
\overline{\Cmu} \approx 2.71$ bits. Even though there are an infinite number of
states, the $B_i$ states are occupied with probability that decreases quickly
with $i$, thus allowing for a finite Shannon state entropy. Note that if one
were to use Fig. \ref{fig:SNC}'s \emph{nonunifilar} presentation for the Simple
Nonunifilar Channel, the statistical complexity would be underestimated as
$C_\IS = \overline{\Cmu} = 1$ bit.

\section{Discussion}
\label{sec:Discussion}

Previously, we described computational mechanics in the setting of either
generating or controlling processes \cite{Crut10a}. As noted there, generation
and control are complementary. Here, we developed computational mechanics in
a way that merges both control (the input process) and generation (the output
process), extending the \eM\ to the \eT. With this laid out, we describe how
the \eT\ overlaps and differs from alternatives to modeling input-output
processes. We then turn to discuss applications, which incidentally elucidate
our original motivations, and suggest future directions.

\subsection{Related Work: Modeling}

Following the signposts of earlier approaches to modeling complex, nonlinear
dynamical systems \cite{Pack80,Crut87a}, we are ultimately concerned with
reconstructing a transducer when given a general channel or given a joint
process, either analytically or via statistical inference. And so, when
discussing related efforts, we distinguish between those whose goal is to
extract a model, which we review now, and those that analyze types of
transductions, which we review next. After this, we turn to applications.

For statistical estimation we note that the recently introduced Bayesian
Structural Inference (BSI) \cite{Stre13a} allows one to estimate the posterior
probability that \eMs\ generate a given, even relatively short, data series.
BSI's generality allows it to be readily adapted to infer \eTs\ from samples of
an input-output process. This turns on either developing an enumeration of \eTs\ which
parallels that developed for \eMs\ in Ref. \cite{John10a} or on developing a
list of candidate \eTs\ for a given circumstance. And, these are also
readily accomplished. A sequel provides the implementations. Previously,
inferring causal states, and so causal-state filters, had also been addressed; see, for example, Refs. \cite{Shal02a,Stil07b,Stre13a}.

Optimal transducers were originally introduced as structure-based filters to
define hierarchical \eM\ reconstruction \cite{Crut89j,Crut92c} in terms of
causal-state filtering, to detect emergent spatiotemporal patterns
\cite{Hans90a,Crut91d,Hans95a,McTa04a}, and to explore the origins of
evolutionary selection pressure \cite{Crut04a} and the evolution of language
structure \cite{Crut10c}. These causal-state transducers were first formalized
in Ref. \cite{Crut94a} and several of those results are reproduced in Ref.
\cite{Shal01a}. Appendix \ref{app:ETEquivalence} shows that the definition
there, which makes additional assumptions compared to that here, are
equivalent. The more-general development is more elegant, in that
it establishes unifilarity, for example, rather than assume such a powerful
property. Likely, in
addition, the generality will allow \eTs\ to be more widely used.

Throwing the net wider---beyond these, most directly related, prior
efforts---there have been many approaches to modeling input-output mappings. We
will use the fact that most do not focus on quantitatively analyzing the
mapping's intrinsic structure to limit the scope of our comments. We
mention a few and then only briefly. Hopefully the list, nonetheless,
suggests directions for future work in these areas.

Today, many fall under the rubric of learning, though they are rather more
accurately described as statistical parameter estimation within a fixed model
class. Probably, the most widely used and developed methods to model general
input-output mappings are found in artificial neural networks
\cite{Hopf82,Hert86a} and in the more modern approaches that employ kernel
methods \cite{Hast11a}, statistical physics \cite{Mack03a}, and information
theory \cite{Bial00a,Shal99a}. Often these methods require IID-drawn samples
and so do not directly concern mappings from one temporal process to another.
Unlike \eTs, they are also typically limited to model classes---e.g.,
feedforward and directed acyclic graph structures---that do not allow internal
feedback or dynamics.

That said, neural networks that are recurrent are universal approximators of
dynamical systems and, per force, are channels with feedback and feedforward
memory \cite{Albe05a}. They are well known to be hard to train and, in any
case, rarely quantitatively analyzed for the structures they capture when
successfully trained. In the mathematical statistics of time series, for
comparison, AutoRegressive-Moving-Average model with eXogenous inputs model
(ARMAX models) are channels with feedback and feedforward memory, but they are
\emph{linear}---current output is a linear combination of past inputs and
outputs. The nonlinear generalization is the Nonlinear AutoRegressive eXogenous
model (NARX), which is a very general memoryful causal channel. At some future
time, likely using \eTs\ extended to continuous variables as recently done for
\eMs\ in Ref. \cite{Marz14b}, we will understand better the kinds of structure
these channels can represent.

\subsection{Related Work: Classification}

Beyond developing a theoretical framework for structured transformations, one
that is sufficiently constructive to be of use in statistical inference, there
are issues that concern how they give a new view, if any, of the organization
of the space of structured processes itself.

Specifically, computational mechanics up to this point focused on processes and
developed \eMs\ to describe them as stochastic sets. \EMs\ are, most simply
stated, compact representations of distributions over sequences. With the \eTs\
introduced here, computational mechanics now has formalized stochastic mappings
of these stochastic sets. And, to get to the point, with sets and mappings one
finally has a framework capable of addressing the recoding equivalence notion
and the geometry of the space of processes proposed in Ref. \cite{Crut87f}. A
key component of this will be a measure of distance between processes that uses
a structural measure from the minimal optimal mapping (\eT) between them. This
would offer a constructive, in the sense we use the word, approach to the view
of process space originally introduced by Shannon \cite{Shan53a,Yeun08a,Li11a}.

This then leads to the historically prior question of structurally classifying
processes---paralleling schemes developed in computation theory \cite{Hopc79}.
Indeed, our development is much closer to input-output processes from the
earliest days of dynamical systems and automata theory---which were concerned
with exploring the range of behaviors of mechanical systems and the then-new
digital computers.

Briefly, \eTs\ are probabilistic endomorphisms of subshifts as studied in
symbolic dynamics \cite{Lind95a}. The (nonprobabilistic) endomorphisms there
were developed to explore the equivalence of processes via conjugacies.
Notably, this area grew out of efforts in the 1920s and 1930s by Hedlund,
Morse, Thue, and others to define symbolic dynamical systems that were more
analytically tractable than continuum-state systems \cite{Mors38a}. Their
efforts played a role in forming the logical foundations of mathematics and so
eventually in the emergence of a theory of computation via Church, G\"odel,
Post, and Turing \cite{Post21a,Gode92a,Turi37a,Chur36a}. This led eventually
to Moore's abstractions of sequential machines and transducers \cite{Moor56a}
and to Huffman's concept of a minimal implementation \cite{Huff54} and
information lossless automata \cite{Huff54b,Huff59a,Huff59b}. Today, though
expositions are increasingly rare, finite-state transducers are
covered by several texts on computation theory; see, for example, Ref.
\cite{Broo89a}.

Once one allows for distributions over sequences, though, then one shifts from the overtly structural approach of symbolic dynamics and automata to Shannon's information sources and communication channels \cite{Shan48a} and a strong emphasis on stochastic process theory. As noted in the introduction, one principal difference is that here we considered channels with memory, while the latter in its elementary treatments considers memoryless channels or channels with very restricted forms of memory. Finite-state channels have been developed in limited way, though; for example, see Ref. \cite[Ch. 7]{Ash65a} and for very early efforts see Refs. \cite{Blac58a} and \cite{Blac61a}. There are also overlaps, as we attempted to show in the selected examples, with classifications
developed in digital filter theory \cite{Hamm97a}.

There are also differences in focus and questions. Whereas information theory
\cite{Shan48a,Cove06a} studies \emph{quantities} of information such as
intrinsic randomness and informational correlation, computational mechanics
\cite{Crut12a} goes an additional step and attempts to quantify the information
\emph{itself}---the \emph{computational structure} or memory within a system.
This is achieved not by assuming a class of model directly, but by making a
simple assumption about modeling itself: The only relevant information is that
which contributes to prediction---the ``difference that makes a difference'' to
the future \cite{Bate79}. Via the causal equivalence relation, this
assumption leads directly to the \emph{unique}, \emph{maximally predictive},
and \emph{minimally complex} model of our measurement data---the \emph{\eM}.
Another way to express this is that \eTs\ give a constructive way to explore
the information theory of channels with and without memory.

\subsection{Applications}

Our development of \eTs\ was targeted to provide the foundation for several
related problems---problems that we will address elsewhere, but will briefly
describe here to emphasize general relevance and also to suggest future directions.

\subsubsection{Inference versus Experimentation}

If all data collected is produced by a measuring device, then any model formed
from that data captures both the structure of the system \emph{and sensor} in
combination. Is there a natural separation of measuring instrument from system
measured? We can now pose this precisely in terms of \eTs: Are there optimal
decompositions of a process's \eM\ into a possibly smaller
\eM\ (representing the hidden process) composed with a \eT\ (representing the
measuring instrument)?

\subsubsection{Information Flow within and between Systems}

In nonlinear dynamics and in information theory there has been a long-lived
interest in how information ``flows'' and how such flows relate to a system's
mechanical organization; see Refs. \cite{Shaw81,korner06,plw-11,barnett-09}, to
mention only a few. These employ specializations of Eq. (\ref{eq:EE})'s excess
entropy, being various forms of conditional mutual information. The
\emph{transfer entropy} \cite{Schr00a} and the earlier \emph{directional
information} \cite{Mark73a,Mass90a} are two such. The main issue concerns how
one process affects another and so this is a domain in which \eTs---as optimal
models of the structured transformations between processes---can help clarify
the issues.

In particular, there has been recent criticism of the use of these as measures
of information flow and, specifically, their relation to the structural
organization of the flows \cite{Sun14a}. We can now do better, we believe,
since \eTs\ give a canonical presentation with which to describe and extract
the structure of such mappings. And this, in turn, allows one to explicitly
relate how causal-state structure supports or precludes information flows. We
address this problem in a sequel \cite{Barn14a}.

\subsubsection{Process Decomposition}

Given a process, we can now analyze what internal components drive or are
driven by other internal components. As one example, Is a subset of the
measurement alphabet the ``output'' being driven by another subset that is
``input''? The question hints at the solution that one can now provide: Produce
the \eT\ for each bipartite input-output partitioning of the global \eM\
alphabet, giving a set of candidate input-output models. One can then invoke,
based on a notion of first principles (such as parsimony) or prior knowledge, a
way to choose the ``best'' input-output, driver-drivee decomposition.

\subsubsection{Perception-Action Cycles}

Probably one of the most vexing contemporary theoretical and practical
problems, one that occurs quite broadly, is how to describe long-term and
emergent features of \emph{dynamic learning} in which a system models its
input, makes a decision based on what it has gleaned, and takes an action that
affects the environment producing the inputs. In psychology and cognitive sciences
this problem goes under the label of the \emph{perception-action cycle}; in
neuroscience, under \emph{sensori-motor loop} \cite{Cuts11a,Gord11a}. The
problem transcends both traditional mathematical statistics and modern machine
learning, as their stance is that the data is not affected by what is
learned. And in this, it transcends the time-worn field of experiment design
\cite{Fedo72a,Atki01} and the more recent machine learning problem of
\emph{active learning} \cite{Mack03a}. Though related to computational
mechanics via Ref.  \cite{Stil07b}, the recent proposal \cite{Stil07c} for
\emph{interactive learning} is promising, but is not grounded in a systematic
approach to structure. It also transcends control theory, as the latter does
not address dynamically building models, but rather emphasizes how to
monitor and drive a given system into given states \cite{Astr08a}.

\ETs\ suggest a way to model the transduction of sensory input to a model and
from the model to a decision process that generates actions. Thus, the
computational mechanics representation of the perception-action cycle is two
cross-coupled \eTs---one's output is the other's input and vice versa.
Formulating the problem in this way promises progress in analyzing and in
quantifying structures in the space of models and strategies.

Physical applications of \eTs\ to analyze the information thermodynamics of
feedback control in Maxwellian Demons can be seen in Szilard's Engine
\cite{Boyd14b} and the Mandal-Jarzynski ratchet \cite{Mand012a,Boyd15a}.

\section{Conclusion}

Previously, computational mechanics focused on extracting and analyzing the
informational and structural properties of individual processes. The premise
being that once a process's \eM\ had been obtained, it can be studied in lieu
of other more cumbersome (or even inappropriate) process presentations. Since
the \eM\ is also unique and minimal for a process, its structure and quantities
were treated as being those of the underlying system that generated the
process. Strengthening this paradigm, virtually all of a process's
correlational, information, and structural quantities can now be calculated in
closed form using new methods of \eM\ spectral decomposition \cite{Crut13a}.

By way of explaining this paradigm, we opened with a review of stationary
processes and their \eMs, turning to broaden the setting to joint input-output
processes and communication channels. We then defined the (conditional)
causal equivalence relation, which led immediately to transducer causal
states and the \eT. A series of theorems then established their optimality. To
illustrate the range of possible transformations we considered a systematic set
of example channels that, in addition, provided an outline of a structural
classification scheme. As an aide in this, we gave a graphical way to view
structured transformations via causal-state wordmaps. With the framework
developed, one sees that the same level of computational mechanics' prior
analysis of individual processes can now be brought to bear on understanding
structural transformations between processes.

The foregoing, however, is simply the first in a series on the structural
analysis of mappings between processes. The next will address the
information-theoretic measures appropriate to joint input-output processes. We
then will turn to an analysis that blends the present results on the causal
architecture of structured transformations and the information-theoretic
measures, showing how the internal mechanism expressed in the \eT\ supports
information creation, loss, and manipulation during flow. From that point, the
sequels will branch out to address channel composition, decomposition, and
inversion.

Given the diversity of domains in which structured transformations (and their
understanding) appear to play a role, there looks to be a wide range of
applications.  In addition to addressing several of these applications, Sec.
\ref{sec:Discussion} outlined several future research directions. The \eT\
development leads, for example, to a number of questions that can now be
precisely posed and whose answers now seem in reach: How exactly do different
measuring devices change the \eM\ formed from measurements of a fixed system?
What precisely is lost in the measurement process, and how well can we model a
system using a given measuring device? When is it possible to see \emph{past} a
measuring device into a system, and how can we optimize our choice of measuring
device in practice?

\begin{acknowledgments}

We thank Cina Aghamohammadi, Alec Boyd, David Darmon, Chris Ellison, Ryan
James, John Mahoney, and Paul Riechers for helpful comments and the Santa Fe
Institute for its hospitality during visits. JPC is an SFI External Faculty
member. This material is based upon work supported by, or in part by, the U. S.
Army Research Laboratory and the U. S. Army Research Office under contract
numbers W911NF-12-1-0234, W911NF-13-1-0390, and W911NF-13-1-0340. NB was
partially supported by NSF VIGRE grant DMS0636297.

\end{acknowledgments}

\appendix

\section{Equivalence of Two $\epsilon$-Transducer Definitions}
\label{app:ETEquivalence}

We show the equivalence of two different \eT\ definitions, that presented in the
main paper and an earlier version requiring additional assumptions. Since the
\eT\ is determined by its causal equivalence relation, we show that the
respective equivalence relations are the same. The first is defined and
discussed at length above and duplicated here for convenience.

\begin{Def}
The \emph{causal equivalence relation} $\sim_\epsilon$ for channels is defined as follows:
\begin{equation}
\begin{aligned}
\jpast \sim_\epsilon \jpast^\prime \iff & \nonumber \\
  \Prob \big( \OFuture \big| \IFuture, & \JPast = \jpast \big) \\
    & \quad = \Prob \big(\OFuture \big| \IFuture, \JPast = \jpast^\prime \big)
\end{aligned}
\end{equation}
\end{Def}

The second definition is an implicit equivalence relation consisting of an
explicit equivalence relation, along with an additional unifilarity constraint
that, of course, is quite strong \cite{Crut94a,Shal01a}. Here, we make both requirements explicit.

\begin{Def}
The \emph{single-symbol unifilar equivalence relation} $\sim_\epsilon^1$ for channels is defined as follows:
\begin{align*}
& \jpast \sim_\epsilon^1 \jpast^\prime \iff \\
  & \quad (i) \quad \Prob \big( \OS_0 \big| \IS_0,  \JPast_0 = \jpast \big) \\
  & \quad\quad\quad = \Prob \big(\OS_0 \big| \IS_0, \JPast_0 = \jpast^\prime \big)
  \\
  & \quad \text{and:} \\
  & \quad (ii) \quad \Prob \big( \OS_1 \big| \IS_1, \JPast_0 = \jpast,
  \JS_0 = (a,b) \big) \\
  & \quad\quad\quad = \Prob \big(\OS_1 \big| \IS_1, \JPast_0 = \jpast^\prime ,
  \JS_0 = (a,b) \big)
  ~,
\end{align*}
for all $a \in \IA$ and $b \in \OA$ such that:
\begin{align*}
\Prob \big( \JS_0 = (a,b) | \jpast \big) > 0
\end{align*}
and:
\begin{align*}
\Prob \left( \JS_0 = (a,b) | \jpast^\prime \right) > 0
  ~.
\end{align*}
\end{Def}
The second requirement (ii) in the above definition requires that appending any
joint symbol to two single-symbol-equivalent pasts will also result in a pair
of pasts that are single-symbol-equivalent. This is unifiliarity. The second
part of the second requirement ensures that we are only considering
\emph{possible} joint symbols $(a,b)$---symbols that can follow $\jpast$ or
$\jpast^\prime$ with some nonzero probability.

\begin{Prop}
The single-symbol unifilar equivalence relation is identical to the causal equivalence relation.
\begin{proof}
Let $\jpast$ and $\jpast^\prime$ be two pasts, equivalent under $\sim_\epsilon^1$. This provides our base case for induction:
\begin{align}
\Prob \big( \OS_0 \big| \IS_0 , \jpast \big)
  = \Prob \big( \OS_0 \big| \IS_0 , \jpast^\prime \big)
  ~.
\end{align}
Now, let's assume that $\jpast$ and $\jpast^\prime$ are equivalent for
length-$L-1$ future morphs:
\begin{align}
\Prob \big( \OS_{0:L} \big| \IS_{0:L}, \jpast \big)
  = \Prob \big( \OS_{0:L} \big| \IS_{0:L}, \jpast^\prime \big)
  ~.
\end{align} 
We need to show that $\jpast$ and $\jpast^\prime$ are equivalent for length-$L$ future morphs by using the unifilarity constraint.
Unifilarity requires that appending a joint symbol $(a,b)$ to both $\jpast$ and
$\jpast^\prime$ results in two new pasts also equivalent to each other for length-$L-1$ future morphs:
\begin{equation} \label{eq:LMinusOneMorphsab}
\begin{aligned}
\Prob \big( \OS_{1:L+1} & \big| \IS_{1:L+1}, \jpast(a,b) \big) \\
  & = \Prob \big( \OS_{1:L+1} \big| \IS_{1:L+1}, \jpast^\prime(a,b) \big).
\end{aligned}
\end{equation}

Since this must be true for \emph{any} joint symbol, we replace $(a,b)$ with
$\JS_0$ in Eq. (\ref{eq:LMinusOneMorphsab}), giving:
\begin{equation} \label{eq:LMinusOneMorphs}
\begin{aligned}
\Prob \big( \OS_{1:L+1} & \big| \IS_{1:L+1}, \jpast, \JS_0 \big) \\
  & = \Prob \big( \OS_{1:L+1} \big| \IS_{1:L+1}, \jpast^\prime, \JS_0 \big) \\ 
\iff& \\
\Prob \big( \OS_{1:L+1} & \big| \IS_{1:L+1}, \jpast, \IS_0, \OS_0 \big) \\
  & = \Prob \big( \OS_{1:L+1} \big| \IS_{1:L+1}, \jpast^\prime, \IS_0, \OS_0 \big)
  ~.
\end{aligned}
\end{equation}
To arrive at our result, we need to multiply the left side of Eq. (\ref{eq:LMinusOneMorphs}) by $\Prob \big(
\OS_0 \big| \IS_{1:L+1} , \jpast, \IS_0 \big)$ and the right side by $\Prob
\big(\OS_0 \big| \IS_{1:L+1} , \jpast^\prime , \IS_0 \big)$, which we can do
when these quantities are equal. Since our channel is \emph{causal},
$\IS_{1:L+1}$ has no effect on $\OS_0$ when we condition on the infinite joint
past and present input symbol. The two quantities, $\Prob \big(
\OS_0 \big| \IS_{1:L+1} , \jpast, \IS_0 \big)$ and $\Prob
\big(\OS_0 \big| \IS_{1:L+1} , \jpast^\prime , \IS_0 \big)$, therefore reduce to $\Prob
\big( \OS_0 \big| \jpast, \IS_0 \big)$ and $\Prob \big(\OS_0 \big|
\jpast^\prime , \IS_0 \big)$, respectively. But these are equal by the single-symbol
unifilar equivalence relation---the base for induction.  Multiplying
each side of Eq. (\ref{eq:LMinusOneMorphs}) by these two terms (in their original
form) gives:
\begin{align*}
\Prob \big( \OS_{1:L+1} & \big| \IS_{1:L+1}, \jpast, \IS_0, \OS_0 \big) \\
  & \quad\quad\quad\quad\quad
  \cdot \Prob \big( \OS_0 \big| \IS_{1:L+1} , \jpast, \IS_0 \big) \\
  & = \Prob \big( \OS_{1:L+1} \big| \IS_{1:L+1}, \jpast^\prime, \IS_0,
  \OS_0 \big) \\
  & \quad\quad\quad\quad\quad
  \cdot \Prob \big( \OS_0 \big| \IS_{1:L+1} , \jpast^\prime, \IS_0 \big) \\
  & \iff \\
\Prob \big( \OS_{1:L+1}, \OS_0 & \big| \IS_{1:L+1}, \jpast, \IS_0 \big) \\
  & = \Prob \big( \OS_{1:L+1}, \OS_0 \big| \IS_{1:L+1}, \jpast^\prime, \IS_0
  \big) \\
  & \iff \\
\Prob \big( \OS_{0:L+1} & \big| \IS_{0:L+1}, \jpast \big)
  = \Prob \big( \OS_{0:L+1} \big| \IS_{0:L+1},\jpast^\prime \big)
  ~.
\end{align*}
The two pasts are therefore equivalent for length-$L$ future morphs. By
induction, the two pasts are equivalent for arbitrarily long future morphs. 
\end{proof}
\end{Prop}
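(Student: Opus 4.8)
The plan is to show that the two relations coincide as partitions of the joint pasts, i.e.\ to prove that $\jpast \sim_\epsilon \jpast^\prime$ if and only if $\jpast \sim_\epsilon^1 \jpast^\prime$, treating each implication separately. The reverse implication ($\sim_\epsilon^1 \Rightarrow \sim_\epsilon$) is the substantive one and is what the induction above accomplishes; the forward implication ($\sim_\epsilon \Rightarrow \sim_\epsilon^1$) is the routine direction, and it leans on a result already established in the main text.

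For the forward direction I would suppose $\jpast \sim_\epsilon \jpast^\prime$ and recover conditions (i) and (ii). Condition (i) follows by marginalizing the equal infinite morphs $\Prob(\OFuture | \IFuture, \jpast) = \Prob(\OFuture | \IFuture, \jpast^\prime)$ down to the single symbol $\OS_0$ and invoking causality to discard the conditioning on future inputs, leaving $\Prob(\OS_0 | \IS_0, \jpast) = \Prob(\OS_0 | \IS_0, \jpast^\prime)$. Condition (ii) is where I would invoke the already-established Joint Unifilarity Proposition: for any pair $(a,b)$ that is \emph{possible} after both pasts, appending it to two causally equivalent pasts again yields causally equivalent pasts $\jpast(a,b)$ and $\jpast^\prime(a,b)$. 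Marginalizing those equal morphs down to the next output symbol then gives, after reindexing, exactly the single-symbol agreement demanded by condition (ii). Note that this is precisely the step that makes unifilarity automatic rather than an extra assumption.

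For the reverse direction I would induct on the future-morph length $L$, with condition (i) as the base case $\Prob(\OS_0 | \IS_0, \jpast) = \Prob(\OS_0 | \IS_0, \jpast^\prime)$. In the inductive step I would factor the length-$(L+1)$ morph via the chain rule into a first-symbol factor $\Prob(\OS_0 | \IS_{0:L+1}, \jpast)$ and a remainder $\Prob(\OS_{1:L+1} | \OS_0, \IS_{0:L+1}, \jpast)$. The remainder is handled by rewriting the conditioning on $\IS_0,\OS_0$ as conditioning on an appended joint symbol $\JS_0 = (a,b)$ and applying the inductive hypothesis to the extended past $\jpast(a,b)$---this is exactly where condition (ii) (unifilarity) enters. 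The first-symbol factor is reduced by causality, since $\IS_{1:L+1}$ cannot affect $\OS_0$ once the joint past and $\IS_0$ are fixed, to $\Prob(\OS_0 | \IS_0, \jpast)$, which matches its primed counterpart by the base case. Multiplying the matched factors reassembles the length-$(L+1)$ morph equality, and sending $L \to \infty$ recovers equality of the full morphs, i.e.\ $\jpast \sim_\epsilon \jpast^\prime$.

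The main obstacle I anticipate is the bookkeeping in this inductive step: one must justify multiplying the two sides by their first-symbol factors, which requires both verifying that those factors are equal (via causality together with the base case) and excluding the degenerate case in which the conditioning event $\JS_0 = (a,b)$ has zero probability after one of the pasts---precisely the situation the positivity clause in the definition of $\sim_\epsilon^1$ is designed to sidestep. Once the zero-probability pairs are set aside and causality is used to strip the spurious future-input dependence, the factorization goes through and the induction closes.
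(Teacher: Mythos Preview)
Your proposal is correct and, for the substantive direction $\sim_\epsilon^1 \Rightarrow \sim_\epsilon$, follows exactly the paper's argument: induct on morph length, use condition (ii) to pass the inductive hypothesis to the extended pasts $\jpast(a,b)$, use causality to strip $\IS_{1:L+1}$ from the first-symbol factor, and reassemble via the chain rule. Your treatment is in fact more complete than the paper's, which writes out only this direction and leaves $\sim_\epsilon \Rightarrow \sim_\epsilon^1$ tacit; your plan to obtain condition (i) by marginalization plus causality and condition (ii) by invoking the Joint Unifilarity Proposition is the natural way to fill that gap, and your explicit attention to the zero-probability clause is likewise something the paper glosses over.
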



\end{document}